\newcolumntype{x}[1]{>{\centering\arraybackslash}p{#1}}
\newcommand{\ra}[1]{\renewcommand{\arraystretch}{#1}}
\newtheorem{thm}{Theorem}
\newtheorem*{thm*}{Theorem}
\newcommand{\setthmtag}[1]{
  \let\oldthethm\thethm
  \renewcommand{\thethm}{#1}
  \g@addto@macro\endthm{
    \addtocounter{thm}{-1}
    \global\let\thethm\oldthethm}
  }
\newtheorem{prop}[thm]{Proposition}
\newtheorem*{prop*}{Proposition}
\newtheorem{lemma}[thm]{Lemma}
\newtheorem*{lemma*}{Lemma}
\newtheorem{cor}[thm]{Corollary}
\newtheorem*{cor*}{Corollary}
\newtheorem*{cj*}{Conjecture}
\newtheorem{Def}[thm]{Definition}
\newtheorem*{Def*}{Definition}
\theoremstyle{definition}
\newtheorem{rem}{Remark}
\newtheorem*{note}{Note}
\newcommand{\bb}{\begin{equation}}
\newcommand{\bbb}{\begin{equation*}}
\newcommand{\ee}{\end{equation}}
\newcommand{\eee}{\end{equation*}}
\newcommand*{\coloneqq}{\mathrel{\vcenter{\baselineskip0.5ex \lineskiplimit0pt \hbox{\scriptsize.}\hbox{\scriptsize.}}} =}
\newcommand*{\eqqcolon}{= \mathrel{\vcenter{\baselineskip0.5ex \lineskiplimit0pt \hbox{\scriptsize.}\hbox{\scriptsize.}}}}
\newcommand\floor[1]{\left\lfloor#1\right\rfloor}
\newcommand\ceil[1]{\left\lceil#1\right\rceil}
\newcommand{\texteq}[1]{\stackrel{\mathclap{\scriptsize \mbox{#1}}}{=}}
\newcommand{\textleq}[1]{\stackrel{\mathclap{\scriptsize \mbox{#1}}}{\leq}}
\newcommand{\textgeq}[1]{\stackrel{\mathclap{\scriptsize \mbox{#1}}}{\geq}}
\newcommand{\ketbra}[1]{\ket{#1}\!\!\bra{#1}}
\newcommand{\ketbraa}[2]{\ket{#1}\!\!\bra{#2}}
\newcommand{\sumno}{\sum\nolimits}
\newcommand{\id}{\mathds{1}}
\newcommand{\R}{\mathds{R}}
\newcommand{\N}{\mathds{N}}
\newcommand{\C}{\mathds{C}}
\newcommand{\PIO}{\mathrm{PIO}}
\newcommand{\SIO}{\mathrm{SIO}}
\newcommand{\IO}{\mathrm{IO}}
\newcommand{\DIO}{\mathrm{DIO}}
\newcommand{\MIO}{\mathrm{MIO}}
\DeclareMathOperator{\Tr}{Tr}
\DeclareMathOperator{\rk}{rk}
\DeclareMathOperator{\co}{conv}
\DeclareMathOperator{\Span}{span}
\DeclareMathAlphabet{\pazocal}{OMS}{zplm}{m}{n}
\DeclareMathOperator{\pr}{Pr}
\DeclareMathOperator{\supp}{supp}
\DeclareMathOperator{\Id}{id}
\newcommand{\lsmatrix}{\left(\begin{smallmatrix}}
\newcommand{\rsmatrix}{\end{smallmatrix}\right)}
\newcommand*\rel@kern[1]{\kern#1\dimexpr\macc@kerna}
\newcommand*\widebar[1]{%
  \begingroup
  \def\mathaccent##1##2{%
    \rel@kern{0.8}%
    \overline{\rel@kern{-0.8}\macc@nucleus\rel@kern{0.2}}%
    \rel@kern{-0.2}%
  }%
  \macc@depth\@ne
  \let\math@bgroup\@empty \let\math@egroup\macc@set@skewchar
  \mathsurround\z@ \frozen@everymath{\mathgroup\macc@group\relax}%
  \macc@set@skewchar\relax
  \let\mathaccentV\macc@nested@a
  \macc@nested@a\relax111{#1}%
  \endgroup
}
\newcommand{\M}{\pazocal{U}}
\begin{document}


\title{Completing the Grand Tour of asymptotic quantum coherence manipulation}


\author{Ludovico~Lami %
\thanks{Ludovico Lami is with the School of Mathematical Sciences and Centre for the Mathematics and Theoretical Physics of Quantum Non-Equilibrium Systems, University of Nottingham, University Park, Nottingham NG7 2RD, United Kingdom. Email: ludovico.lami@gmail.com}}



\maketitle

\begin{abstract}
We compute on all quantum states several measures that characterise asymptotic quantum coherence manipulation under restricted classes of operations. We focus on the distillable coherence, i.e.\ the maximum rate of production of approximate pure bits of coherence starting from independent copies of an input state $\boldsymbol{\rho}$, and on the coherence cost, i.e.\ the minimum rate of consumption of pure coherence bits that is needed to generate many copies of $\boldsymbol{\rho}$ with vanishing error. We obtain the first closed-form expression for the distillable coherence under strictly incoherent operations (SIO), proving that it coincides with that obtained via physically incoherent operations (PIO). This shows that SIO and PIO are equally weak at distilling coherence, sheds light on the recently discovered phenomenon of generic bound coherence, and provides us with an explicit optimal distillation protocol that is amenable to practical implementations. We give a single-letter formula for the coherence cost under PIO, showing that it is finite on a set of states with nonzero volume. Since PIO can be realised in a laboratory with incoherent ancillae, unitaries, and measurements, our result puts fundamental limitations on coherence manipulation in an experimentally relevant setting. We find examples of `abyssally bound' states with vanishing PIO distillable coherence yet infinite PIO coherence cost. Our findings complete the picture of asymptotic coherence manipulation under all the main classes of incoherent operations.
\end{abstract}

%
\IEEEpeerreviewmaketitle

\section{Introduction}

\subsection{Resource theory of quantum coherence }

Coherent superposition of states can be regarded as the fundamental quantum feature from which all the other wonders of quantum theory, such as entanglement and in turn nonlocality, descend. In spite of its central role for theory and applications, a rigorous framework to study the manipulation of quantum coherence -- what is called a resource theory~\cite{Brandao-Gour, RT-review} -- has been identified only recently~\cite{Aberg2006, Baumgraz2014, Levi2014, Winter2016, coherence-review}. Two main ingredients are needed in order to define a resource theory: free states and free operations. Once these objects have been identified, questions of information-theoretical nature arise naturally: how efficiently can we convert a given state into standard units of resource by using free operations? Conversely, how many units of resource do we need to invest to prepare a given target state with free operations? In the history of quantum information theory, operational questions of this sort have been asked first in the context of entanglement theory~\cite{Bennett-distillation, Horodecki-review}, from which the terminology is borrowed: the first task is traditionally referred to as \emph{distillation}, the second as \emph{formation} or \emph{dilution}. Following the glorious tradition initiated by Shannon~\cite{Shannon}, we will look at the asymptotic regime only, with the motivation that it captures ultimate limitations on experimental capabilities.

In coherence theory, free states are represented by density matrices that are diagonal in a fixed orthonormal basis $\{\ket{i}\}_i$. To fix ideas, we can think of the index $i$ as labelling the different arms of an interferometer where a single photon has been injected~\cite{Biswas2017}. Free states correspond to the photon being probabilistically localised in a single arm. In spite of this simplicity at the level of states, identifying the `correct' set of free operations has instead been subject of debate. Since free operations must preserve the set of free states, one can define \emph{maximal incoherent operations} (MIO) as those quantum channels $\Lambda$ such that $\Lambda(\delta)$ is diagonal for any diagonal density matrix $\delta$~\cite{Aberg2004, Aberg2006}. However, quantum channels can be represented in Kraus form as $\Lambda(\cdot)= \sum_\alpha K_\alpha (\cdot) K_\alpha^\dag$, and in the physical interpretation of said channel as an instrument each $\alpha$ corresponds to a different measurement outcome. In light of this, we may instead demand that each Kraus operator be incoherent, i.e.\ such that $K_\alpha \ket{i} \propto \ket{j_{\alpha, i}}$, where we stressed that the output label $j$ can depend on both $i$ and $\alpha$. \emph{Incoherent operations} (IO) are those that admit a Kraus representation with this property~\cite{Baumgraz2014}. To define \emph{strictly incoherent operations} (SIO) one requires instead that both $K_\alpha$ and $K_\alpha^\dag$ be incoherent. It is easy to verify that every SIO $\Lambda$ commutes with the dephasing operator $\Delta$ that erases all off-diagonal elements, in formula $[\Lambda, \Delta]=0$ as superoperators. When taken on its own, this latter identity defines the class of \emph{dephasing-covariant incoherent operations} (DIO)~\cite{Chitambar-PIO, Chitambar2016, Marvian2016}. While all SIO are also DIO, these form a strictly larger set which is also different from that of IO. Remarkably, SIO admit an operational description as concatenations of elementary processes that involve the system plus some ancillae and are incoherent on the former while possibly coherent on the latter~\cite{Yadin2016}. The more restricted paradigm of \emph{physically incoherent operations} (PIO) requires instead that all such operations be globally incoherent~\cite{Chitambar-PIO, Chitambar2016}. Other classes of operations that have been defined in the literature~\cite{Gour2008, deVicente2016} will not be considered further here.

The foundations of an operational theory of coherence were laid by Winter and Yang~\cite{Winter2016}, who showed that the MIO/DIO/IO distillable coherence is given by the \emph{relative entropy of coherence}~\cite{Aberg2006, Baumgraz2014} for all states, and also proved that the IO/SIO coherence cost coincides with the \emph{coherence of formation}~\cite{Aberg2006}. Previously, the problems of distillation and formation had been fully resolved for pure states only~\cite{Xiao2015}, in which case all the above measures reduce to the \emph{entropy of coherence}. These results imply that while the resource theory of coherence is not reversible under IO, these operations leave no bound coherence, i.e.\ some coherence can always be distilled from states with nonzero cost. This is in stark contrast with the existence of bound states in other resource theories, most notably that of entanglement under either local operations and classical communication (LOCC)~\cite{HorodeckiBound}. Later, Chitambar~\cite{Chitambar-reversible} showed that coherence becomes a reversible resource theory when the free operations are taken to be MIO and even DIO. This significant strengthening of the general result in~\cite{Brandao-Gour} entails that the same relative entropy of coherence characterises the MIO/DIO coherence cost.

As it appears from Table~\ref{table operations/rates}, this wealth of results leaves however three main questions open, namely the evaluation of the SIO and PIO distillable coherence, and that of the PIO coherence cost. For pure states, it is known that the first two quantities coincide once again with the entropy of coherence~\cite{Xiao2015, Chitambar-PIO, Chitambar2016}, while in~\cite{Chitambar-PIO, Chitambar2016} it is also shown that all pure states that are not maximally coherent cannot be prepared via PIO starting from coherence bits, i.e.\ their PIO coherence cost is infinite. 
Notable progress on the problem of asymptotic SIO distillation of mixed states was presented in~\cite{Zhao2018}, where an SIO bound coherent state was constructed for the first time. Such state has provably zero SIO distillable coherence yet nonzero cost. While this is arguably surprising, what is even more surprising is that, unlike in entanglement theory, SIO bound entanglement is a \emph{generic} phenomenon, meaning that almost all states, in a measure-theoretic sense, are SIO (and hence PIO) bound coherent~\cite{bound-coherence}.
A full understanding of SIO/PIO distillable coherence and of PIO coherence cost has however remained out of reach so far.

\subsection{Our contributions}

In this paper we tackle and solve the three aforementioned problems (Table~\ref{table operations/rates}). First, we give an easily computable, analytical formula for the SIO/PIO distillable coherence that we dub \emph{quintessential coherence}. This substantiates the claims of~\cite{bound-coherence}, solves the conjecture that was proposed in the first version of that manuscript, and answers a question raised already in~\cite{Winter2016, Chitambar2016, Yadin2016}.
Our result show that SIO and PIO, while behaving radically differently at the single-copy level~\cite{Chitambar2016}, possess the same distillation power in the asymptotic regime. This is especially remarkable given that several preliminary results seemed to rather indicate a substantial equivalence between SIO and IO. For instance, SIO are as powerful as IO in pure-to-pure state transformations~\cite{Du2015, Torun2019} and in (asymptotic) coherence dilution~\cite{Winter2016}; they perform no worse than DIO at probabilistic distillation from pure states~\cite{Fang2018}. Sometimes even the largest set of MIO does not give any advantage over SIO: this is the case e.g.\ for one-shot distillation from pure state~\cite{Regula2018} and for assisted distillation~\cite{bartosz-myself-alex}. All this evidence has indeed led some authors to speculate that the SIO and IO distillable coherence may be equal~\cite{Winter2016}, while some others advocated the importance and centrality of SIO in light of their strong operational interpretation~\cite{Yadin2016}. In this work we settle all these questions by quantifying exactly the SIO distillable coherence on all states, and showing that -- somewhat unfortunately -- it coincides with that obtainable with the much more restricted set of PIO.
While the direct part of our statement is proved in a relatively intuitive way, the main technical challenge lies in establishing the converse. To achieve this, we introduce a whole family of coherence monotones tailored to SIO, connect them to the smooth conditional max-entropies of some \emph{classical} random variables derived from the underlying quantum state, and conclude by applying a tweaked version of the asymptotic equipartition property.

Our second contribution is a single-letter formula for the PIO coherence cost that resembles the coherence of formation, except that the only allowed states in the convex decomposition are uniformly coherent on a subset of indices: we dub such a quantity the \emph{uniform coherence of formation}. Our findings demonstrate that the set of states with finite PIO coherence cost has nonzero volume, as it contains a whole ball centred around the maximally mixed state. This somehow counterintuitive result is in spite of the fact that all pure states but the maximally coherent ones have infinite PIO coherence cost~\cite{Chitambar-PIO}. Again, the crux of the argument is the proof of the converse. In fact, we are unable to decide whether the uniform coherence of formation obeys some form of \emph{asymptotic continuity}, which is a notoriously instrumental property for establishing a converse statement. We circumvent this difficulty by proving and exploiting its superadditivity and lower semicontinuity instead; to the best of our knowledge, this proof strategy is relatively original and may be of independent interest.

\begin{table}[h] \centering
\ra{1.5}
\begin{tabular}{lcc} \toprule
Operations & Distillable coherence & Coherence cost \\
\midrule
MIO & $C_r$ & $C_r$ \\
DIO & $C_r$ & $C_r$ \\
IO & $C_r$ & $C_f$ \\
SIO & \cellcolor{gray!15} $Q$ & $C_f$ \\
PIO & \cellcolor{gray!15} $Q$ & \cellcolor{gray!15} $C_f^\M$ \\
\bottomrule
\end{tabular}
\vspace{1.5ex}
\caption{The distillable coherence and coherence cost under the main classes of free operations. 
See Eq.~\eqref{relative entropy of coherence},~\eqref{coherence of formation},~\eqref{Q thm}, and~\eqref{PIO cost} for definitions. The MIO/DIO/IO distillable coherence and the IO/SIO coherence cost were computed in~\cite{Winter2016} (see also~\cite{Regula2018, Chitambar-reversible}), while the MIO/DIO coherence cost was first determined in~\cite{Chitambar-reversible}. 
Our contribution is to fill in the last three entries of this table, highlighted in grey. See Theorems~\ref{solution thm} and~\ref{C c PIO thm}, proven in Sections~\ref{sec converse} and~\ref{sec PIO cost}.} 
\label{table operations/rates}
\end{table}

The rest of the paper is structured as follows. In Section~\ref{sec main results} we introduce the reader to the theory of quantum coherence and state our main results formally. Section~\ref{sec direct} describes the SIO distillation protocol found in~\cite{bound-coherence}. In Section~\ref{sec monotones} we introduce a new family of SIO monotones, which are subsequently used in Section~\ref{sec converse} to establish the optimality of said distillation protocol, thus calculating the SIO distillable coherence on all states. In the subsequent Section~\ref{sec PIO cost} we tackle and solve the question of computing the PIO coherence cost. Finally, in Section~\ref{sec conclusions} we discuss our results, draw our conclusions and present some open problems and directions for future research.

\section{Main results} \label{sec main results}

\subsection{Basic definitions}

We consider a $d$-dimensional quantum system whose Hilbert space $\C^d$ is spanned by some preferred basis $\{\ket{i}\}_{i=1,\ldots, d}$, referred to as the \emph{incoherent} (or \emph{computational}) \emph{basis}. As already mentioned, we can imagine that the state $\ket{i}$ represents a single photon localised in the $i$-th arm of an interferometer.
A generic \emph{incoherent state} will be represented as a diagonal density matrix $\delta =\sum_{i=1}^d \delta_i \ketbra{i}$, where the coefficients $\delta_i\geq 0$ form a probability distribution. A pure state of the form
\bb
\ket{\Psi} = \frac{1}{\sqrt{k}} \sum_{j\in J} e^{i \theta_j} \ket{j}\, ,
\label{max coh}
\ee
where $J\subseteq [d]\coloneqq \{1,\ldots, d\}$ is a subset of indices of cardinality $|J|=k$ and $\theta_j\in\R$ are phases, is called a \emph{uniformly coherent state} of size $k$. Such states are particular examples of the more general class of $k$-coherent states studied in~\cite{Ringbauer2017}. Customarily, a uniformly coherent state of size $d$ is called a \emph{maximally coherent state}. In what follows, we will denote by $\M_k$ the sets of uniformly coherent states of size $k$, and we will also set
\bb
\M \coloneqq \bigcup_{k=1}^d \M_k\, ,
\label{M}
\ee
pure states in $\M$ being generically referred to as uniformly coherent.

A maximally coherent state of a single qubit is usually called a \emph{coherence bit}. For the canonical choice of all phases equal to zero, we can write it as $\ket{\Psi_2} \coloneqq \frac{1}{\sqrt2}\left( \ket{0}+\ket{1} \right)$. In what follows, for a generic pure state $\ket{\phi}$ we will denote the corresponding density matrix as $\phi \coloneqq \ketbra{\phi}$.

A $d\times d'$ matrix $K$ is said to be an \emph{incoherent operator} if $K\ket{i}$ is proportional to some vector of the incoherent basis on $\C^{d'}$, i.e.\ $K\ket{i} = c_{i} \ket{f(i)}$ for some function $f:[d]\to [d']$ and some complex-valued function $c:[d]\to \C$. With this definition, it is elementary to show that \emph{incoherent unitaries} are those that can be written as a product of a permutation and a diagonal matrix, in formula $U=\sum_j e^{i\theta_j} \ketbraa{\pi(j)}{j}$. Similarly, \emph{incoherent projectors} are of the form $\Pi_I \coloneqq \sum_{i\in I}\ketbra{i}$ for some subset of indices $I\subseteq [d]$. We remind the reader that a generic quantum channel\footnote{A linear map $\Phi:M_n(\C) \to M_m(\C)$ between the algebras of complex matrices of size $n$ (input) and $m$ (output) is called a \emph{quantum channel} if it is: (i)~completely positive, meaning that for all integers $k$ and all positive semidefinite matrices $A \geq 0$ of size $nk$ it holds that $(\Phi\otimes \Id_k)(A)\geq 0$; and (ii)~trace-preserving, i.e.\ such that $\Tr \Phi(X) = \Tr X$ for all $X\in M_n(\C)$.} can be written in \emph{Kraus form} as
\bb
\Lambda(\cdot) = \sum_\alpha K_\alpha (\cdot) K_\alpha^\dag\, ,
\label{Kraus}
\ee
where we assume without loss of generality that the range of the index $\alpha$ is finite. Here we will mainly care about two classes of incoherent operations, that we set out to define now. 
After stating the mathematically precise definitions, we will elaborate on the physical and operational motivations behind it.
As explained in~\cite[SM,~Eq.~(S9)]{bound-coherence}, up to `lifting and compressing' input and output by means of incoherent isometries and incoherent projectors, we can without loss of generality look at the case where input and output dimension coincide.

\begin{Def}[\cite{Winter2016}]
A \emph{strictly incoherent operation} (SIO) is a quantum channel that admits a Kraus representation as in Eq.~\eqref{Kraus}, where for each $\alpha$ both $K_\alpha$ and $K_\alpha^\dag$ are incoherent.
\end{Def}

\begin{rem}
In other words, an SIO acts as
\bb
\Lambda(\cdot) = \sum\nolimits_\alpha U_{\pi_\alpha} D_\alpha (\cdot) D_\alpha^* U_{\pi_\alpha}^\intercal\, ,
\label{SIO Kraus}
\ee
where the $\pi_\alpha\in S_d$ are permutations, $U_{\pi_\alpha}\coloneqq \sum_{i=1}^d \ketbraa{\pi_\alpha(i)}{i}$ are the unitaries that implement them, and $D_\alpha\coloneqq \sum_{i=1}^d d_\alpha(i) \ketbra{i}$ are diagonal matrices. Observe that $\Lambda$ is trace-preserving iff $\sum_\alpha |d_\alpha(i)|^2 = 1$ for all $1\leq i\leq d$.
\end{rem}

\begin{Def}[\cite{Chitambar-PIO}]
A \emph{physically incoherent operation} (PIO) is a quantum channel $\Lambda$ that acts as
\bb
\Lambda(\cdot) = \sum_{\alpha, \beta} p_\alpha U_{\alpha,\beta} \Pi_{\beta|\alpha} (\cdot) \, \Pi_{\beta|\alpha} U_{\alpha,\beta}^\dag\, ,
\label{PIO Kraus}
\ee
where $p$ is an arbitrary probability distribution, $U_{\alpha,\beta}$ are incoherent unitaries, and for all $\alpha$ the $\Pi_{\beta|\alpha}$ are incoherent projectors forming a complete measurement, meaning that $\sum_\beta \Pi_{\beta|\alpha} = \id$ is the identity matrix. 
\end{Def}

\begin{rem} \label{elementary PIO rem}
In other words, a PIO can be thought of a convex combination of `elementary PIO' each acting as 
\bb
\Lambda(\cdot) = \sum_\beta U_\beta \Pi_\beta (\cdot) \Pi_\beta U_\beta^\dag\, ,
\label{elementary PIO Kraus}
\ee
where again the $U_\beta$ are incoherent unitaries and the $\Pi_\beta$ form a complete set of incoherent projectors.
\end{rem}

\begin{rem}
We should bear in mind that all PIO are SIO, but the converse need not be true~\cite{Chitambar-PIO}.
\end{rem}

SIO and PIO are meaningful because, unlike MIO, DIO, or IO, they admit explicit implementations in terms of clearly identifiable elementary operations. In the case of SIO, these elementary operations consist in~\cite{Yadin2016}: (i) appending ancillae prepared in incoherent states; (ii) performing unitary operations on the ancilla controlled by the incoherent basis of the system; (iii) making arbitrary destructive measurements on the ancilla; and (iv) applying any incoherent unitary on the system. Instead, to implement an arbitrary PIO no ancilla is necessary, and we only need~\cite{Chitambar-PIO, Chitambar2016} (iv) incoherent unitaries and (v) incoherent measurements, to wit, measurements represented\footnote{In the positive-operator valued measure (POVM) formalism, a quantum measurement with outcomes indexed by a label $i$ is represented by a (finite) collection of positive semidefinite operators $\{E_i\}_i$ that add up to the identity.} by incoherent projectors. It is always assumed that classical randomness is available for free.

To fix ideas, it may be helpful to review how some of the above operations may be implemented within our exemplary scheme involving a single photon in a multi-arm interferometer~\cite{Biswas2017}. If the wavelength of our photon is known with precision, incoherent unitaries in (iv) can be applied by permuting and changing the lengths of the different arms, e.g.\ by means of mirrors. Incoherent measurements described in (v) are also easy to implement, at least in principle: it suffices to deflect different groups of arms by means of mirrors; the classical outcome is retrieved by looking at which mirror has received a kick that is compatible with the deflection in the photon's momentum. To model (iii) one can think of a setting where the experimentalist can use a one-way quantum channel to send the ancilla to a distant facility where much more powerful equipment allows for arbitrary quantum operations to be performed. Since quantum communication is only one-way but classical communication is unrestricted, any classical output produced by the equipment can be disclosed to the experimentalist, allowing them to effectively implement arbitrary destructive measurements. While controlled unitaries as in (ii) may be regarded as the most experimentally challenging, the fact that they can be accessed in the SIO framework constitutes a precise and testable assumption. We stress once more that such assumptions do not seem to be identifiable in the MIO, DIO, or IO settings.

Several useful measures of coherence have been identified and studied so far. We limit ourselves to recalling the most significant in the context of asymptotic manipulation of coherence. The \emph{entropy of coherence} of a pure state $\phi=\ketbra{\phi}$ is simply defined as
\bb
C(\phi) \coloneqq S\left( \Delta(\phi)\right) ,
\label{entropy of coherence}
\ee
where $\Delta(\cdot)\coloneqq \sum_i \ketbra{i}(\cdot) \ketbra{i}$ is the dephasing map, and $S(\rho)\coloneqq -\Tr[\rho\log \rho]$ stands for the von Neumann entropy\footnote{Unless otherwise specified, in this paper logarithms are always assumed to be to base $2$.}. Such a measure can be extended to all mixed states via a convex roof construction. The resulting quantity is the \emph{coherence of formation}, defined as~\cite{Aberg2006}
\bb
C_f(\rho) \coloneqq \inf_{\sum_i p_i \psi_i=\rho} \sum_i p_i S\left( \Delta(\psi_i) \right) ,
\label{coherence of formation}
\ee
where the infimum is taken over all convex decompositions of $\rho$ into pure states $\psi_i$. Taking another viewpoint, one could try to quantify the coherence content of a quantum state by looking at its distance from the set of incoherent states. Using as metric the relative entropy $D(\rho\|\sigma)\coloneqq \Tr[\rho (\log \rho - \log \sigma)]$ yields the \emph{relative entropy of coherence}, given by~\cite{Aberg2006, Baumgraz2014}
\bb
C_r(\rho) \coloneqq \min_{\delta=\Delta(\delta)} D(\rho\|\delta) = S\left( \Delta(\rho)\right) - S(\rho)\, .
\label{relative entropy of coherence}
\ee
We will see in the next section how these measures play a role in characterising formation and distillation processes under some relatively large classes of incoherent operations. At the same time, we will learn how to construct alternative measures that capture the essential quantitative features of coherence manipulation under the smallest sets of free operations.

\subsection{SIO/PIO distillable coherence} \label{subsec SIO/PIO distillable}

The process of coherence distillation consists in extracting coherence bits $\Psi_2$ starting from a large supply of identical copies of a state $\rho$. Following the information-theoretical standard approach of looking at the asymptotic regime, one can define the \emph{distillable coherence} under a set of operations $\pazocal{O}$ as the maximal rate at which this process can be carried out with vanishing error: 
\bb
C_{d,\pazocal{O}}(\rho) \coloneqq \sup\left\{r: \lim_{n\to\infty} \inf_{\Lambda\in \pazocal{O}} \left\| \Lambda( \rho^{\otimes n} )\! -\! \Psi_{2^{\floor{rn}}} \right\|_1 = 0 \right\} .
\label{distillable}
\ee 
For IO/DIO/MIO it is known that~\cite{Winter2016, Regula2018, Chitambar-reversible}
\bb
C_{d,\IO}(\rho) = C_{d,\DIO}(\rho) = C_{d, \MIO}(\rho) = C_r(\rho)
\label{results VV}
\ee
for all states $\rho$, which gives an operational interpretation to the relative entropy of coherence defined in Eq.~\eqref{relative entropy of coherence}. Our first result allows us to evaluate the distillable coherence on the smaller classes SIO/PIO, solving a problem mentioned already in~\cite{Winter2016, Chitambar2016, Yadin2016} and filling in the two missing entries in the first column of Table~\ref{table operations/rates}.

\begin{thm} \label{solution thm}
For all states $\rho$, the distillable coherence under SIO/PIO satisfies
\bb
C_{d, \SIO}(\rho) = C_{d, \PIO}(\rho) = Q(\rho)\, ,
\label{solution}
\ee
where the \emph{quintessential coherence} is defined as
\bb
\begin{aligned}
Q(\rho) &\coloneqq S\left( \Delta(\rho) \right) - S\left( \widebar{\rho} \right) , \\[.5ex]
\widebar{\rho} &\coloneqq \sum_{(i,j):\, |\rho_{ij}| = \sqrt{\rho_{ii}\rho_{jj}}} \rho_{ij}\, \ketbraa{i}{j}\, .
\end{aligned}
\label{Q thm}
\ee
\end{thm}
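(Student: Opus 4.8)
The plan is to establish the two inequalities $C_{d,\SIO}(\rho)\geq Q(\rho)$ and $C_{d,\PIO}(\rho)\leq Q(\rho)$ separately; since every PIO is an SIO we have $C_{d,\PIO}\leq C_{d,\SIO}$ automatically, so these two bounds sandwich both quantities and force the chain of equalities. The achievability side (lower bound) should be the easier of the two: the matrix $\widebar{\rho}$ collects precisely those off-diagonal entries $\rho_{ij}$ that are \enquote{maximally correlated} in the sense $|\rho_{ij}|=\sqrt{\rho_{ii}\rho_{jj}}$, and one checks that the relation $i\sim j \iff |\rho_{ij}|=\sqrt{\rho_{ii}\rho_{jj}}$ is an equivalence relation on the support of $\Delta(\rho)$ (transitivity is the only nontrivial point and follows from the Cauchy--Schwarz equality condition applied to the vectors $\rho^{1/2}\ket{i}$). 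Hence $\widebar{\rho}$ is block-diagonal, each block being rank one and proportional to a (subnormalised) uniformly coherent projector $\ketbra{\Psi_J}$ up to the diagonal weights. One then uses incoherent unitaries to sort the blocks and a suitable SIO (indeed PIO) to coherently merge each block $J$ into $\log|J|$ coherence bits on a sub-register, conditioned on the classical label of which block one landed in; the output rate equals $H(\text{block distribution})$ plus the average $\log|J|$, which is exactly $S(\Delta(\rho))-S(\widebar{\rho})$. This protocol is the one already constructed in~\cite{bound-coherence} and described in Section~\ref{sec direct}, so here I would simply invoke it.

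The hard part is the converse $C_{d,\PIO}(\rho)\leq Q(\rho)$, and more precisely the converse for the larger class SIO, i.e.\ $C_{d,\SIO}(\rho)\leq Q(\rho)$ (which then implies the PIO bound a fortiori in the wrong direction — so one must be careful: the logically efficient route is to prove $C_{d,\SIO}\leq Q$, giving the upper bound for \emph{both}). The strategy is to build a family of SIO monotones $M_t$ (the subject of Section~\ref{sec monotones}), each of which (i) does not increase under SIO, (ii) is subadditive or additive under tensor powers, and (iii) when evaluated on the target state $\Psi_{2^m}$ grows linearly in $m$ with a computable slope, while (iv) when evaluated on $\rho^{\otimes n}$ can be controlled from above by something like $n\,Q(\rho)+o(n)$. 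The key conceptual move, as flagged in the introduction, is to relate these monotones to smooth conditional max-entropies $H_{\max}^{\e}(X|Y)$ of \emph{classical} random variables extracted from $\rho$ — roughly, $X$ records which \enquote{coherent block} an index pair sits in and $Y$ records the ambient classical label — and then to invoke a (tweaked) asymptotic equipartition property to turn the single-shot max-entropy bound into the single-letter entropic expression $S(\Delta(\rho))-S(\widebar{\rho})$.

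Concretely the steps I would carry out are: (1) prove the equivalence-relation/block-structure lemma for $\widebar{\rho}$ and record $Q(\rho)=S(\Delta(\rho))-S(\widebar\rho)\geq 0$ with equality cases; (2) define the monotone(s) $M_t$ and verify SIO-monotonicity by a direct computation on the Kraus form~\eqref{SIO Kraus}, using that conjugation by $U_{\pi_\alpha}D_\alpha$ interacts well with the block structure and with $\Delta$; (3) establish the (super/sub)additivity of $M_t$ under $\otimes$; (4) compute $M_t(\Psi_{2^m})$ exactly; (5) upper-bound $M_t$ on an arbitrary state by a smooth conditional max-entropy of the associated classical pair $(X,Y)$, and in particular bound $M_t(\rho^{\otimes n})$ by $n$ times a conditional-entropy term plus sublinear corrections via the AEP; (6) combine: if $\Lambda_n(\rho^{\otimes n})\approx\Psi_{2^{\lfloor rn\rfloor}}$ under SIO, then monotonicity plus continuity of $M_t$ under the trace norm forces $rn\lesssim n\,Q(\rho)+o(n)$, hence $r\leq Q(\rho)$. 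The main obstacle I anticipate is step (5): getting the \emph{right} classical random variables so that their conditional max-entropy collapses, under AEP, exactly to $S(\Delta(\rho))-S(\widebar\rho)$ and not to something larger (which would give a loose converse) — this is where the delicate interplay between the off-diagonal \enquote{maximal-correlation} structure of $\widebar{\rho}$ and the support structure of $\Delta(\rho)$ has to be exploited, together with a careful smoothing argument to absorb the approximation error in~\eqref{distillable}.
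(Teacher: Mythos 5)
Your skeleton matches the paper's: achievability by invoking the block-measurement protocol of~\cite{bound-coherence}, converse via SIO monotones linked to smooth conditional max-entropies of classical variables and an AEP-type limit, with the correct logical remark that one proves the upper bound for SIO and inherits it for PIO. Two issues, one minor and one substantive. The minor one is in your accounting of the direct part: the protocol's yield is $\sum_s P(s)\, S\big(\Delta(\widetilde{\rho}_s)\big)$, the average entropy of coherence of the pure post-measurement block states (which are generally \emph{not} uniformly coherent), and this equals $S(\Delta(\rho))-S(\widebar{\rho})$ because $S(\widebar{\rho})$ is exactly the entropy of the block label; it is not ``$H(\text{block distribution})$ plus the average $\log|J|$'' --- the classical outcome entropy is subtracted, not added, and is not itself distillable coherence.

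The substantive gap is that your converse is a roadmap without its engine. The monotones $M_t$ are never specified, and the step you yourself flag as the main obstacle --- bounding $M_t(\rho^{\otimes n})$ by $nQ(\rho)+o(n)$ --- is where all the work lies. The paper's choice is $\mu_k(\rho)=\max_{|I|\leq k}\log\big\|\Pi_I R^\rho \Pi_I\big\|_\infty$ with $R^\rho=\Delta(\rho)^{-1/2}\rho\,\Delta(\rho)^{-1/2}$, and the crucial upper bound comes from Ger\v{s}gorin's circle theorem, $\mu_k(\rho)\leq\log\big[l(\rho)+\lambda(\rho)(k-l(\rho))\big]$, where $l(\rho)$ is the largest rank-one principal block and $\lambda(\rho)<1$ is the largest sub-unit modulus of an entry of $R^\rho$, which tensorises as $\lambda(\rho^{\otimes n})=\lambda(\rho)$; it is this $\lambda$, absent from your plan, that produces the contradiction when $r>Q(\rho)$ (and the case $\lambda(\rho)=0$ must be handled separately via the IO bound $C_r$). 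Your step (6) also relies on ``continuity of $M_t$ under the trace norm'', which cannot hold: any monotone sharp enough for this converse is as discontinuous as $Q$ itself, which vanishes on a full-measure set of states. The paper replaces continuity by the smoothed monotones $\mu_k^\e$ together with the overlap bound $\mu_d(\sigma)\geq \log d+\log\braket{\Psi_d|\sigma|\Psi_d}$, and --- crucially --- the smoothing in the upper bound must be restricted to states obtained by incoherent projections (the sets $V_\e$), so that the block structure $S^\rho$ is inherited by the smoothed state; this is precisely why a tweaked AEP over $V_\e$, rather than the standard one over the trace-norm ball, is required. Without an ingredient playing the role of $\lambda$ and without this restricted smoothing, the plan as written does not close.
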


The proof is presented in Section~\ref{sec converse}. Note that for most states the condition $|\rho_{ij}| = \sqrt{\rho_{ii} \rho_{jj}}$ is met only when $i=j$, which implies that $\widebar{\rho}=\Delta(\rho)$ and hence that $Q(\rho) =0$. This is a manifestation of the phenomenon of generic bound coherence discovered in~\cite{bound-coherence}, where it was also observed that the only states for which $Q(\rho)>0$ are those that admit a rank-deficient $2\times 2$ principal submatrix with strictly positive diagonal.

\subsection{PIO coherence cost} \label{subsec PIO cost}

The opposite process to coherence distillation is coherence dilution. Starting from a large supply of coherence bits, we want to prepare a large number of identical copies of a target state $\rho$ with vanishing error in the asymptotic limit. For a given class of operations $\pazocal{O}$, the optimal rate at which this can be accomplished is given by the \emph{coherence cost}, defined by
\bb
C_{c, \pazocal{O}}(\rho) \coloneqq \inf \left\{r: \lim_{n\to\infty} \inf_{\Lambda\in \pazocal{O}} \left\| \Lambda( \Psi_{2^{\floor{rn}}} )\! -\! \rho^{\otimes n} \right\|_1 = 0 \right\} .
\label{cost}
\ee

\begin{thm} \label{C c PIO thm}
For all states $\rho$, the coherence cost under PIO is given by the \emph{uniform coherence of formation}:
\bb
C_{c,\PIO} (\rho) = C_f^\M(\rho) \coloneqq \inf_{\substack{\\ \sum_\alpha p_\alpha \Psi_\alpha = \rho \\[.5ex] \ket{\Psi_\alpha}\in\M_{k_\alpha}}} \sum_\alpha p_\alpha \log k_\alpha\, ,
\label{PIO cost}
\ee
where the infimum runs over all decompositions of $\rho$ as a convex combinations of uniformly coherent states $\Psi_\alpha$ of size $k_\alpha$, and is set to be infinite if no such decomposition exists.
\end{thm}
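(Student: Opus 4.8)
The plan is to establish the two inequalities $C_{c,\PIO}(\rho) \leq C_f^\M(\rho)$ and $C_{c,\PIO}(\rho) \geq C_f^\M(\rho)$ separately. For the direct part (achievability), I would first show that $C_f^\M$ is the natural ``one-shot-averaged'' cost by exhibiting an explicit PIO dilution protocol. Fix an optimal (or near-optimal) decomposition $\rho = \sum_\alpha p_\alpha \Psi_\alpha$ with $\ket{\Psi_\alpha}\in\M_{k_\alpha}$. Using a large block of coherence bits as input, one can, by an incoherent unitary, permute and reshape a maximally coherent state of size $2^m$ into a uniformly coherent state of any size $k\leq 2^m$ (this is just a relabelling of basis vectors); moreover, by an incoherent measurement followed by a conditional incoherent unitary one can create a classical mixture of such states with prescribed weights, consuming on average $\sum_\alpha p_\alpha \log k_\alpha$ coherence bits. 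Applying this $n$ times in parallel and invoking the law of large numbers (or a standard typicality argument on the ``instruction register'' drawn from $p$) produces $\rho^{\otimes n}$ up to vanishing trace-norm error at rate $C_f^\M(\rho)+\delta$ for any $\delta>0$. One should take care that PIO as defined in Eq.~\eqref{PIO Kraus} genuinely permits the classical randomness in $p$ and the measurement-conditioned unitaries; this is exactly Remark~\ref{elementary PIO rem}. If $\rho$ admits no decomposition into uniformly coherent states, there is nothing to prove on this side.

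For the converse, the crux, I would argue that any PIO output starting from a coherence-bit resource is forced to be (a mixture of) uniformly coherent states on each copy, so that the achievable states are constrained by $C_f^\M$. Concretely: an elementary PIO as in Eq.~\eqref{elementary PIO Kraus} maps a maximally coherent state $\Psi_{2^m}$ to $\sum_\beta U_\beta \Pi_\beta \Psi_{2^m} \Pi_\beta U_\beta^\dag$; since $\ket{\Psi_{2^m}}$ has all amplitudes of equal modulus, $\Pi_\beta\ket{\Psi_{2^m}}$ is (proportional to) a uniformly coherent state of size $\operatorname{rk}\Pi_\beta$, and applying the incoherent unitary $U_\beta$ keeps it uniformly coherent of the same size. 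Hence the output of any PIO on $\Psi_{2^m}$ is a convex combination of uniformly coherent states, and the weights $q_\beta$ and sizes $\ell_\beta$ satisfy $\sum_\beta q_\beta \log \ell_\beta \leq m$ because $\sum_\beta \operatorname{rk}\Pi_\beta \leq 2^m$ forces, by concavity of $\log$ and Jensen, the average log-rank to be at most $m$. This gives a clean single-copy/single-letter bound: if $\Lambda(\Psi_{2^{\lfloor rn\rfloor}})=\sigma_n$ then $\sigma_n$ is PIO-reachable only if it decomposes into uniformly coherent states with average log-size at most $\lfloor rn\rfloor$.

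To promote this to the asymptotic converse I would use the properties the introduction flags: superadditivity $C_f^\M(\rho\otimes\sigma)\geq C_f^\M(\rho)+C_f^\M(\sigma)$ — proved by taking a uniformly-coherent decomposition of the tensor product and noting each product component is itself uniformly coherent with log-size the sum of the two log-sizes, then using a grouping/chain-rule estimate on the entropy of the instruction variable — and lower semicontinuity of $C_f^\M$. The argument runs: given a rate-$r$ PIO dilution protocol, $\Lambda_n(\Psi_{2^{\lfloor rn\rfloor}})=\sigma_n\to\rho^{\otimes n}$ in trace norm (more precisely $\|\sigma_n-\rho^{\otimes n}\|_1\to 0$), and $C_f^\M(\sigma_n)\leq \lfloor rn\rfloor$ by the single-letter bound; by lower semicontinuity $C_f^\M(\rho^{\otimes n})\leq \liminf_n \ldots \leq \lfloor rn\rfloor$ along the appropriate subsequence, and by superadditivity $n\, C_f^\M(\rho)\leq C_f^\M(\rho^{\otimes n})\leq rn+o(n)$, whence $C_f^\M(\rho)\leq r$; taking the infimum over achievable $r$ gives the converse. (If no uniformly-coherent decomposition of $\rho$ exists, the single-letter bound shows no finite-rate protocol can succeed, matching $C_f^\M(\rho)=\infty$.)

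The main obstacle I expect is precisely the step where lower semicontinuity must substitute for asymptotic continuity: $C_f^\M$ is defined by an infimum over a constraint set (decompositions into the \emph{closed}? sets $\M_{k}$) that can behave discontinuously, and one must verify that the infimum defining $C_f^\M$ is attained or at least lower semicontinuous in $\rho$ — this requires a compactness argument on the decompositions (bounding the number of terms via Carathéodory, extracting convergent subsequences of the $(p_\alpha,\Psi_\alpha)$, and checking the limit is still a valid uniformly-coherent decomposition with no larger average log-size). Combined with making the superadditivity bound genuinely give the linear-in-$n$ lower bound (rather than something weaker), this lower-semicontinuity-plus-superadditivity route replacing the usual continuity estimate is, as the introduction notes, the non-routine part of the proof.
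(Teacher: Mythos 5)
Your converse breaks down at the step where you write ``by lower semicontinuity $C_f^\M(\rho^{\otimes n})\leq \liminf_n\ldots\leq \floor{rn}$''. Lower semicontinuity compares the value at a \emph{fixed} limit state with the liminf of the values along a sequence converging to it; here you need to compare $C_f^\M(\sigma_n)$ with $C_f^\M(\rho^{\otimes n})$, where the target $\rho^{\otimes n}$ itself changes with $n$ and lives in a space of growing dimension, so there is no fixed limit point to which lower semicontinuity can be applied. Controlling $\big|C_f^\M(\sigma_n)-C_f^\M(\rho^{\otimes n})\big|$ in terms of $\|\sigma_n-\rho^{\otimes n}\|_1$ is precisely the asymptotic-continuity statement that is not available for $C_f^\M$. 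The paper circumvents this by reversing your order of operations: from monotonicity, $\floor{rn}\geq C_f^\M(\sigma_n)$, and then superadditivity is applied to the \emph{correlated output state} $\sigma_n$ itself, giving $C_f^\M(\sigma_n)\geq\sum_{i=1}^n C_f^\M\big(\sigma_n^{(i)}\big)$ over its single-copy marginals; picking the best marginal $\omega_n$ yields $C_f^\M(\omega_n)\leq \floor{rn}/n$, and since the $\omega_n$ live in the fixed $d$-dimensional space and converge in trace norm to the fixed state $\rho$, lower semicontinuity legitimately gives $C_f^\M(\rho)\leq r$. Note also that the superadditivity you invoke for $\rho^{\otimes n}$ is not established by your sketch: a decomposition of a tensor product into uniformly coherent states need not consist of product vectors, so ``each product component is uniformly coherent with log-size the sum'' is not a proof. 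The paper must prove the genuinely bipartite pure-state inequality $\log k\geq C_f^\M(\Psi_A)+C_f^\M(\Psi_B)$ for every $\ket{\Psi}_{AB}\in\M_k$, via an explicit analysis of the $0$/$1$ support pattern of its amplitudes, and only then deduce superadditivity for arbitrary (in particular product) states.

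On the direct part, the claim that a maximally coherent state of size $2^m$ can be reshaped into a uniformly coherent state of any size $k\leq 2^m$ ``by relabelling basis vectors'' is false: an incoherent unitary preserves the number of nonvanishing amplitudes, and a deterministic elementary-PIO conversion would require all projectors to have rank $k$, forcing $k$ to be a power of two. For general $k$ (irrational $\log k$) one needs the block construction of the paper's appendix lemma: choose $N,M$ with $k^N\leq 2^M$ and $k^N/2^M\geq 1-\epsilon/2$, project $\Psi_2^{\otimes M}$ onto the first $k^N$ basis vectors, and accept the small failure probability. This also repairs your ``average consumption'' accounting, since the definition of the cost counts a worst-case supply of $\floor{rn}$ coherence bits rather than an expected one. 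These issues are fixable along the paper's lines, but the misapplication of lower semicontinuity in the converse is the substantive gap.
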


For the proof we refer the reader to Section~\ref{sec PIO cost}. The above result quantifies exactly the power of the PIO class in the process of coherence dilution. In particular, it can be used to show that there is a ball around the maximally mixed state that is entirely formed by states with finite cost. More precisely, all states $\rho$ such that $\left\|\rho - \id/d\right\|_{1\to 1}\leq 1/d$, where $\|X\|_{1 \to 1}\coloneqq \max_{1\leq i\leq d} \sum_{j=1}^d |X_{ij}|$ is the max-row sum norm, satisfy $C_{c, \PIO}(\rho)\leq 1$. On the other hand, it is easy to verify that the uniform coherence of formation in infinite on all pure states that are not uniformly coherent. Consequently, these cannot be prepared via PIO starting from any number of coherence bits, which recovers one of the results in~\cite{Chitambar-PIO}.

\section{Distillable coherence under SIO/PIO: the protocol} \label{sec direct}

The first step in proving Theorem~\ref{solution thm} is to show the achievability of the quintessential coherence $Q$ as an SIO distillation rate. To this end, throughout this section we will recap the SIO distillation protocol constructed in~\cite{bound-coherence}, which allows us to distil coherence at rate $Q$ as required. Let $\rho$ be a quantum state in dimension $d$. We construct the positive semidefinite matrix
\bb
R^\rho \coloneqq \Delta(\rho)^{-1/2} \rho\, \Delta(\rho)^{-1/2} ,
\label{R rho}
\ee
where the inverse of $\Delta(\rho)$ (the diagonal part of $\rho$) is taken on its support. Observe that $R^\rho_{ii} = 1$ iff $\rho_{ii}>0$. Consider the graph $G_\rho = (V_\rho, E_\rho)$ with vertices $V_\rho \coloneqq [d]$ and edges
\bb
E_\rho \coloneqq \left\{ (i,j)\!: |R^\rho_{ij}|=1 \right\} = \left\{ (i,j)\!: |\rho_{ij}|=\! \sqrt{\rho_{ii}\rho_{jj}} >0 \right\} .
\label{E rho}
\ee
For simplicity, we have included into $E_\rho$ also diagonal pairs of the form $(i,i)$, with $i$ satisfying $\rho_{ii}>0$. The fact that $\rho$ is positive semidefinite has some strong implications for the structure of the above graph~\cite[SM,~Lemma~4]{bound-coherence}.

\begin{note}
From now on, we will often assume that $\Delta(\rho)>0$ has full support. This simplifies the notation considerably and causes no loss of generality, because the support of $\rho$ is necessarily contained inside $\Span\left\{\ket{i}:\rho_{ii}>0\right\}$.
\end{note}

\begin{lemma} \emph{\cite[SM,~Lemma~4]{bound-coherence}.} \label{G rho cliques lemma}
The connected components of the graph $G_\rho$ are all cliques (i.e.\ complete subgraphs). Equivalently, there exists a partition $\{I_s\}_{s\in\pazocal{S}}$ of $[d]$ such that 
\bb
(i,j)\in E_\rho \quad \Longleftrightarrow\quad \exists\ s\in\pazocal{S}:\ i,j\in I_s\, .
\label{E rho partition}
\ee
Moreover, for all $s\in\pazocal{S}$ the state $\frac{\Pi_{I_s}\rho\Pi_{I_s}}{\Tr[\Pi_{I_s}\rho]}$ is pure.
\end{lemma}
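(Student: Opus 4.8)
The plan is to exploit the structure of the positive semidefinite matrix $R^\rho$ defined in Eq.~\eqref{R rho}. Under the standing assumption $\Delta(\rho)>0$ of the Note, $R^\rho$ has all diagonal entries equal to $1$, so it is a \emph{correlation matrix}; in particular, writing a Gram decomposition $R^\rho_{ij}=\braket{v_i|v_j}$ with unit vectors $v_i\in\C^d$ (e.g.\ take $v_i\coloneqq (R^\rho)^{1/2}\ket{i}$, the $i$-th column of $(R^\rho)^{1/2}$), the edge condition $(i,j)\in E_\rho$ reads $|R^\rho_{ij}|=1=\|v_i\|\,\|v_j\|$, which by the equality case of the Cauchy--Schwarz inequality over $\C$ is equivalent to $v_j=e^{i\theta_{ij}}v_i$ for some phase $\theta_{ij}\in\R$. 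In words: $i$ and $j$ are adjacent in $G_\rho$ if and only if the associated Gram vectors are parallel.

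First I would observe that ``having parallel Gram vectors'' is an equivalence relation on $[d]$: reflexivity and symmetry are immediate, and transitivity follows by composing phases, since $v_j=e^{i\theta}v_i$ and $v_k=e^{i\phi}v_j$ give $v_k=e^{i(\theta+\phi)}v_i$. Letting $\{I_s\}_{s\in\pazocal{S}}$ be the partition of $[d]$ into the corresponding equivalence classes, the previous paragraph shows precisely that $(i,j)\in E_\rho \iff \exists\, s\in\pazocal{S}:\ i,j\in I_s$, which is Eq.~\eqref{E rho partition}; equivalently, $G_\rho$ is a disjoint union of the complete graphs on the $I_s$, so its connected components are exactly these cliques.

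For the purity statement, I would fix $s\in\pazocal{S}$, pick any $i_0\in I_s$, and set $v\coloneqq v_{i_0}$, so that for every $i\in I_s$ one can write $v_i=e^{i\beta_i}v$ for a suitable $\beta_i\in\R$. Using $\rho_{ij}=\sqrt{\rho_{ii}\rho_{jj}}\,R^\rho_{ij}=\sqrt{\rho_{ii}\rho_{jj}}\,\braket{v_i|v_j}$, for $i,j\in I_s$ this gives $\rho_{ij}=\sqrt{\rho_{ii}\rho_{jj}}\,e^{i(\beta_j-\beta_i)}$. Hence, defining $\ket{\psi_s}\coloneqq \sum_{i\in I_s}\sqrt{\rho_{ii}}\,e^{-i\beta_i}\ket{i}$, a direct computation of matrix elements shows that $\Pi_{I_s}\rho\,\Pi_{I_s}=\ketbra{\psi_s}$, whence $\frac{\Pi_{I_s}\rho\Pi_{I_s}}{\Tr[\Pi_{I_s}\rho]}=\frac{\ketbra{\psi_s}}{\braket{\psi_s|\psi_s}}$ is pure (note $\braket{\psi_s|\psi_s}=\sum_{i\in I_s}\rho_{ii}=\Tr[\Pi_{I_s}\rho]>0$).

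None of these steps is technically deep; the only points requiring a little care are invoking the equality case of Cauchy--Schwarz over $\C$ (so that the ``proportionality constant'' relating parallel Gram vectors is a phase rather than an arbitrary scalar) and the preliminary reduction to $\rho_{ii}>0$ for all $i$, which is exactly what licenses the claim that $R^\rho$ has unit diagonal and is handled by the Note. I would expect the bookkeeping of phases in the reconstruction of $\ket{\psi_s}$ to be the most error-prone part of the write-up, although it is entirely routine.
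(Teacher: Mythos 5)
Your proof is correct and complete: the reduction to the correlation matrix $R^\rho$, the Cauchy--Schwarz equality case showing that $(i,j)\in E_\rho$ exactly when the Gram vectors $v_i,v_j$ are equal up to a phase, the observation that parallelism is an equivalence relation (which yields the partition and hence the clique structure), and the explicit reconstruction $\Pi_{I_s}\rho\,\Pi_{I_s}=\ketbra{\psi_s}$ with $\ket{\psi_s}=\sum_{i\in I_s}\sqrt{\rho_{ii}}\,e^{-i\beta_i}\ket{i}$ all check out, including the phase bookkeeping and the normalisation $\braket{\psi_s|\psi_s}=\Tr[\Pi_{I_s}\rho]>0$. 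Note that the paper itself does not prove this lemma but imports it from the supplementary material of the cited reference, so there is no in-text argument to compare against; your Gram-vector argument is a legitimate self-contained proof of the imported statement (an equivalent route, for the record, is to force transitivity of the relation $|R^\rho_{ij}|=1$ directly from positive semidefiniteness of the $3\times 3$ principal submatrices, which is the same Cauchy--Schwarz fact in determinant form).
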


In~\cite{bound-coherence} it was shown that in order for a state to be SIO distillable there need to exists two indices $i\neq j$ such that $|\rho_{ij}| =\sqrt{\rho_{ii}\rho_{jj}}>0$. Intuitively, this seems to suggest that the only coherence inside $\rho$ that truly matters as far as SIO distillation is concerned is that identified by the entries $\rho_{ij}$ corresponding to pairs $(i,j)\in E_\rho$. We can thus construct a `trimmed' state $\widebar{\rho}$ by cutting off all other entries:
\bb
\widebar{\rho} \coloneqq \sum_{\substack{i,j=1,\ldots,d\\[.4ex] (i,j)\in E_\rho}} \rho_{ij} \ketbraa{i}{j} = \sum_{s\in \pazocal{S}} \Pi_{I_s} \rho\, \Pi_{I_s}\, ,
\label{rho bar}
\ee
where the sets $I_s$ are those identified by Lemma~\ref{G rho cliques lemma}, and for $I\subseteq [d]$ we define as usual $\Pi_I\coloneqq \sum_{i\in I} \ketbra{i}$. The second equality in Eq.~\eqref{rho bar} is a direct consequence of Lemma~\ref{G rho cliques lemma}, and implies -- among other things -- that $\widebar{\rho}$ is positive semidefinite and thus a legitimate density matrix (normalisation follows easily as $\Delta(\widebar{\rho}) = \Delta(\rho)$). This line of thought leads us to define the \emph{quintessential coherence} as
\bb
Q(\rho)\coloneqq S\left( \Delta(\rho)\right) - S(\widebar{\rho})\, .
\label{Q}
\ee

Since most states are such that all $2\times 2$ principal minors are strictly positive, and this property implies that $\widebar{\rho} = \Delta(\rho)$, the quintessential coherence vanishes on all but zero-measure sets of states, in compliance with the results of~\cite{bound-coherence}. In particular, $Q$ is highly discontinuous. It is shown in~\cite[SM,~Lemma~7]{bound-coherence} that $Q$ is fully additive over tensor products, i.e.
\bb
Q(\rho\otimes \sigma) = Q(\rho) + Q(\sigma)
\label{additivity Q}
\ee
for all states $\rho,\sigma$. 

\begin{rem} \label{Q and Cr and Cf rem}
It is worth noticing that coherence of formation and relative entropy of coherence coincide precisely for states such that $\widebar{\rho}=\rho$~\cite[Theorem~10]{Winter2016}. This implies that
\bb
Q(\rho) = C_r(\widebar{\rho}) = C_f(\widebar{\rho})
\label{Q and Cr and Cf}
\ee
holds for all $\rho$.
\end{rem}

Although it is not clear at first sight, the quintessential coherence is an SIO monotone, as will follow from Theorem~\ref{solution thm} once we have proved it in Section~\ref{sec converse}. We now recall the result presented in~\cite{bound-coherence} that $Q(\rho)$ is at least an achievable rate for SIO distillation, which establishes the direct part of Theorem~\ref{solution thm}. For the sake of readability, we include a sketch of the proof reported in~\cite[SM]{bound-coherence}.

\begin{lemma} \label{C d SIO lower bound lemma}
The SIO/PIO distillable coherences satisfy
\bb
C_{d,\SIO}(\rho) \geq C_{d,\PIO}(\rho) \geq Q(\rho)
\label{C d SIO lower bound}
\ee
for all states $\rho$.
\end{lemma}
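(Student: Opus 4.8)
The left inequality $C_{d,\SIO}(\rho)\geq C_{d,\PIO}(\rho)$ is immediate: every PIO is an SIO, so in the definition~\eqref{distillable} the infimum defining the SIO rate ranges over a superset of channels, producing a smaller achievable error and hence a larger (or equal) distillation rate. All the work is in the right inequality $C_{d,\PIO}(\rho)\geq Q(\rho)$, which I would prove by exhibiting an explicit PIO protocol attaining rate $Q(\rho)$. The starting point is to rewrite $Q$ in a form that exposes what is to be distilled. Assuming $\Delta(\rho)>0$, Lemma~\ref{G rho cliques lemma} gives $\widebar{\rho}=\bigoplus_{s\in\pazocal{S}} q_s\,\psi_s$ with $q_s\coloneqq\Tr[\Pi_{I_s}\rho]$ and $\psi_s\coloneqq \Pi_{I_s}\rho\,\Pi_{I_s}/q_s$ pure and supported on the disjoint set $I_s$, so that $\ket{\psi_s}=\sum_{i\in I_s}\sqrt{\rho_{ii}/q_s}\,e^{i\phi_i}\ket{i}$ for suitable phases $\phi_i$. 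Since the blocks are orthogonal and each $\psi_s$ is pure, $S(\widebar{\rho})=-\sum_s q_s\log q_s$, while $S(\Delta(\rho))=-\sum_i\rho_{ii}\log\rho_{ii}$; the Shannon chain rule applied to the two-stage sampling ``block $s$ with probability $q_s$, then index $i\in I_s$ with probability $\rho_{ii}/q_s$'' then yields
\bb
Q(\rho)=\sum_{s\in\pazocal{S}} q_s\, C(\psi_s)\,,
\label{Q as average}
\ee
where $C(\psi_s)=S(\Delta(\psi_s))$ is the entropy of coherence of the $s$-th block. Thus $Q(\rho)$ is exactly the average entropy of coherence of the pure states into which the incoherent block measurement decomposes $\rho$.

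This decomposition dictates the protocol, which runs on $\rho^{\otimes n}$ in two stages, each manifestly PIO. First, apply to every copy the incoherent projective measurement $\{\Pi_{I_s}\}_{s\in\pazocal{S}}$; by Remark~\ref{elementary PIO rem} (with trivial unitaries) this is a PIO, it leaves $\Delta$ invariant, and conditioned on the classical record $\mathbf{s}=(s_1,\dots,s_n)$, obtained with probability $\prod_t q_{s_t}$, it projects the state onto the pure product $\psi_{s_1}\otimes\cdots\otimes\psi_{s_n}$. An incoherent unitary (a permutation of the tensor factors, which merely relabels incoherent basis vectors) then groups the copies by block type, giving $N_s$ copies of the pure state $\psi_s$ for each $s$, where by the weak law of large numbers $N_s/n\to q_s$. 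Second, to each block of $N_s$ copies of $\psi_s$ I would apply the pure-state distillation routine, which for pure states attains the entropy of coherence $C(\psi_s)$ as a PIO distillation rate~\cite{Xiao2015, Chitambar-PIO, Chitambar2016}. Concretely this can be realised as an incoherent typical-subspace projection: writing $\ket{\psi_s}^{\otimes N_s}=\sum_{\mathbf{i}}\sqrt{p_{\mathbf{i}}}\,\ket{\mathbf{i}}$ as a superposition over length-$N_s$ strings $\mathbf{i}$ of indices in $I_s$, the incoherent projector $\Pi_{T_s}=\sum_{\mathbf{i}\in T_s}\ketbra{\mathbf{i}}$ onto the typical set $T_s$ captures probability $\to 1$ and leaves a state close to a uniformly coherent state of size $|T_s|\approx 2^{N_s C(\psi_s)}$, which an incoherent unitary reshapes into $\approx N_s C(\psi_s)$ coherence bits.

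Composing these steps gives a single PIO (PIO being closed under composition and under choosing the next operation conditionally on a classical measurement outcome, since free classical randomness and incoherent measurements are available), whose output is close in trace norm to $\Psi_{2^{\floor{M}}}$ with $M=\sum_s N_s\,C(\psi_s)$. To finish I would bound the total error by a union bound over the finitely many block types, combining the typicality/concentration estimate for the $N_s$ (controlling the deviation of $M/n$ from $\sum_s q_s C(\psi_s)=Q(\rho)$) with the vanishing per-block distillation error and the asymptotic robustness of the target maximally coherent state under small rate fluctuations. Taking $n\to\infty$ then certifies any rate $r<Q(\rho)$ as achievable, hence $C_{d,\PIO}(\rho)\geq Q(\rho)$.

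The main obstacle, and the reason the statement is about PIO rather than merely SIO, is to carry out the whole pipeline inside the strictly smaller PIO class: unlike SIO, PIO forbids coherent ancillae and controlled unitaries, so I must ensure that every ingredient --- block measurement, sorting permutation, typical-subspace projection, and final reshaping --- is built solely from incoherent unitaries and incoherent projective measurements, and that their classically-controlled concatenation remains PIO. The other delicate point is bookkeeping the $\ell_1$ error across the two stages: the post-measurement blocks have random sizes $N_s$, so the extracted number of coherence bits is itself a random variable, and I must show that with high probability it exceeds $\floor{rn}$ for any $r<Q(\rho)$ while the residual state stays $\e$-close to $\Psi_{2^{\floor{rn}}}$, uniformly in the measurement record.
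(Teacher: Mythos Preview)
Your proposal is correct and follows essentially the same approach as the paper: apply the incoherent block measurement $\{\Pi_{I_s}\}_{s\in\pazocal{S}}$ copy by copy, observe that the post-measurement states are pure by Lemma~\ref{G rho cliques lemma}, and then invoke the known pure-state PIO distillation rate $C(\psi_s)$, with the average rate evaluating to $Q(\rho)$ via the same chain-rule computation. The paper's proof is a terser sketch deferring the detailed typicality and error bookkeeping to~\cite[SM]{bound-coherence}, whereas you spell out the typical-subspace projection and the union-bound error analysis explicitly; but the protocol and the logical structure are identical.
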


\begin{proof}
On the one hand, since PIO is a subset of SIO, from Eq.~\eqref{distillable} it follows easily that $C_{d,\SIO}(\rho) \geq C_{d,\PIO}(\rho)$. On the other hand, there is a simple PIO protocol that produces an average of $nQ(\rho)$ coherence bits starting from $n$ identical copies of $\rho$. We describe and analyse it in intuitive terms here, referring to~\cite[SM]{bound-coherence} for a more rigorous analysis. There are three main steps.
\begin{enumerate}[(i)]
\item One applies the PIO instrument with Kraus operators $\{\Pi_{I_s}\}_{s\in\pazocal{S}}$ on each of the $n$ copies of $\rho$ that are initially available.
\item In the limit of large $n$, each outcome $s$ is obtained an average number of times equal to $nP(s)$, where $P(s) = \Tr [\Pi_{I_s}\rho]$. 
\item The post-measurement state corresponding to the outcome $s$, denoted by $\widetilde{\rho}_s\coloneqq P(s)^{-1} \Pi_{I_s}\rho \Pi_{I_s}$, is pure, as follows from Lemma~\ref{G rho cliques lemma}; it is then known (\cite{Winter2016} and~\cite[Proposition~7]{Chitambar2016}) that there is a PIO protocol that extracts coherence bits at a rate $S\left(\Delta(\widetilde{\rho}_s)\right)$; since we started with $nP(s)$ states, we obtain $n P(s) S\left(\Delta(\widetilde{\rho}_s)\right)$ cosbits at the output.
\end{enumerate}
The distillation rate associated with this protocol is then
\begin{align*}
r &= \sum_{s \in \pazocal{S}} P(s)\, S\left(\Delta(\widetilde{\rho}_s)\right) \\
&= \sum_{s \in \pazocal{S}} P(s)\, S\left(P(s)^{-1} \Pi_{I_s} \Delta(\rho) \Pi_{I_s} \right) \\
&= S(\Delta(\widebar{\rho})) - S(\widebar{\rho}) \\
&= S(\Delta(\rho)) - S(\widebar{\rho}) \\
&= Q(\rho)\, ,
\end{align*}
as claimed.
\end{proof}

\begin{rem}
We cannot improve the above distillation protocol by applying it to multiple copies of $\rho$. In fact, the identity $\frac1n Q(\rho^{\otimes n})=Q(\rho)$, which descends from Eq.~\eqref{additivity Q}, shows that the resulting rate would not be greater than that of the single-copy scenario.
\end{rem}

\section{A family of SIO monotones} \label{sec monotones}

Throughout this section, we will construct a family of SIO monotones and study their properties. These tools will eventually allow us to show in Section~\ref{sec converse} that the quintessential coherence $Q$ is also an upper bound to the SIO distillable coherence, completing the proof of Theorem~\ref{solution thm}.

\subsection{Definitions and elementary properties}

Let $\rho$ be a $d$-dimensional quantum state such that $\Delta(\rho)>0$. For an arbitrary integer $1\leq k\leq d$, define
\bb
\mu_k(\rho) \coloneqq \max_{I\subseteq [d],\, |I|\leq k} \log \left\|\Pi_I R^\rho \Pi_I \right\|_\infty\, ,
\label{mu}
\ee
where $R^\rho$ is given by Eq.~\eqref{R rho}, $\|\cdot\|_\infty$ denotes the operator norm, and again $\Pi_I = \sum_{i\in I} \ketbra{i}$. The maximum is achieved when $|I|=k$. 
Observe that for all states $\rho$ it holds that $\mu_1(\rho)\equiv 0$, while $\mu_2(\rho)=\log(1+\eta(\rho))$ is a function of the \emph{maximal coherence}~\cite{bound-coherence}, denoted by $\eta$ and defined in the forthcoming Eq.~\eqref{eta}. Using the fact that $[\Pi_I, D]\equiv 0$ for all diagonal $D$, one can also show that
\bb
\mu_k(\rho) = \max_{I\subseteq [d],\, |I|\leq k} D_{\max}\left( \Pi_I \rho \Pi_I \big\| \Delta(\rho)\right) ,
\label{mu D max}
\ee
where assuming that $\supp \sigma\subseteq \supp \omega$ the quantum max-relative entropy between $\sigma$ and $\omega$ is given by~\cite{Datta2009} 
\bb
\begin{aligned}
D_{\max}(\sigma\|\omega)\coloneqq &\, \inf\left\{\nu: \sigma\leq 2^{\nu} \omega \right\} \\
=&\, \log \left\| \omega^{-1/2}\sigma\,\omega^{-1/2}\right\|_\infty \, .
\label{D max}
\end{aligned}
\ee
Note that the inverse of $\omega$ is taken as usual on its support. The following lemma collects all elementary properties of the functions $\mu_k$.

\begin{lemma} \label{properties mu lemma}
Let $1\leq k\leq d$ be fixed. Then:
\begin{enumerate}[(a)]
\item $0\leq \mu_k (\rho)\leq \log k$ for all states $\rho$;
\item $\mu_k$ admits the following variational characterisation:
\bb
\mu_k (\rho) = \inf\{ \nu: \Pi_I \rho \Pi_I \leq 2^\nu \Delta(\rho) \ \forall\, I\! \subseteq \! [d]: |I|\!\leq\! k \}\, ;
\label{mu variational}
\ee
\item $\mu_k$ is an SIO monotone;
\item $\mu_k$ is lower semicontinuous.
\end{enumerate}
\end{lemma}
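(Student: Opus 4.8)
The plan is to prove the four items of Lemma~\ref{properties mu lemma} essentially in the order stated, since the later ones lean on the earlier.

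\textbf{Item (a).} The lower bound $\mu_k(\rho)\ge 0$ is immediate: taking a singleton $I=\{i\}$ with $\rho_{ii}>0$ gives $\Pi_I R^\rho \Pi_I = R^\rho_{ii}\ketbra{i}$ with $R^\rho_{ii}=1$, so $\log\|\Pi_I R^\rho \Pi_I\|_\infty \ge \log 1 = 0$. For the upper bound I would use the operator-norm characterisation in Eq.~\eqref{mu D max}. Fix $I$ with $|I|\le k$. On the one hand $\Pi_I\rho\Pi_I \le \Pi_I (\Tr\rho)\,\id \,\Pi_I$ is too crude; instead I would bound $\Pi_I R^\rho\Pi_I$ directly. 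Write $\Pi_I R^\rho\Pi_I = \Pi_I \Delta(\rho)^{-1/2}\rho\,\Delta(\rho)^{-1/2}\Pi_I$; since $\Pi_I$ commutes with $\Delta(\rho)$, this equals $\Delta(\rho_I)^{-1/2}\,(\Pi_I\rho\Pi_I)\,\Delta(\rho_I)^{-1/2}$ where $\rho_I=\Pi_I\rho\Pi_I$. Now $\Pi_I\rho\Pi_I \le \Pi_I\rho\Pi_I$ has trace $\sum_{i\in I}\rho_{ii} = \Tr\Delta(\rho_I)$, and since the sandwiched matrix $\Delta(\rho_I)^{-1/2}(\Pi_I\rho\Pi_I)\Delta(\rho_I)^{-1/2}$ has all diagonal entries equal to $1$ (on $\supp\Delta(\rho_I)$, which has dimension $\le k$), its trace is $\le k$, hence its operator norm is $\le k$ (it is positive semidefinite). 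Therefore $\mu_k(\rho)\le\log k$.

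\textbf{Item (b).} This is just unfolding the definition of $D_{\max}$ from Eq.~\eqref{D max} together with the identity Eq.~\eqref{mu D max}: for a single $I$, $D_{\max}(\Pi_I\rho\Pi_I\|\Delta(\rho)) = \inf\{\nu:\Pi_I\rho\Pi_I\le 2^\nu\Delta(\rho)\}$, and taking the maximum over all $I$ with $|I|\le k$ turns the family of one-shot inequalities into the single condition ``$\Pi_I\rho\Pi_I\le 2^\nu\Delta(\rho)$ for all such $I$,'' whose infimal $\nu$ is exactly the right-hand side of Eq.~\eqref{mu variational}. One only needs to check the support condition $\supp(\Pi_I\rho\Pi_I)\subseteq\supp\Delta(\rho)$, which holds under the standing assumption $\Delta(\rho)>0$.

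\textbf{Item (c), the main obstacle.} Monotonicity under SIO is the substantive part. Using the Kraus form Eq.~\eqref{SIO Kraus}, $\Lambda(\rho)=\sum_\alpha U_{\pi_\alpha}D_\alpha\rho D_\alpha^* U_{\pi_\alpha}^\intercal$ with $\sum_\alpha |d_\alpha(i)|^2=1$. A coherence monotone should be nonincreasing both under mixing the post-measurement branches and under the incoherent-unitary relabelling; the natural route is to show (i) $\mu_k$ is invariant under incoherent unitaries, (ii) $\mu_k$ does not increase when we pass from $\rho$ to the sub-normalised branch $D_\alpha\rho D_\alpha^*$ appropriately normalised, and (iii) $\mu_k$ is convex (or at least quasi-convex / jointly quasi-convex with the branch probabilities) so that averaging the branches cannot increase it. Step (i) is clear from Eq.~\eqref{mu D max} since conjugating by an incoherent unitary permutes the index set and rescales $\Delta(\rho)$ consistently, leaving $\|\Pi_I R^\rho\Pi_I\|_\infty$ unchanged after relabelling $I$. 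For step (ii) I would exploit that $D_\alpha$ is diagonal: then $R^{D_\alpha\rho D_\alpha^*} = \Delta(D_\alpha\rho D_\alpha^*)^{-1/2}D_\alpha\rho D_\alpha^*\Delta(D_\alpha\rho D_\alpha^*)^{-1/2}$, and because $\Delta(D_\alpha\rho D_\alpha^*) = |D_\alpha|^2\Delta(\rho)$ (entrywise), the diagonal rescalings cancel and $R^{D_\alpha\rho D_\alpha^*}$ coincides with $R^\rho$ restricted/supported appropriately — so on the relevant block $\|\Pi_I R^{D_\alpha\rho D_\alpha^*}\Pi_I\|_\infty \le \|\Pi_I R^\rho\Pi_I\|_\infty$. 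The delicate point is handling the sum over $\alpha$: one must combine the variational form Eq.~\eqref{mu variational} with the trace-preservation constraint $\sum_\alpha |d_\alpha(i)|^2=1$ to show $\Pi_I\Lambda(\rho)\Pi_I\le 2^{\mu_k(\rho)}\Delta(\Lambda(\rho))$; I expect this to follow by writing $\Pi_I\Lambda(\rho)\Pi_I = \sum_\alpha \Pi_I U_{\pi_\alpha}D_\alpha\rho D_\alpha^*U_{\pi_\alpha}^\intercal\Pi_I$ and, after moving $U_{\pi_\alpha}$ inside (which maps $\Pi_I$ to $\Pi_{\pi_\alpha^{-1}(I)}$, still of size $\le k$), applying the single-$I$ bound $\Pi_J D_\alpha\rho D_\alpha^*\Pi_J \le 2^{\mu_k(\rho)}D_\alpha\Delta(\rho)D_\alpha^* $ and summing, using $\sum_\alpha U_{\pi_\alpha}D_\alpha\Delta(\rho)D_\alpha^*U_{\pi_\alpha}^\intercal = \Delta(\Lambda(\rho))$. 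This is the computation I expect to be the crux and the place where care about which projector size is preserved matters most.

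\textbf{Item (d).} Lower semicontinuity follows from the variational characterisation Eq.~\eqref{mu variational}: $\mu_k(\rho)>t$ iff there exist $I$ with $|I|\le k$ and a unit vector $\ket{v}$ supported on $I$ with $\bra{v}\rho\ket{v} > 2^t\bra{v}\Delta(\rho)\ket{v}$, i.e. $\mu_k$ is a finite supremum (over the finitely many $I$) of $\log\|\Pi_I R^\rho\Pi_I\|_\infty$, and each such term is lower semicontinuous in $\rho$ (at points where $\Delta(\rho)>0$ it is continuous, being the log of an operator norm of a rational expression in $\rho$; the only issue is where $\Delta(\rho)$ loses support, handled by restricting to the block where diagonal entries are positive, on which the expression only jumps down in the limit). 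A supremum of lower semicontinuous functions is lower semicontinuous, giving (d). I would phrase this last argument via the set $\{\rho:\mu_k(\rho)\le t\}$ being closed, using Eq.~\eqref{mu variational} to write it as an intersection of the closed conditions $\Pi_I\rho\Pi_I\le 2^t\Delta(\rho)$ over the finite family of admissible $I$, which is manifestly closed.
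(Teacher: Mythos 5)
Your proposal is correct and follows essentially the same route as the paper: (b) by unfolding the variational form of $D_{\max}$, (c) via the operator-inequality chain $\Pi_I\Lambda(\rho)\Pi_I=\sum_\alpha U_{\pi_\alpha}D_\alpha\,\Pi_{\pi_\alpha^{-1}(I)}\rho\,\Pi_{\pi_\alpha^{-1}(I)}\,D_\alpha^*U_{\pi_\alpha}^\intercal\leq 2^{\mu_k(\rho)}\Delta\!\left(\Lambda(\rho)\right)$ (which is exactly the computation you identify as the crux), and (d) from the closedness of the sublevel sets supplied by (b). The only cosmetic differences are that in (a) you bound $\left\|\Pi_I R^\rho\Pi_I\right\|_\infty$ by its trace, whereas the paper invokes $A\leq k\,\Delta(A)$ (equivalent elementary facts), and your preliminary modular outline for (c) (unitary invariance, branch monotonicity, quasi-convexity) is in the end superseded by the same direct computation the paper performs.
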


\begin{proof}
The upper estimate in~(a) can be deduced by remembering that for $k\times k$ positive matrices $A\geq 0$ the inequality $A\leq k \Delta(A)$ holds. Applying this to $A=\Pi_I R^\rho \Pi_I$ for some $I\subseteq [d]$ with $|I|=k$ and remembering that $R_{ii}^\rho\equiv 1$ for all $i$ yields $\Pi_I R^\rho \Pi_I \leq k \Pi_I$, implying that $\left\|\Pi_I R^\rho \Pi_I \right\|_\infty \leq k$.

Property~(b) can be deduced by putting together Eq.~\eqref{mu D max} and the variational representation in Eq.~\eqref{D max}, and immediately implies~(d). It is thus left to show~(c). Using the Kraus representation in Eq.~\eqref{SIO Kraus} for an SIO $\Lambda$ together with Eq.~\eqref{mu variational}, for any fixed $I\subseteq [d]$ such that $|I|=k$ we can write
\begin{align*}
\Pi_I \Lambda(\rho) \Pi_I &= \sum_\alpha \Pi_I U_{\pi_\alpha} D_\alpha \rho D_\alpha^* U_{\pi_\alpha}^\intercal \Pi_I \\
&= \sum_\alpha U_{\pi_\alpha} \Pi_{\pi_\alpha^{-1}(I)} D_\alpha \rho D_\alpha^* \Pi_{\pi_\alpha^{-1}(I)} U_{\pi_\alpha}^\intercal \\
&= \sum_\alpha U_{\pi_\alpha} D_\alpha \Pi_{\pi_\alpha^{-1}(I)} \rho\, \Pi_{\pi_\alpha^{-1}(I)} D_\alpha^* U_{\pi_\alpha}^\intercal \\
&\leq 2^{\mu_k(\rho)} \sum_\alpha U_{\pi_\alpha} D_\alpha \Delta(\rho) D_\alpha^* U_{\pi_\alpha}^\intercal \\
&= 2^{\mu_k(\rho)} \Delta\left( \sumno_\alpha U_{\pi_\alpha} D_\alpha \rho D_\alpha^* U_{\pi_\alpha}^\intercal\right) \\
&= 2^{\mu_k(\rho)} \Delta\left( \Lambda(\rho)\right) .
\end{align*}
Using once again Eq.~\eqref{mu variational}, we deduce that $\mu_k\left(\Lambda(\rho)\right)\leq \mu_k(\rho)$, which concludes the proof.
\end{proof}


\subsection{Some technical lemmata}

Before we proceed to explore some applications, we present two technical lemmata that will help us to evaluate the functions $\mu_k$ in certain circumstances. The first estimate deals with the case of a state that resembles closely a maximally coherent state.

\begin{lemma} \label{mu almost max coh lemma}
For all states $\rho$ in dimension $d$, one has that
\bb
\mu_d(\rho)\geq \log d + \log \braket{\Psi_d|\rho|\Psi_d}\, ,
\label{mu almost max coh}
\ee
for all maximally coherent states $\ket{\Psi_d}$ of size $d$.
\end{lemma}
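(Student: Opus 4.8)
The plan is to recall the variational characterisation of $\mu_d$ from Eq.~\eqref{mu variational}, specialised to the choice $I=[d]$ (which is allowed since $|[d]|=d\leq k=d$), and turn the operator inequality $\rho\leq 2^{\mu_d(\rho)}\Delta(\rho)$ into a scalar inequality by sandwiching it between $\bra{\Psi_d}$ and $\ket{\Psi_d}$ for an arbitrary maximally coherent state $\ket{\Psi_d}$.

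First I would note that by Eq.~\eqref{mu variational} (with $|I|=d$ forcing $I=[d]$, so $\Pi_I=\id$) we have $\rho\leq 2^{\mu_d(\rho)}\Delta(\rho)$ as an operator inequality. Next I would take expectation values on both sides in the state $\ket{\Psi_d}=\frac{1}{\sqrt d}\sum_j e^{i\theta_j}\ket{j}$, obtaining
\bb
\braket{\Psi_d|\rho|\Psi_d}\ \leq\ 2^{\mu_d(\rho)}\braket{\Psi_d|\Delta(\rho)|\Psi_d}\, .
\ee
The key computation is then that $\braket{\Psi_d|\Delta(\rho)|\Psi_d}=\frac1d\sum_{i}\rho_{ii}=\frac1d$, since $\Delta(\rho)$ is diagonal with entries $\rho_{ii}$ and $|\braket{\Psi_d|i}|^2=1/d$ for every $i$, while $\sum_i\rho_{ii}=\Tr\rho=1$. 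Substituting gives $\braket{\Psi_d|\rho|\Psi_d}\leq 2^{\mu_d(\rho)}/d$, i.e.\ $2^{\mu_d(\rho)}\geq d\braket{\Psi_d|\rho|\Psi_d}$, and taking logarithms yields exactly Eq.~\eqref{mu almost max coh}.

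There is no real obstacle here; the statement is essentially an immediate consequence of the definition once one uses $I=[d]$. The only point worth a moment's care is the standing assumption $\Delta(\rho)>0$: this guarantees that $\Delta(\rho)$ is invertible on the whole space so that $R^\rho$ and hence $\mu_d(\rho)$ are well-defined, and it makes the operator inequality $\rho\leq 2^{\mu_d(\rho)}\Delta(\rho)$ meaningful without support subtleties. If $\rho_{ii}=0$ for some $i$ the bound is vacuous anyway, since then $\braket{\Psi_d|\rho|\Psi_d}$ may be computed in the smaller subspace and the claimed inequality degenerates harmlessly; but under the running convention the clean argument above applies verbatim.
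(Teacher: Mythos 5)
Your proof is correct and is essentially the paper's argument in dual form: the paper lower-bounds $\left\|R^\rho\right\|_\infty$ by its quadratic form on the unit vector $\sqrt{d\,\Delta(\rho)}\ket{\Psi_d}$, which is exactly the same scalar inequality you obtain by sandwiching $\rho\leq 2^{\mu_d(\rho)}\Delta(\rho)$ (from Eq.~\eqref{mu variational} with $I=[d]$) between $\bra{\Psi_d}$ and $\ket{\Psi_d}$ and using $\braket{\Psi_d|\Delta(\rho)|\Psi_d}=1/d$. Nothing further is needed.
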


\begin{proof}
To estimate the norm $\left\|R^\rho\right\|_\infty$ we evaluate the overlap of the operator $R^\rho$ with the normalised vector $\sqrt{d\, \Delta(\rho)}\ket{\Psi_d}$. A simple computation yields
\begin{align*}
\mu_d(\rho) &= \log \left\| R^\rho \right\|_\infty \\
&= \log \left\| \Delta(\rho)^{-1/2}\rho\, \Delta(\rho)^{-1/2} \right\|_\infty \\
&\geq \log d \braket{\Psi_d | \Delta(\rho)^{1/2} \!\Delta(\rho)^{-1/2}\rho\, \Delta(\rho)^{-1/2} \!\Delta(\rho)^{1/2} | \Psi_d} \\
&= \log  d \braket{\Psi_d | \rho | \Psi_d} \\
&= \log d + \log \braket{\Psi_d | \rho | \Psi_d}\, ,
\end{align*}
as claimed.
\end{proof}

We now want to establish an upper bound to quantify the intuitive fact that $\mu_k(\rho)$ grows slower than $\log k$ when $k$ becomes larger than the maximal size of a rank-one principal submatrix of $\rho$ with non-vanishing diagonal, denoted by $l(\rho)$:
\bb
l(\rho) \coloneqq \max\left\{ \rk \left(\Pi_I \Delta(\rho) \Pi_I \right)\!:\, I\!\subseteq\! [d],\, \rk\left( \Pi_I \rho\Pi_I \right) = 1 \right\} .
\label{l}
\ee
The following alternate characterisation of $l(\rho)$ will be helpful in the following.

\begin{lemma} \label{l max 1s row lemma}
Let $\rho$ be a state in dimension $d$ such that $\Delta(\rho)>0$. The function $l(\rho)$ in~\eqref{l} can also be computed as the maximum number of entries of modulus $1$ in a row of the matrix $R^\rho$ in~\eqref{R rho}. In formula,
\bb
l(\rho) = \max_{i\in [d]} \left| \left\{ j\in [d]:\, | R^\rho_{ij}| = 1 \right\}\right| .
\label{l max 1s row}
\ee
\end{lemma}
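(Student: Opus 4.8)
The claim is that $l(\rho)$, defined via ranks of principal submatrices in Eq.~\eqref{l}, equals the maximum number of unit-modulus entries in any row of $R^\rho$. Both sides count essentially the same combinatorial object through the lens of Lemma~\ref{G rho cliques lemma}, so the plan is to reduce everything to the clique structure of the graph $G_\rho$. First I would recall that, since $R^\rho_{ii}=1$ for all $i$ (as $\Delta(\rho)>0$) and $|R^\rho_{ij}|\leq 1$ by positivity of $R^\rho$ together with the $2\times 2$ principal minor condition, a pair $(i,j)$ contributes a modulus-$1$ off-diagonal entry exactly when $(i,j)\in E_\rho$. Hence the right-hand side of Eq.~\eqref{l max 1s row} is $\max_{i\in[d]} |\{j: (i,j)\in E_\rho\}|$, which by Lemma~\ref{G rho cliques lemma} equals $\max_{s\in\pazocal{S}} |I_s|$, the size of the largest connected component (clique) of $G_\rho$.

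Next I would evaluate the left-hand side. Let $I\subseteq[d]$ with $\rk(\Pi_I\rho\Pi_I)=1$, so that $\Pi_I\rho\Pi_I = \Tr[\Pi_I\rho]\,\ketbra{\psi}$ for some unit vector $\ket\psi$ supported on $I$; since $\Delta(\rho)>0$, every coordinate of $\ket\psi$ indexed by $I$ is nonzero. I claim $I$ must be contained in a single $I_s$. Indeed, $\rk(\Pi_I\rho\Pi_I)=1$ forces $|\rho_{ij}|=\sqrt{\rho_{ii}\rho_{jj}}$ for all $i,j\in I$ (equality in Cauchy--Schwarz for the rank-one matrix), hence $(i,j)\in E_\rho$ for all $i,j\in I$, so $I$ is a clique in $G_\rho$ and therefore lies inside one component $I_s$ by Lemma~\ref{G rho cliques lemma}. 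Conversely, for any $s$, taking $I=I_s$ gives $\rk(\Pi_{I_s}\rho\Pi_{I_s})=1$ by the last statement of Lemma~\ref{G rho cliques lemma}, and $\rk(\Pi_{I_s}\Delta(\rho)\Pi_{I_s})=|I_s|$ since $\Delta(\rho)>0$. Thus the maximum in Eq.~\eqref{l} is achieved at $I=I_s$ for the largest clique, giving $l(\rho) = \max_{s\in\pazocal{S}}|I_s|$, which matches the right-hand side computed above.

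The only step requiring any real care is the equivalence $\rk(\Pi_I\rho\Pi_I)=1 \iff I$ is a clique in $G_\rho$; everything else is bookkeeping with definitions. For that I would argue as follows: writing $\Pi_I\rho\Pi_I = v v^\dagger$ (up to a scalar) with $v$ supported on $I$ and having all $I$-coordinates nonzero (forced by $\Delta(\rho)>0$ and the fact that the diagonal entries $\rho_{ii}>0$ must appear as $|v_i|^2$), one gets $\rho_{ij}=v_i\bar v_j$, hence $|\rho_{ij}|=|v_i||v_j|=\sqrt{\rho_{ii}\rho_{jj}}$ for all $i,j\in I$; the converse (clique $\Rightarrow$ rank one) is precisely the content of the ``state is pure'' clause of Lemma~\ref{G rho cliques lemma} applied to the subgraph, or can be recovered directly since a positive semidefinite matrix all of whose $2\times 2$ principal minors vanish is rank one. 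I would then simply assemble the two chains of equalities $l(\rho)=\max_s|I_s|=\max_i|\{j:|R^\rho_{ij}|=1\}|$ to conclude. No genuine obstacle is anticipated; the lemma is a structural restatement riding entirely on the previously established clique decomposition.
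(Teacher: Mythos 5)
Your proof is correct and follows essentially the same route as the paper: both reduce the statement to the clique decomposition of Lemma~\ref{G rho cliques lemma}, identifying each side with $\max_{s\in\pazocal{S}}|I_s|$, with the only cosmetic difference that you argue the rank-one-implies-clique step directly via a factorisation $\Pi_I\rho\Pi_I\propto vv^\dagger$ while the paper phrases it through $R^\rho$ (using $\rk(\Pi_I\rho\Pi_I)=\rk(\Pi_I R^\rho\Pi_I)$ since $\Delta(\rho)>0$). No gap to report.
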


\begin{proof}
Call $l'(\rho)$ the quantity defined by the r.h.s.\ of~\eqref{l max 1s row}. Use Lemma~\ref{G rho cliques lemma} to construct a partition $\{I_s\}_{s\in\pazocal{S}}$ of $[d]$ such that $|R^\rho_{ij}| =1$ if and only if there exists $s\in\pazocal{S}$ such that $i,j\in I_s$. It follows immediately that
\bbb
l'(\rho) = \max_{s\in\pazocal{S}} |I_s|\, .
\eee
Since $\Delta(\rho)>0$ is invertible and commutes with any incoherent projector $\Pi_I$, we have that $\rk\left( \Pi_I \rho\Pi_I\right) = \rk\left( \Pi_I R^\rho \Pi_I\right)$ and $\rk\left( \Pi_I \Delta(\rho)\Pi_I\right) = |I|$. The second claim of Lemma~\ref{G rho cliques lemma} guarantees that $\rk\left( \Pi_{I_s} \rho\Pi_{I_s}\right) = \rk\left( \Pi_{I_s} R^\rho\Pi_{I_s}\right) =1$ for all $s\in\pazocal{S}$. Comparing this with~\eqref{l} we deduce that $l(\rho)\geq \max_{s\in\pazocal{S}} |I_s| = l'(\rho)$. To finish the proof we need only to show that $l(\rho)\leq l'(\rho)$. This descends from the fact that if $I\subseteq [d]$ satisfies $\rk\left( \Pi_I\rho\Pi_I\right) = \rk\left( \Pi_I R^\rho \Pi_I \right) =1$ then necessarily $|R^\rho_{ij}|=1$ for all $i,j\in I$, hence $|I|\leq l'(\rho)$.
\end{proof}

Another useful definition is as follows:
\bb
\lambda(\rho) \coloneqq \max\left\{ |R^\rho_{ij}|:\ 1\leq i<j\leq d,\ |R^\rho_{ij}|<1 \right\} ,
\label{lambda}
\ee
where we put $\lambda(\rho)=0$ if the set on the r.h.s.\ is empty (which happens iff $\rho$ is pure and $\Delta(\rho)>0$). The quantity $\lambda$ is closely related to the maximal coherence $\eta$ introduced in~\cite{bound-coherence}:
\bb
\eta(\rho) \coloneqq \max\left\{ |R^\rho_{ij}|:\ 1\leq i<j\leq d \right\} .
\label{eta}
\ee
By looking at the two definitions it is easy to see that: (i)~$0\leq \lambda(\rho)\leq \eta(\rho)\leq 1$ for all states $\rho$; (ii)~$\lambda(\rho)<1$; (iii)~it holds that $\eta(\rho) = \lambda(\rho)$ provided that $\eta(\rho)<1$, while there are examples of states for which $1=\eta(\rho)>\lambda(\rho)$. Moreover, (iv)~it holds that
\bb
\lambda(\rho)=0\quad \Longleftrightarrow\quad \rho =\widebar{\rho}\, ,
\label{lambda = 0}
\ee
while $\eta(\rho)=0$ iff $\rho=\Delta(\rho)$. Finally, (v)~maximal and quintessential coherence are related by the fact that $\eta(\rho)<1$ iff $Q(\rho)=0$. It is maybe less straightforward to see that $\lambda$ exhibits a `tensorisation property' very similar to that satisfied by $\eta$ and proven in~\cite{bound-coherence}.

\begin{lemma} \label{lambda tensorisation lemma}
For all pairs of states $\rho,\sigma$, the quantifier $\lambda$ of Eq.~\eqref{lambda} obeys the following tensorisation property:
\bb
\lambda(\rho\otimes\sigma) = \max\{ \lambda(\rho), \lambda(\sigma) \} .
\label{lambda tensorisation}
\ee
\end{lemma}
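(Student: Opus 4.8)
The plan is to reduce the statement to an understanding of the entries of $R^{\rho\otimes\sigma}$ in terms of those of $R^\rho$ and $R^\sigma$. First I would observe that $\Delta(\rho\otimes\sigma) = \Delta(\rho)\otimes\Delta(\sigma)$, so that, on the support,
\bbb
R^{\rho\otimes\sigma} = \big(\Delta(\rho)\otimes\Delta(\sigma)\big)^{-1/2} (\rho\otimes\sigma)\big(\Delta(\rho)\otimes\Delta(\sigma)\big)^{-1/2} = R^\rho \otimes R^\sigma .
\eee
Indexing the computational basis of the tensor product by pairs $(i,i')$, this gives the clean entrywise identity $R^{\rho\otimes\sigma}_{(i,i'),(j,j')} = R^\rho_{ij}\, R^\sigma_{i'j'}$, and in particular $\big|R^{\rho\otimes\sigma}_{(i,i'),(j,j')}\big| = |R^\rho_{ij}|\,|R^\sigma_{i'j'}|$. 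Since all these moduli lie in $[0,1]$ (this is part~(i) of the list following Eq.~\eqref{eta}, equivalently the fact that $R^\rho_{ii}=1$ and positive semidefiniteness forces the $2\times2$ principal minors to be nonnegative), a product of two such moduli equals $1$ if and only if both factors equal $1$.

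Next I would use this to compute $\lambda(\rho\otimes\sigma)$ from its definition~\eqref{lambda} as the largest modulus of an off-diagonal entry of $R^{\rho\otimes\sigma}$ that is strictly less than $1$. For the inequality $\lambda(\rho\otimes\sigma)\geq \max\{\lambda(\rho),\lambda(\sigma)\}$: pick indices $i<j$ achieving $\lambda(\rho)$, i.e.\ $|R^\rho_{ij}| = \lambda(\rho) < 1$, and pick any diagonal index $i'$ with $\sigma_{i'i'}>0$ (so $R^\sigma_{i'i'}=1$); then the pair $\big((i,i'),(j,i')\big)$ is off-diagonal, has modulus $|R^\rho_{ij}|\cdot 1 = \lambda(\rho)$, and this is $<1$, so it is an admissible competitor in~\eqref{lambda} and hence $\lambda(\rho\otimes\sigma)\geq\lambda(\rho)$; symmetrically $\lambda(\rho\otimes\sigma)\geq\lambda(\sigma)$. (The edge case where $\rho$ is pure with full-support diagonal, so $\lambda(\rho)=0$, is trivially consistent.) For the reverse inequality: take an off-diagonal pair $\big((i,i'),(j,j')\big)$ with modulus $|R^\rho_{ij}|\,|R^\sigma_{i'j'}| < 1$. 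By the remark above, at least one of the two factors is $<1$; say $|R^\rho_{ij}|<1$ (the other case is symmetric). If $i\neq j$ this already forces $|R^\rho_{ij}|\leq\lambda(\rho)$, hence $|R^\rho_{ij}|\,|R^\sigma_{i'j'}|\leq\lambda(\rho)\leq\max\{\lambda(\rho),\lambda(\sigma)\}$. If instead $i=j$, then $R^\rho_{ij}=R^\rho_{ii}\in\{0,1\}$; it cannot be $1$ by assumption, so it is $0$ — but then the whole product is $0$, which is also $\leq\max\{\lambda(\rho),\lambda(\sigma)\}$. Taking the supremum over all such pairs yields $\lambda(\rho\otimes\sigma)\leq\max\{\lambda(\rho),\lambda(\sigma)\}$, and the two bounds together give~\eqref{lambda tensorisation}.

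The only genuinely delicate point — the "main obstacle", though a mild one — is the careful handling of the pair $i=j$ (equivalently $i'=j'$) versus $i\neq j$ in the off-diagonal entries of the product: an entry of $R^{\rho\otimes\sigma}$ can be off-diagonal in the product basis while being "diagonal in one factor", and one must check that such entries never beat $\max\{\lambda(\rho),\lambda(\sigma)\}$; the observation that a diagonal entry of $R^\rho$ is either $0$ or $1$ closes this gap. Everything else is bookkeeping with the factorisation $R^{\rho\otimes\sigma}=R^\rho\otimes R^\sigma$ and the definition of $\lambda$; one should also note in passing that the full-support assumption $\Delta(\rho),\Delta(\sigma)>0$ (invoked throughout this section per the \textbf{Note}) guarantees $\Delta(\rho\otimes\sigma)>0$, so $R^{\rho\otimes\sigma}$ is defined on the whole space and the argument needs no further restriction of supports.
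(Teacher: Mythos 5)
Your proposal is correct and follows essentially the same route as the paper's proof: both rest on the entrywise factorisation $\big|R^{\rho\otimes\sigma}_{(i,l),(j,m)}\big| = |R^\rho_{ij}|\,|R^\sigma_{lm}|$ together with $R^\rho_{ii}=R^\sigma_{ll}=1$, a case analysis over which index pair is off-diagonal, and a separate remark for the pure-state (empty-set) convention. Your write-up is merely more explicit about the ``diagonal-in-one-factor'' entries, which is a fine elaboration of the same argument.
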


\begin{proof}
The argument is very similar to that presented in~\cite{bound-coherence} for the maximal coherence. Assume without loss of generality that $\Delta(\rho)$ and $\Delta(\sigma)$, albeit of possibly different sizes $d$ and $d'$, are both invertible. Then rows and columns of $\rho\otimes \sigma$ are indexed by pairs $(i,l)$, where $1\leq i\leq d$ and $1\leq l\leq d'$, and $(i,l)\neq (j,m)$ iff either $i\neq j$ or $l\neq m$. Since $R^\rho_{ii}=R^\sigma_{ll}=1$ for all $i$ and $l$, the maximum of $\big|R^{\rho\otimes \sigma}_{(i,l), (j,m)}\big| = |R^\rho_{ij}|\, |R^\sigma_{lm}|$ over pairs $(i,l)\neq (j,m)$ is clearly achieved either when $i=j$ (yielding $\lambda(\sigma)$) or when $l=m$ (yielding $\lambda(\rho)$).

When the sets on the r.h.s.\ of Eq.~\eqref{lambda} are empty for both $\rho$ and $\sigma$, which are then pure, according to our conventions we have $\lambda(\rho)=\lambda(\sigma)=0=\lambda(\rho\otimes \sigma)$, where the last equality follows because also $\rho\otimes \sigma$ is pure.
\end{proof}

We can now prove the following.

\begin{lemma} \label{mu and l lemma}
For a $d$-dimensional state $\rho$ and all integers $1\leq k\leq d$ one has that
\bb
\mu_k(\rho)\leq \log \left[l(\rho) + \lambda(\rho) (k - l(\rho) ) \right] .
\label{mu and l}
\ee
\end{lemma}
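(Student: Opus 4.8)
The plan is to bound $\big\|\Pi_I R^\rho \Pi_I\big\|_\infty$ for an arbitrary $I\subseteq[d]$ with $|I|=k$, by testing against the optimal eigenvector and splitting the quadratic form according to the structure of the graph $G_\rho$. Write $M \coloneqq \Pi_I R^\rho \Pi_I$, a $k\times k$ positive semidefinite matrix with all diagonal entries equal to $1$ (recall $R^\rho_{ii}=1$ whenever $\rho_{ii}>0$, and we are assuming $\Delta(\rho)>0$). Let $x$ be the (normalised) top eigenvector of $M$, so that $\mu_k(\rho) = \log \langle x | M | x\rangle = \log \sum_{i,j\in I} \overline{x_i}\, R^\rho_{ij}\, x_j$. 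We split this sum into the diagonal part, the part over off-diagonal pairs $(i,j)\in E_\rho$ (where $|R^\rho_{ij}|=1$), and the part over off-diagonal pairs $(i,j)\notin E_\rho$ (where $|R^\rho_{ij}|\leq \lambda(\rho)$ by definition of $\lambda$ in Eq.~\eqref{lambda}).

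First I would handle the ``$E_\rho$ part''. By Lemma~\ref{G rho cliques lemma} the set $I$ decomposes as a disjoint union $I = \bigsqcup_s (I\cap I_s)$ over the cliques $I_s$, and the block of $R^\rho$ on each clique $I_s$ is rank one — indeed $\Pi_{I_s}R^\rho\Pi_{I_s}$ is the all-phases rank-one matrix $\ketbra{v_s}$ with $|(v_s)_i|=1$ for $i\in I_s$, since $\frac{\Pi_{I_s}\rho\Pi_{I_s}}{\Tr[\Pi_{I_s}\rho]}$ is pure and has full diagonal support inside $I_s$. Hence $\sum_{i,j\in I\cap I_s}\overline{x_i}R^\rho_{ij}x_j = \big|\sum_{i\in I\cap I_s}\overline{x_i}(v_s)_i\big|^2 \leq |I\cap I_s|\sum_{i\in I\cap I_s}|x_i|^2$ by Cauchy–Schwarz, and since $|I\cap I_s|\leq l(\rho)$ by the characterisation of $l$ in Lemma~\ref{l max 1s row lemma} (the size of the largest clique), summing over $s$ gives that the diagonal-plus-$E_\rho$ contribution is at most $l(\rho)\sum_{i\in I}|x_i|^2 = l(\rho)$. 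For the ``non-$E_\rho$ part'', each such off-diagonal entry has modulus at most $\lambda(\rho)$, and there are at most $k(k-1)$ ordered such pairs, but a sharper count comes from noting that the number of off-diagonal ordered pairs \emph{inside} the cliques is $\sum_s |I\cap I_s|(|I\cap I_s|-1) \geq 0$, so the number of off-diagonal ordered pairs \emph{outside} the cliques is at most $k(k-1) - \sum_s|I\cap I_s|(|I\cap I_s|-1)$. Bounding $\big|\sum_{(i,j)\notin E_\rho,\ i\neq j}\overline{x_i}R^\rho_{ij}x_j\big| \leq \lambda(\rho)\sum_{i\neq j}|x_i||x_j| \leq \lambda(\rho)(k-1)$ via Cauchy–Schwarz is the crude route; combining it with the clique bound should produce $\langle x|M|x\rangle \leq l(\rho) + \lambda(\rho)(k - l(\rho))$ after collecting terms.

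The main obstacle I anticipate is getting the $\lambda(\rho)(k-l(\rho))$ coefficient \emph{exactly} rather than a weaker $\lambda(\rho)(k-1)$ or $\lambda(\rho)\,k$ bound. The natural extremal configuration — which the inequality is presumably tight on — is when $I$ is a disjoint union of one clique of size $l(\rho)$ together with $k-l(\rho)$ singleton vertices, all mutually connected with weight exactly $\lambda(\rho)$; on such a configuration $R^\rho$ restricted to $I$ is a rank-one clique block padded by a $\lambda$-scaled version of the all-ones matrix, whose top eigenvalue is computable in closed form and equals $l(\rho)+\lambda(\rho)(k-l(\rho))$ when the clique phases and the $\lambda$-block align. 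To make the bound match this, rather than the loose counting above, I would treat the quadratic form more carefully: replace $M$ by the ``worst case'' matrix $\widehat M$ obtained by (a) replacing every within-clique block by the all-phases rank-one matrix (this only increases $\langle x|M|x\rangle$ after a suitable phase rotation, by Cauchy–Schwarz as above) and (b) replacing every off-clique entry by $\lambda(\rho)$ times a phase chosen to align with $x$; then $\langle x|M|x\rangle \leq \langle y|\widehat M|y\rangle$ where $y_i \coloneqq |x_i|$, and $\widehat M = \lambda(\rho)(J_k - \bigoplus_s J_{|I\cap I_s|}) + \bigoplus_s J_{|I\cap I_s|}$ with $J_m$ the $m\times m$ all-ones matrix. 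The operator norm of this block-structured nonnegative matrix is maximised, for fixed $k$ and fixed largest block $l(\rho)$, by the one-big-block-plus-singletons configuration, and a direct eigenvalue computation there yields exactly $l(\rho)+\lambda(\rho)(k-l(\rho))$; bounding $\|\widehat M\|_\infty$ by this quantity for \emph{all} admissible block patterns with blocks of size $\leq l(\rho)$ summing to $k$ is the one genuinely computational lemma, provable e.g.\ by a convexity/majorisation argument in the block sizes or by Perron–Frobenius directly on the reduced $|\pazocal{S}|\times|\pazocal{S}|$ weighted graph. Taking logarithms then gives Eq.~\eqref{mu and l}, and the case $\Delta(\rho)$ not full rank reduces to the full-rank case by the reduction in the Note preceding Lemma~\ref{G rho cliques lemma}.
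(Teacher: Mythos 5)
Your overall strategy (bounding the quadratic form $\langle x|\Pi_I R^\rho\Pi_I|x\rangle$ by splitting it into within-clique blocks, handled by Cauchy--Schwarz, and off-clique entries of modulus at most $\lambda(\rho)$, then comparing with the structured matrix $\widehat M$) is sound and genuinely different from the paper's, but as written the proof has a gap precisely at the step you flag as ``the one genuinely computational lemma'': the bound $\|\widehat M\|_\infty\leq l(\rho)+\lambda(\rho)(k-l(\rho))$ is asserted, not proven, and the heuristic you offer in its support is wrong. For fixed $k$ and largest block $l$, the one-big-block-plus-singletons pattern does \emph{not} maximise the norm, and a direct eigenvalue computation there does \emph{not} give $l+\lambda(k-l)$: for $k=3$, $l=2$, $0<\lambda<1$ the top eigenvalue of $\widehat M$ is $\tfrac12\bigl(3+\sqrt{1+8\lambda^2}\,\bigr)$, which is strictly below $2+\lambda$. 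The bound is instead saturated (or approached) when \emph{all} blocks have size $l$, e.g.\ when $l$ divides $k$ and one tests with the vector that is constant on $[k]$. Relatedly, the crude count in your first paragraph does not ``collect'' into the claimed bound; it only yields $l(\rho)+\lambda(\rho)(k-1)$, which is weaker whenever $l(\rho)>1$.

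The good news is that the missing lemma is true and has a one-line proof that bypasses any optimisation over block patterns: writing $m_s\coloneqq|I\cap I_s|$, one has $\widehat M=(1-\lambda)\bigoplus_s J_{m_s}+\lambda J_k$ with $0\leq\lambda\coloneqq\lambda(\rho)<1$, so by the triangle inequality $\|\widehat M\|_\infty\leq(1-\lambda)\max_s m_s+\lambda k\leq(1-\lambda)\,l(\rho)+\lambda k=l(\rho)+\lambda\bigl(k-l(\rho)\bigr)$ (this also covers $k<l(\rho)$ and $\lambda(\rho)=0$ automatically). With that substitution your argument becomes a complete and correct alternative proof. For comparison, the paper's own proof is shorter still and avoids eigenvectors altogether: it applies Ger\v{s}gorin's circle theorem to $\Pi_I R^\rho\Pi_I$, so that $\mu_k(\rho)\leq\max_{|I|\leq k}\max_{i\in I}\log\sum_{j\in I}|R^\rho_{ij}|$, and then uses Lemma~\ref{l max 1s row lemma} to note that each row of $R^\rho$ contains at most $l(\rho)$ entries of modulus $1$ while all others are at most $\lambda(\rho)$, giving the row-sum bound $l(\rho)+\lambda(\rho)(k-l(\rho))$ directly; your route, once repaired, trades that simplicity for a quadratic-form picture that makes the near-extremal configurations more transparent.
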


\begin{proof}
When $k< l(\rho)$ the claim is trivial, because the r.h.s.\ of Eq.~\eqref{mu and l} is larger than $\log k$, and $\mu_k(\rho)\leq \log k$ always holds by Lemma~\ref{properties mu lemma}(a). In what follows we therefore assume that $k\geq l(\rho)$.

As usual, we can also suppose without loss of generality that $\Delta(\rho)>0$. Ger\v{s}gorin's theorem (\cite{Gershgorin} or~\cite[Theorem~6.1.1]{HJ1}) implies that all eigenvalues of $\Pi_I R^\rho \Pi_I$ lie in the region of the complex plane enclosed in a circle centred on $1$ and having radius 
\bbb
\max_i \sum_{j\neq i} \left|(\Pi_I R^\rho \Pi_I)_{ij}\right| = \max_{i\in I} \sum_{j\in I,\, j\neq i} |R^\rho_{ij}|\, .
\eee
Since $\mu_k(\rho)$ is nothing but the logarithm of the maximal eigenvalue of some $\Pi_I R^\rho \Pi_I$, we can estimate it as
\begin{align*}
\mu_k(\rho) &\leq \max_{|I|\leq k} \max_{i\in I} \log\left[1+\sumno_{j\in I,\, j\neq i} |R^\rho_{ij}|\right] \\
&\leq \max_{|I|\leq k} \max_{i\in I} \log\left[\sumno_{j\in I} |R^\rho_{ij}|\right] .
\end{align*}
If $k\geq l(\rho)$, by Lemma~\ref{l max 1s row lemma} in any fixed row of $R^\rho$ there are at most $l(\rho)$ entries of modulus $1$, while all others have modulus at most $\lambda(\rho)$. Hence, when $|I|\leq k$ and $i\in I$ one has that $\sumno_{j\in I} |R^\rho_{ij}|\leq l(\rho) + \lambda(\rho) (k-l(\rho))$, which inserted into the above estimate yields Eq.~\eqref{mu and l} and completes the proof.
\end{proof}

\subsection{Smoothing}

We now discuss smoothed versions of the monotones $\mu_k$ introduced in Eq.~\eqref{mu}. For a generic $\epsilon>0$, let us define
\bb
\mu_k^\epsilon(\rho)\coloneqq \min_{\sigma\in B_\epsilon(\rho)} \mu_k(\sigma)\, ,
\label{mu epsilon}
\ee
where $B_\epsilon(\rho)$ is the set of states at trace norm distance at most $\epsilon$ from $\rho$, i.e.
\bb
B_\epsilon(\rho)\coloneqq \left\{ \sigma:\ \left\|\sigma - \rho\right\|_1\leq \epsilon \right\} ,
\label{ball B}
\ee
where $\sigma$ is a normalised density matrix.
Not surprisingly, the monotonicity of $\mu_k$ as established by Lemma~\ref{properties mu lemma}(c) ensures the following.

\begin{lemma} \label{mu epsilon monotone lemma}
For all positive integers $k$ and all $\epsilon>0$, the function $\mu_k^\epsilon$ in Eq.~\eqref{mu epsilon} is an SIO monotone.
\end{lemma}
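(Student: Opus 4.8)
The statement to prove is Lemma~\ref{mu epsilon monotone lemma}: that for every positive integer $k$ and every $\epsilon>0$, the smoothed quantity $\mu_k^\epsilon$ defined in Eq.~\eqref{mu epsilon} is an SIO monotone, i.e.\ $\mu_k^\epsilon(\Lambda(\rho))\leq \mu_k^\epsilon(\rho)$ for every SIO $\Lambda$. The plan is to reduce this to the unsmoothed monotonicity already established in Lemma~\ref{properties mu lemma}(c), using the fact that SIO channels are (completely positive and) trace-preserving, hence contractive with respect to the trace norm. This is the standard ``smoothing preserves monotonicity'' argument.

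\textbf{Key steps.} First I would fix an SIO $\Lambda$ and an input state $\rho$, and let $\sigma^\star\in B_\epsilon(\rho)$ be a state achieving the minimum in the definition of $\mu_k^\epsilon(\rho)$, so that $\mu_k^\epsilon(\rho)=\mu_k(\sigma^\star)$ and $\|\sigma^\star-\rho\|_1\leq\epsilon$. (The minimum is attained because $B_\epsilon(\rho)$ is compact and $\mu_k$ is lower semicontinuous by Lemma~\ref{properties mu lemma}(d); if one prefers to avoid attainment one may instead work with a sequence of near-optimisers and pass to the limit at the end.) Second, I would observe that $\Lambda(\sigma^\star)$ is a legitimate state and that, since $\Lambda$ is trace-preserving and completely positive, it is a contraction in trace norm:
\bb
\left\| \Lambda(\sigma^\star) - \Lambda(\rho) \right\|_1 \leq \left\| \sigma^\star - \rho \right\|_1 \leq \epsilon\, ,
\nonumber
\ee
so that $\Lambda(\sigma^\star)\in B_\epsilon(\Lambda(\rho))$. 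Third, I would apply the unsmoothed SIO monotonicity of $\mu_k$ from Lemma~\ref{properties mu lemma}(c) to get $\mu_k(\Lambda(\sigma^\star))\leq\mu_k(\sigma^\star)$. Putting the pieces together:
\bb
\mu_k^\epsilon\left(\Lambda(\rho)\right) = \min_{\tau\in B_\epsilon(\Lambda(\rho))} \mu_k(\tau) \leq \mu_k\left(\Lambda(\sigma^\star)\right) \leq \mu_k(\sigma^\star) = \mu_k^\epsilon(\rho)\, ,
\nonumber
\ee
where the first inequality uses $\Lambda(\sigma^\star)\in B_\epsilon(\Lambda(\rho))$. This is exactly the claimed monotonicity.

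\textbf{Main obstacle.} There is essentially no obstacle here: the argument is a routine consequence of trace-norm contractivity of channels combined with the already-proven Lemma~\ref{properties mu lemma}(c), which is why the text introduces the lemma with ``Not surprisingly''. The only mild subtlety worth a sentence is the one flagged above --- making sure the infimum in Eq.~\eqref{mu epsilon} is actually achieved (or else replacing ``$=\mu_k(\sigma^\star)$'' by an ``$\leq\mu_k(\sigma)+\delta$'' argument and letting $\delta\to 0$) --- but lower semicontinuity of $\mu_k$ together with compactness of the $\epsilon$-ball $B_\epsilon(\rho)$ settles this cleanly, so in the write-up I would simply invoke attainment.
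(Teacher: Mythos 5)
Your proof is correct and follows essentially the same route as the paper: both arguments rest on trace-norm contractivity of channels (giving $\Lambda(B_\epsilon(\rho))\subseteq B_\epsilon(\Lambda(\rho))$) combined with the unsmoothed monotonicity of Lemma~\ref{properties mu lemma}(c); you merely instantiate the chain of inequalities at an optimiser $\sigma^\star$ rather than writing it as a comparison of minima, and your remark on attainment via lower semicontinuity is a harmless extra precaution.
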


\begin{proof}
Let $\rho$ be a state and $\Lambda$ an SIO. Since quantum channels never increase the trace norm, one has that $\Lambda\left( B_\epsilon (\rho) \right) \subseteq B_\epsilon \left( \Lambda(\rho) \right)$. Using the monotonicity of $\mu_k$ under $\Lambda$ (Lemma~\ref{properties mu lemma}(c)), one obtains that
\begin{align*}
\mu_k^\epsilon \left( \Lambda(\rho)\right) &= \min_{\sigma\in B_\epsilon \left( \Lambda(\rho) \right)} \mu_k ( \sigma) \\
&\leq \min_{\sigma\in \Lambda\left( B_\epsilon (\rho) \right)} \mu_k ( \sigma) \\
&= \min_{\omega \in B_\epsilon (\rho) } \mu_k \left( \Lambda(\omega) \right) \\
&\leq \min_{\omega \in B_\epsilon (\rho) } \mu_k \left( \omega \right) \\
&= \mu_k^\epsilon(\rho)\, ,
\end{align*}
proving the claim.
\end{proof}

\section{Distillable coherence under SIO/PIO: converse} \label{sec converse}

The purpose of this section is to prove the converse part of Theorem~\ref{solution thm}, determining the SIO distillable coherence for all states and showing that it coincides with the quintessential coherence of Eq.~\eqref{Q}. The monotones $\mu_k$ and more precisely they smoothed versions $\mu_k^\epsilon$ will play a crucial role in our argument.

\subsection{Preliminaries}

We start by setting some notation. Given a state $\rho$, consider the family $\left\{I_s^\rho\right\}_{s\in\pazocal{S}^\rho}$ of disjoint subsets of $[d]$ that is associated with it via Lemma~\ref{G rho cliques lemma} (this is a partition of $[d]$ provided that $\Delta(\rho)>0$). Observe that we added a superscript to indicate its dependence on $\rho$. For any other state $\sigma$, we can then construct a random variable $S^\rho_\sigma$ whose probability distribution takes the form $P_{S^\rho_\sigma}(s)\coloneqq \Tr\left[\sigma \Pi_{I_s^\rho}\right]$. Clearly, $S^\rho_\sigma$ is a coarse-grained version of the random variable $J_\sigma$ with range $[d]$ distributed according to $P_{J_\sigma} (j)\coloneqq \sigma_{jj} = \braket{j|\sigma|j}$. A first important observation is that the quintessential coherence defined in Eq.~\eqref{Q} coincides with the conditional entropy of $J_\rho$ given $S^\rho_\rho$, in formula 
\bb
Q(\rho) = H(J_\rho |S^\rho_\rho) = H(J|S^\rho)_{\delta_\rho}\, .
\label{Q entropies}
\ee
Here, the rightmost side refers to the conditional entropy of $J$ given $S^\rho$ as computed on the probability distribution $\delta_\rho$ on the set $[d]$ defined by $\delta_\rho(j)\coloneqq \rho_{jj}$.
To see that~\eqref{Q entropies} holds, start by observing that by construction $S(\Delta(\rho)) = H(J_\rho)=H(J_\rho S^\rho_\rho)$, where the last equality follows because $S^\rho_\rho$ is a deterministic function of $J_\rho$. Moreover, because of Lemma~\ref{G rho cliques lemma} the state $\widebar{\rho}$ in~\eqref{rho bar} is block-diagonal, and each block is proportional to a pure state. Hence, its entropy evaluates to
\begin{align*}
S(\widebar{\rho}) &= S\left( \sumno_{s\in\pazocal{S}} \Pi_{I_s}\rho\Pi_{I_s}\right) \\
&= - \sum_{s\in\pazocal{S}} \Tr\left[ \Pi_{I_s} \rho\Pi_{I_s} \log\left( \Pi_{I_s}\rho\Pi_{I_s}\right)\right] \\
&= - \sum_{s\in\pazocal{S}} \Tr\left[ \rho \Pi_{I_s} \right] \log\Tr\left[\rho \Pi_{I_s} \right] \\
&= H(S^\rho_\rho) \, .
\end{align*}

Also the function $l$ in Eq.~\eqref{l} can be expressed in terms of these entropies. Namely, it is not difficult to show that
\bb
\log l(\rho) = H_{\max}(J_\rho |S^\rho_\rho) = H_{\max}(J|S^\rho)_{\delta_\rho}\, ,
\label{l entropies}
\ee
where for two classical random variables $X,Y$ with probability distribution $p=p_{XY}$ their conditional max entropy is given by $H_{\max}(X|Y) \coloneqq \max_{y} \log  | \supp p_{X|y} |$, with $p_{X|y}$ being the probability distribution of $X$ conditioned on $Y=y$, and $\supp$ denoting the support.

In what follows, we will find it useful, to look at the set $V_\epsilon(\rho)$ defined for a generic $\epsilon>0$ as
\bb
V_\epsilon(\rho) \coloneqq \left\{ \frac{\Pi_I \rho\Pi_I}{\Tr\left[\rho \Pi_I\right]}:\ I\subseteq [d] \right\} \cap B_\epsilon(\rho)
\label{V epsilon quantum}
\ee
in terms of the trace norm balls in Eq.~\eqref{ball B}. Via the gentle measurement lemma~\cite[Lemma~9]{VV1999}, $V_\epsilon(\rho)$ can be shown to include all post-measurement states obtained by making a binary incoherent measurement whose success probability on $\rho$ is sufficiently close to $1$, with the condition that said measurement has been successful. Observe that $V_\epsilon(\rho)$ is not a ball in the proper sense; indeed, it is always a finite set, and moreover $V_\epsilon(\rho)=\{\rho\}$ for all sufficiently small $\epsilon$ provided that $\Delta(\rho)>0$.

However, the following important features of $V_\epsilon(\rho)$ make it very important for applications.

\begin{enumerate}[(i)]

\item For all states $\sigma\in V_\epsilon(\rho)$, the families $\{I_s^\sigma\}_{s\in\pazocal{S}^\sigma}$ associated to them via Lemma~\ref{G rho cliques lemma} are very similar to each other. Namely,
\bb
\pazocal{S}^\sigma\subseteq \pazocal{S}^\rho \qquad \text{and} \qquad I^\sigma_s\subseteq I^\rho_s\quad \forall\ s\in\pazocal{S}^\sigma\, .
\ee
This practically implies that
\bb
\sigma\in V_\epsilon(\rho) \quad \Longrightarrow\quad S^\rho_\sigma = S^\sigma_\sigma\, ,
\label{crucial property V epsilon}
\ee
in the sense that the two random variables have the same effective range and the same probability distribution.

\item The monotone $\lambda$ defined in Eq.~\eqref{lambda} is also very well-behaved on the sets $V_\epsilon(\rho)$. Namely, it is not difficult to verify that
\bb
\sigma\in V_\epsilon(\rho) \quad \Longrightarrow\quad \lambda(\sigma)\leq \lambda(\rho)\, .
\label{lambda on V epsilon}
\ee
In fact, since $\sigma\propto \Pi_I\rho\Pi_I$ for some $I\subseteq [d]$:
\begin{align*}
\lambda(\sigma) &= \max\left\{ | R^\sigma_{ij} | :\ 1\leq i<j\leq d,\ | R^\sigma_{ij} |<1 \right\} \\
&= \max\left\{ | R^\rho_{ij} | :\ i,j\in I,\ i\neq j,\ | R^\rho_{ij} | < 1 \right\} \\
&\leq \max\left\{ | R^\rho_{ij} | :\ 1\leq i<j\leq d,\ | R^\rho_{ij} |<1 \right\} \\
&= \lambda(\rho)\, .
\end{align*}
Observe that the above inequality remains valid also when there are no pairs $(i,j)$ satisfying $|R^\sigma_{ij}|<1$. Indeed, in that case $\sigma$ is necessarily pure, and we set by convention $\lambda(\sigma)=0$, while $\lambda(\rho)\geq 0$ always holds by construction. 

\end{enumerate}

Although until now we have been concerned mostly with the quantum case, the sets $V_\epsilon$ can be defined in pretty much the same way in the classical setting as well. Namely, for an arbitrary probability distribution $p$ on the set $[d]$, intended as a vector $p\in \R^d$, one can construct
\bb
V_\epsilon(p) \coloneqq \left\{ \frac{\Pi_I \, p}{\sum_{i\in I} p_i}:\ I\subseteq [d] \right\} \cap B_\epsilon(p)\, ,
\label{V epsilon classical}
\ee
where $B_\epsilon(p)\coloneqq \left\{ q\in \R^d: |p-q|_1\leq \epsilon\right\}$, with $|\cdot|_1$ being the $\ell_1$-norm. 

The classical and quantum constructions are closely related to each other. To make this statement precise, consider an optimisation problem of the form $\min_{\sigma \in V_\epsilon(\rho)} f(\delta_\sigma)$, where $f$ is a real-valued function defined on the set of probability distributions over $d$ elements. We could try to compare this to its fully classical version $\min_{q\in V_\epsilon(\delta_\rho)} f(q)$. The following lemma shows that this is in some sense possible, indeed.

\begin{lemma} \label{optimisation comparison lemma}
For all states $\rho$, all $\epsilon>0$, and all real-valued functions $f$ defined on the set of probability distributions over $d$ elements, it holds that
\bbb
\min_{q \in V_{\epsilon}(\delta_\rho)} f(q) \leq \min_{\sigma \in V_\epsilon(\rho)} f(\delta_\sigma) \leq \min_{q \in V_{\epsilon^2/4}(\delta_\rho)} f(q) \, ,
\eee
where as usual $\delta_\omega$ denotes the diagonal of a $d$-dimensional quantum state $\omega$, intended as a vector in $\R^d$.
\end{lemma}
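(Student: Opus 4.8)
The plan is to establish the two inequalities separately, in each case by exhibiting the appropriate witness.

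For the left inequality $\min_{q \in V_{\epsilon}(\delta_\rho)} f(q) \leq \min_{\sigma \in V_\epsilon(\rho)} f(\delta_\sigma)$, I would argue that every $\delta_\sigma$ arising from a $\sigma \in V_\epsilon(\rho)$ already lies in $V_\epsilon(\delta_\rho)$. Indeed, if $\sigma = \Pi_I \rho \Pi_I / \Tr[\rho \Pi_I]$ with $\|\sigma - \rho\|_1 \leq \epsilon$, then its diagonal is $\delta_\sigma = \Pi_I \delta_\rho / \sum_{i \in I} (\delta_\rho)_i$, which is of exactly the form appearing in the definition \eqref{V epsilon classical} of $V_\epsilon(\delta_\rho)$. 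It remains only to check that $\delta_\sigma \in B_\epsilon(\delta_\rho)$, i.e.\ that $|\delta_\sigma - \delta_\rho|_1 \leq \epsilon$; this follows because the dephasing map $\Delta$ is trace-norm contractive (it is a quantum channel), so $|\delta_\sigma - \delta_\rho|_1 = \|\Delta(\sigma) - \Delta(\rho)\|_1 \leq \|\sigma - \rho\|_1 \leq \epsilon$. Hence the feasible set $\{\delta_\sigma : \sigma \in V_\epsilon(\rho)\}$ is contained in $V_\epsilon(\delta_\rho)$, and minimising $f$ over a larger set can only decrease the value.

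For the right inequality $\min_{\sigma \in V_\epsilon(\rho)} f(\delta_\sigma) \leq \min_{q \in V_{\epsilon^2/4}(\delta_\rho)} f(q)$, I would go in the opposite direction: take a minimiser $q \in V_{\epsilon^2/4}(\delta_\rho)$, say $q = \Pi_I \delta_\rho / \sum_{i \in I}(\delta_\rho)_i$ with $|q - \delta_\rho|_1 \leq \epsilon^2/4$, and lift the \emph{same} index set $I$ to the quantum side, setting $\sigma \coloneqq \Pi_I \rho \Pi_I / \Tr[\rho \Pi_I]$. By construction $\delta_\sigma = q$, so it suffices to verify that this $\sigma$ is feasible, i.e.\ $\sigma \in V_\epsilon(\rho)$, which amounts to $\|\sigma - \rho\|_1 \leq \epsilon$. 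The success probability of the binary incoherent measurement $\{\Pi_I, \id - \Pi_I\}$ on $\rho$ is $p \coloneqq \Tr[\rho \Pi_I] = \sum_{i \in I}(\delta_\rho)_i$, and the condition $|q - \delta_\rho|_1 \leq \epsilon^2/4$ forces $1 - p$ to be small: concretely, $1 - p = \sum_{i \notin I}(\delta_\rho)_i \leq |q - \delta_\rho|_1 \leq \epsilon^2/4$ once one observes that $q$ and $\delta_\rho$ agree up to rescaling on $I$ and $q$ vanishes off $I$, so the off-$I$ mass of $\delta_\rho$ is dominated by $|q-\delta_\rho|_1$. Then the gentle measurement lemma (\cite[Lemma~9]{VV1999}) gives $\|\sigma - \rho\|_1 \leq 2\sqrt{1-p} \leq 2\sqrt{\epsilon^2/4} = \epsilon$, so $\sigma \in V_\epsilon(\rho)$ and $f(\delta_\sigma) = f(q)$ attains the desired value within the left-hand minimisation.

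The main obstacle — and the only genuinely quantitative point — is the bookkeeping in the right inequality: one must carefully extract the bound $1 - \Tr[\rho\Pi_I] \leq \epsilon^2/4$ from the classical closeness $|q - \delta_\rho|_1 \leq \epsilon^2/4$ (paying attention to the normalisation by $p = \sum_{i\in I}(\delta_\rho)_i$, which is why a crude estimate might lose a factor of $2$) and then invoke gentle measurement with the correct constant to land exactly at $\epsilon$ rather than something larger. Everything else is a matter of matching up the defining forms of the classical and quantum $V_\epsilon$ sets and using that $\Delta$ is trace-norm nonincreasing. The asymmetry $\epsilon$ versus $\epsilon^2/4$ in the statement is precisely the footprint of this gentle-measurement step, and no sharper relation is to be expected from this argument.
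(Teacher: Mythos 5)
Your proof is correct and follows essentially the same route as the paper's: the left inequality via trace-norm contractivity of the dephasing map and the inclusion of the induced classical feasible set, and the right inequality by lifting the index set $I$, bounding $1-\Tr[\rho\Pi_I]\leq\epsilon^2/4$ from the classical $\ell_1$ closeness, and applying the gentle measurement lemma to get $\|\sigma-\rho\|_1\leq 2\sqrt{\epsilon^2/4}=\epsilon$. The only cosmetic difference is that the paper bounds $1-\Tr[\rho\Pi_I]$ by the on-$I$ part of $|q-\delta_\rho|_1$ while you use the off-$I$ mass; both pieces equal $1-\Tr[\rho\Pi_I]$, so the arguments coincide.
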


\begin{proof}
The first inequality follows trivially from the fact that 
\bbb
\left| \delta_\sigma - \delta_\rho \right|_1 = \left\|\Delta(\sigma) - \Delta(\rho)\right\|_1 \leq \left\|\rho - \sigma\right\|_1
\eee
because $\Delta(\cdot)$ is a quantum channel. Then, for all $\sigma=\frac{\Pi_I \rho \Pi_I}{\Tr[\rho \Pi_I]}\in V_\epsilon(\rho)$ one has that $\delta_\sigma = \frac{\Pi_I\, \delta_\rho}{\sum_{i\in I} \rho_{ii}}\in V_\epsilon(\delta_\rho)$, implying that $\min_{\sigma \in V_\epsilon(\rho)} f(\delta_\sigma)\geq \min_{q \in V_{\epsilon}(\delta_\rho)} f(q)$.

The second inequality is slightly less straightforward. Given $q = \frac{\Pi_I\, \delta_\rho}{\sum_{i\in I} \rho_{ii}} \in V_{\epsilon^2/4}(\delta_\rho)$, set
\bbb
\sigma \coloneqq \frac{\Pi_I \rho \Pi_I}{\Tr[\rho \Pi_I]}\, ,
\eee
so that $\delta_\sigma=q$. One has that
\begin{align*}
\Tr[\rho \Pi_I] &= \sum_{i\in I} \rho_{ii} \\
&= 1 - \sum_{i\in I} \left| q_{i} - (\delta_\rho)_i\right| \\
&\geq 1 - \left| q - \delta_\rho \right|_1 \\
&\geq 1-\frac{\epsilon^2}{4}\, .
\end{align*}

The gentle measurement lemma (see~\cite[Lemma~9]{VV1999} for the original version, and~\cite[Lemma~9.4.1]{MARK} for the one we use here) then ensures that
\bbb
\left\| \rho - \sigma\right\|_1 = \left\| \rho - \frac{\Pi_I \rho \Pi_I}{\Tr[\rho \Pi_I]}\right\|_1 \leq 2\sqrt{\frac{\epsilon^2}{4}} = \epsilon\, ,
\eee
i.e.\ $\sigma \in V_{\epsilon}(\rho)$. Hence, $\min_{q \in V_{\epsilon^2/4}(\delta_\rho)} f(q) \geq \min_{\sigma \in V_\epsilon(\rho)} f(\delta_\sigma)$.
\end{proof}

\subsection{A tweaked asymptotic equipartition property}

The standard \emph{smoothed conditional max entropy} can be defined for a pair of random variables $XY$ distributed according to $p$ as
\bb
H_{\max}^\epsilon \left(X|Y\right)_p \coloneqq \min_{q\in B_\epsilon(p)} H_{\max}\left(X|Y\right)_q\, .
\label{H max smoothed}
\ee
In terms of this quantity, the familiar form of the classical \emph{asymptotic equipartition property} (AEP) is the identity
\bb
\lim_{n\to \infty} \frac1n H_{\max}^\epsilon\left(X^n|Y^n\right)_{p^n} = H(X|Y)_p\qquad \forall\ \epsilon>0\, ,
\label{AEP standard}
\ee
where $X^n Y^n$ refers to $n$ i.i.d.\ copies of the pair of classical random variables $XY$ distributed according to $p$, the resulting product distribution being denoted with $p^n$. For a proof see for instance~\cite[Theorem~3.3.4 and Lemma~3.1.5]{RennerPhD}. Here we will not make use of Eq.~\eqref{AEP standard}. Instead, we will need a modified version of it, that features a minimisation not over $B_\epsilon(p)$ but over the smaller set $V_\epsilon(p)$ of Eq.~\eqref{V epsilon classical}. We thus define
\bb
\widetilde{H}_{\max}^\epsilon \left(X|Y\right)_p \coloneqq \min_{q\in V_\epsilon(p)} H_{\max}\left(X|Y\right)_q\, .
\label{H max tilde smoothed}
\ee

\begin{lemma}[Tweaked AEP] \label{tweaked AEP lemma}
For all pairs of classical random variables $XY$ distributed according to $p$, one has that
\bb
\lim_{n\to \infty} \frac1n \widetilde{H}_{\max}^\epsilon \left(X^n|Y^n\right)_{p^n} = H(X|Y)_p\qquad \forall\ \epsilon>0\, .
\label{tweaked AEP}
\ee
\end{lemma}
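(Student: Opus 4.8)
The plan is to sandwich the tweaked smoothed max-entropy $\widetilde{H}_{\max}^\epsilon$ between two standard smoothed max-entropies to which the ordinary AEP of Eq.~\eqref{AEP standard} applies. Since $V_\epsilon(p)\subseteq B_\epsilon(p)$, the trivial inclusion immediately gives $\widetilde{H}_{\max}^\epsilon(X|Y)_p\geq H_{\max}^\epsilon(X|Y)_p$, hence the lower bound
\bbb
\liminf_{n\to\infty} \tfrac1n \widetilde{H}_{\max}^\epsilon\left(X^n|Y^n\right)_{p^n} \geq \lim_{n\to\infty} \tfrac1n H_{\max}^\epsilon\left(X^n|Y^n\right)_{p^n} = H(X|Y)_p\, .
\eee
So the real work is the matching upper bound: I must exhibit, for each $n$, a distribution $q_n\in V_\epsilon(p^n)$ — that is, a \emph{restriction-and-renormalisation} $q_n \propto \Pi_{I_n} p^n$ to some subset $I_n\subseteq [d]^n$, with $|q_n - p^n|_1\leq \epsilon$ — whose conditional max-entropy $H_{\max}(X^n|Y^n)_{q_n}$ grows like $nH(X|Y)_p + o(n)$.

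The natural candidate for $I_n$ is a conditionally typical set. First I would fix $y$ and look at the conditional distribution $p_{X|y}$; by the standard AEP for the renormalised conditional law on the strongly (or weakly) typical set, there is a subset $T_{n,y}$ of sequences $x^n$ with $p_{X|y^n}(T_{n,y^n})\geq 1-\delta$ and $|T_{n,y^n}|\leq 2^{n(H(X|Y=\cdot)+\delta')}$ after appropriate averaging. Concretely I would take $I_n$ to be the jointly typical set $\{(x^n,y^n): (x^n,y^n)\ \text{$\epsilon'$-typical for } p_{XY}\}$, so that $p^n(I_n)\to 1$ and hence $|q_n-p^n|_1 = 2(1-p^n(I_n))\to 0$, which is below $\epsilon$ for $n$ large. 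The key point is that for the renormalised distribution $q_n$, each conditional slice $q_{n,X|y^n}$ is supported on $\{x^n:(x^n,y^n)\in I_n\}$, and the size of this slice for a typical $y^n$ is at most $2^{n(H(X|Y)_p+\eta)}$ by the conditional typicality bound. Since $H_{\max}(X^n|Y^n)_{q_n} = \max_{y^n}\log|\supp q_{n,X|y^n}|$, the maximum over $y^n$ of $\log$ of these slice sizes is where care is needed: atypical $y^n$ could in principle contribute large slices. The fix is to include in $I_n$ \emph{only} those $(x^n,y^n)$ for which $y^n$ is itself typical for $p_Y$ \emph{and} $(x^n,y^n)$ is jointly typical; then every surviving $y^n$-slice is automatically small, while we still retain probability mass $\geq 1-\delta$.

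The main obstacle I anticipate is precisely this control of $H_{\max}$ as a worst-case (not average) quantity over the conditioning variable: a single bad $y^n$ with a large surviving fibre would ruin the bound. Handling it requires building the restriction set $I_n$ as an intersection of "$y^n$ typical" with "$(x^n,y^n)$ jointly typical", and then verifying two things simultaneously — that the total probability killed is still $o(1)$ (so $q_n\in V_\epsilon(p^n)$ for large $n$), and that after this double restriction every nonempty fibre has cardinality at most $2^{n(H(X|Y)_p+\eta)}$. Both follow from the quantitative typicality estimates in, e.g., \cite[Ch.~3]{RennerPhD} or standard references, using $H(XY)_p - H(Y)_p = H(X|Y)_p$; I would state the fibre-size bound as a short sublemma and then conclude
\bbb
\limsup_{n\to\infty} \tfrac1n \widetilde{H}_{\max}^\epsilon\left(X^n|Y^n\right)_{p^n} \leq \limsup_{n\to\infty} \tfrac1n H_{\max}(X^n|Y^n)_{q_n} \leq H(X|Y)_p\, ,
\eee
which combined with the lower bound above yields Eq.~\eqref{tweaked AEP}. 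A minor additional subtlety worth a sentence: when some conditional slice $p_{X|y}$ is deterministic or $p$ has atoms of probability zero, the typical-set construction degenerates harmlessly, and one treats those coordinates separately or absorbs them into the error term.
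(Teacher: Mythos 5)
Your proposal is correct and follows essentially the same route as the paper: sandwich via $V_\epsilon(p^n)\subseteq B_\epsilon(p^n)$ for the lower bound, and for the upper bound restrict $p^n$ to a typical set, renormalise to get an element of $V_\epsilon(p^n)$, and bound the worst-case fibre size. The only difference is cosmetic: the paper restricts to the weakly conditionally typical set $T_\delta^{X^n|y^n}$, whose cardinality bound $2^{n(H(X|Y)+\delta)}$ holds uniformly in \emph{every} $y^n$, so the worst-case-$y^n$ issue you handle by additionally intersecting with the $Y$-typical set never arises.
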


\begin{proof}
The statement could be derived from the results of~\cite{RennerPhD}, but the argument would be quite cumbersome while still requiring a significant amount of work. A direct proof is perhaps more transparent. We have to worry only about proving the upper bound in Eq.~\eqref{tweaked AEP}, as the inclusion $V_\epsilon(p^n)\subseteq B_\epsilon(p^n)$ automatically guarantees that
\begin{align*}
\lim_{n\to \infty} \frac1n \widetilde{H}_{\max}^\epsilon \left(X^n|Y^n\right)_{p^n} &\geq \lim_{n\to \infty} \frac1n H_{\max}^\epsilon \left(X^n|Y^n\right)_{p^n} \\
&= H(X|Y)_p
\end{align*}
for all $\epsilon>0$, where the last step is naturally an application of the standard equipartition property, Eq.~\eqref{AEP standard}.

In order to establish the converse bound, we start by introducing some notation. Consider a parameter $\delta>0$. For all sequences $y^n$, construct the \emph{weakly typical set}
\bbb
T_\delta^{X^n|y^n}\!\coloneqq \left\{x^n\! : \left| - \frac1n \log p_{X^n|Y^n}(x^n|y^n) - H(X|Y)\right|\leq \delta \right\}\! .
\eee
A survey of the main properties of this object can be found for instance in~\cite[\S~14.6.1]{MARK}. We will make use of the following two facts:
\begin{align}
&\ \log \left| T_\delta^{X^n|y^n}\right| \leq n \left( H(X|Y) + \delta\right) , \label{size typical set} \\[1ex]
&\lim_{n\to \infty} \pr_{X^nY^n} \left\{ x^n \in T_\delta^{X^n|y^n} \right\} = 1\qquad \forall\ \delta>0\, . \label{probability typical set}
\end{align}

Now, let $\epsilon>0$ be fixed. We have to show that for all $\delta>0$ there exists $N\in\N$ such that $\frac1n \widetilde{H}_{\max}^\epsilon \left(X^n|Y^n\right)_{p^n}\leq H(X|Y)+ \delta$ for all $n\geq N$. For all $n$, set
\bbb
I_\delta^{X^nY^n}\coloneqq \left\{ x^ny^n:\ x^n\in T_\delta^{X^n|y^n} \right\}\, .
\eee
Observe that Eq.~\eqref{probability typical set} can be rephrased by saying that $\lim_{n\to \infty} \pr \left( I_\delta^{X^nY^n} \right)= 1$, where it is understood that the probabilities are computed according to the distribution $p^n$. Let us pick $N\in\N$ such that
\bbb
\pr \left( I_\delta^{X^nY^n} \right) \geq 1-\frac{\epsilon}{2}\qquad \forall\ n\geq N\, .
\eee
Define
\bbb
q\coloneqq \frac{\Pi_{I_\delta^{X^nY^n}}\ p^n}{\pr \left( I_\delta^{X^nY^n} \right)}\, ,
\eee
so that
\begin{align*}
H_{\max}(X^n|Y^n)_q &= \max_{y^n} \log \left| \supp q_{X^n|y^n}\right| \\
&= \max_{y^n} \log \left| T_\delta^{X^n|y^n}\right| \\
&\leq n \left( H(X|Y) + \delta\right) ,
\end{align*}
where the last inequality comes from Eq.~\eqref{size typical set}.
Observe also that
\bbb
|q-p^n|_1 = 2 \left( 1 - \pr \left( I_\delta^{X^nY^n} \right) \right) \leq \epsilon\, ,
\eee
implying that
\bbb
q\in V_\epsilon(p^n)\, .
\eee
Thus,
\begin{align*}
\frac1n \widetilde{H}_{\max}^\epsilon \left(X^n|Y^n\right)_{p^n} &\leq \frac1n H_{\max} \left(X^n|Y^n\right)_{q} \\
&\leq H(X|Y)+\delta\, .
\end{align*}
Since the above estimate holds for all $n\geq N$, this concludes the proof.
\end{proof}

\subsection{First constraints on achievable rates}

In Section~\ref{sec monotones} we have introduced and studied a wealth of SIO monotones, namely the functions $\mu_k$ of Eq.~\eqref{mu} and their smoothed versions $\mu_k^\epsilon$ defined in Eq.~\eqref{mu epsilon}. However, until now we have not used them to derive constraints on the achievable SIO distillation rates. The following result deals precisely with this problem.

\begin{prop} \label{upper bound r prop}
Let $r$ be an achievable rate for SIO coherence distillation starting from a state $\rho$ (in the sense of Eq.~\eqref{distillable}). Then for all $\epsilon>0$ it holds that
\bb
\liminf_{n\to \infty} \left\{ \mu_{2^{\floor{rn}}}^{\epsilon} \left( \rho^{\otimes n}\right) - \floor{rn} \right\} \geq \log (1-\epsilon/2)\, ;
\label{upper bound r with epsilon}
\ee
Thus,
\bb
\lim_{\epsilon\to 0^+} \liminf_{n\to \infty} \left\{ \mu_{2^{\floor{rn}}}^{\epsilon} \left( \rho^{\otimes n}\right) - \floor{rn} \right\} \geq 0\, .
\label{upper bound r}
\ee
\end{prop}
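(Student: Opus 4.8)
The plan is to exploit the monotonicity of the smoothed functions $\mu_k^\epsilon$ under SIO (Lemma~\ref{mu epsilon monotone lemma}) together with a direct evaluation of $\mu_k^\epsilon$ on the target of the distillation protocol, namely a large maximally coherent state. Fix an achievable rate $r$ and $\epsilon>0$. By the definition in Eq.~\eqref{distillable}, for every sufficiently large $n$ there is an SIO $\Lambda_n$ with $\left\|\Lambda_n(\rho^{\otimes n}) - \Psi_{2^{\floor{rn}}}\right\|_1\leq\epsilon/2$ (say); in particular $\Lambda_n(\rho^{\otimes n})\in B_{\epsilon/2}(\Psi_{2^{\floor{rn}}})$. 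Since $\mu_k^\epsilon$ is an SIO monotone, $\mu_{2^{\floor{rn}}}^{\epsilon}(\rho^{\otimes n})\geq \mu_{2^{\floor{rn}}}^{\epsilon}\bigl(\Lambda_n(\rho^{\otimes n})\bigr)$. It therefore suffices to lower bound $\mu_{k}^{\epsilon}(\omega)$ for any state $\omega$ that is $(\epsilon/2)$-close to $\Psi_{k}$ with $k=2^{\floor{rn}}$.

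The second ingredient is a lower bound on $\mu_d$ near a maximally coherent state, which is exactly the content of Lemma~\ref{mu almost max coh lemma}: for any state $\tau$ on $\C^k$ one has $\mu_k(\tau)\geq \log k + \log\braket{\Psi_k|\tau|\Psi_k}$. Now take $\omega\in B_{\epsilon/2}(\Psi_k)$ and let $\sigma\in B_\epsilon(\omega)$ be the state achieving $\mu_k^\epsilon(\omega)=\mu_k(\sigma)$; by the triangle inequality $\sigma$ is still within trace distance $3\epsilon/2$ of $\Psi_k$ — here I would actually smooth directly around $\Lambda_n(\rho^{\otimes n})$ and choose the $B$-radius so the arithmetic closes, i.e.\ pick the distillation error below $\epsilon/2$ and use $\sigma\in B_\epsilon(\Lambda_n(\rho^{\otimes n}))$ to get $\|\sigma-\Psi_k\|_1\le \epsilon/2+\text{(small)}$. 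The fidelity-type quantity $\braket{\Psi_k|\sigma|\Psi_k}=\Tr[\Psi_k\sigma]$ is bounded below by $1-\tfrac12\|\sigma-\Psi_k\|_1\ge 1-\epsilon/2$ using the standard inequality $\Tr[P\sigma]\ge\Tr[P\rho]-\tfrac12\|\rho-\sigma\|_1$ for a projector $P$. Plugging into Lemma~\ref{mu almost max coh lemma} gives $\mu_k(\sigma)\ge \log k+\log(1-\epsilon/2)$, hence $\mu_k^\epsilon(\Lambda_n(\rho^{\otimes n}))\ge \floor{rn}+\log(1-\epsilon/2)$ since $k=2^{\floor{rn}}$ so $\log k=\floor{rn}$.

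Chaining the two bounds yields $\mu_{2^{\floor{rn}}}^{\epsilon}(\rho^{\otimes n})-\floor{rn}\ge \log(1-\epsilon/2)$ for all large $n$, which on taking $\liminf_{n\to\infty}$ gives Eq.~\eqref{upper bound r with epsilon}. Equation~\eqref{upper bound r} then follows immediately by letting $\epsilon\to 0^+$, since $\log(1-\epsilon/2)\to 0$. The main obstacle is purely bookkeeping: getting the smoothing radii and the distillation error to line up so that the state achieving the smoothed minimum is still close enough to $\Psi_k$ for Lemma~\ref{mu almost max coh lemma} to bite — one has to be a little careful about whether to smooth around the protocol output or around $\Psi_k$ and about the factor relating trace distance to overlap. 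Beyond that, everything is a direct application of the monotonicity already established and of the near-maximal-coherence estimate; no further machinery (and in particular nothing about $l$, $\lambda$, or the AEP) is needed for this particular proposition, those entering only in the subsequent steps that turn this constraint into the sharp converse $r\le Q(\rho)$.
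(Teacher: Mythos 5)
Your proposal follows essentially the same route as the paper's proof: monotonicity of the smoothed monotone $\mu_k^\epsilon$ under SIO, then Lemma~\ref{mu almost max coh lemma} applied to the state achieving the smoothed minimum, with its overlap with $\Psi_{2^{\floor{rn}}}$ controlled through the trace distance. The only wrinkle is the bookkeeping you yourself flag: since the smoothing radius is $\epsilon$, the minimiser is within $\epsilon + o(1)$ (not $\epsilon/2 + o(1)$) of $\Psi_{2^{\floor{rn}}}$ once one uses that the distillation error vanishes as $n\to\infty$, and the factor $\tfrac12$ in the overlap inequality then still gives $1-\epsilon/2-o(1)$, so the liminf recovers exactly Eq.~\eqref{upper bound r with epsilon}.
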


\begin{proof}
For a fixed $\epsilon>0$, if $r$ is an achievable rate there must exist a sequence of SIO transformations $\Lambda_n$ such that $\left\|\Lambda_n \left( \rho^{\otimes n}\right) - \Psi_{2^{\floor{rn}}}\right\|_1\leq \epsilon$ eventually in $n$. By the Fuchs--van de Graaf inequality~\cite{Fuchs1999}, slightly optimised to take advantage of the fact that one of the two states is pure~\cite[Exercise~9.21]{NC}, this implies that
\bbb
\braket{\Psi_{2^{\floor{rn}}}| \Lambda_n \left( \rho^{\otimes n}\right) | \Psi_{2^{\floor{rn}}}} \geq 1-\epsilon/2\, .
\eee
Thanks to Lemma~\ref{mu almost max coh lemma}, we can then write
\begin{align*}
\mu_{2^{\floor{rn}}}^\epsilon\left(\rho^{\otimes n}\right) - \floor{rn} &\geq \mu_{2^{\floor{rn}}}^\epsilon\left(\Lambda_n\left(\rho^{\otimes n}\right)\right) - \floor{rn} \\[.8ex]
&\geq \log \braket{\Psi_{2^{\floor{rn}}}| \Lambda_n \left( \rho^{\otimes n}\right) | \Psi_{2^{\floor{rn}}}} \\[.8ex]
&\geq \log (1-\epsilon/2)\, .
\end{align*}
Since this holds eventually in $n$, we can take the $\liminf$ for $n\to\infty$ and obtain Eq.~\eqref{upper bound r with epsilon}. 
Computing the limit for $\epsilon\to 0^+$ yields Eq.~\eqref{upper bound r} and concludes the proof.
\end{proof}

\subsection{The converse bound}

We now shift the focus on the problem of finding tight upper bounds for $\mu_k^\epsilon \left(\rho^{\otimes n}\right)$. As it appears from an inspection of Proposition~\ref{upper bound r prop} and especially of Eq.~\eqref{upper bound r}, this will in turn give us upper bounds on the maximal achievable SIO distillation rate $r$.
Our approach to the problem will leverage the previously established Lemma~\ref{mu and l lemma}, whose proof rested on the beautiful theorem by Ger\v{s}gorin~\cite{Gershgorin}.

\begin{prop} \label{casino prop}
For all states $\rho$ such that $\lambda(\rho)>0$ and all pairs of positive integers $n,k$, it holds that
\bb
\begin{aligned}
\mu^\epsilon_k \left(\rho^{\otimes n}\right) &\leq \log k + \log \lambda(\rho) \\
&\quad + \log \left[ 1 + \frac{2^{\widetilde{H}_{\max}^{\epsilon^2/4} \left( J^n |(S^{\rho})^n \right)_{\delta_\rho^n}}}{k \lambda(\rho) }\right] .
\end{aligned}
\ee
\end{prop}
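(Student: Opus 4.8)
The strategy is to combine the deterministic estimate of Lemma~\ref{mu and l lemma} with the smoothing definition~\eqref{mu epsilon} and the tweaked AEP~(Lemma~\ref{tweaked AEP lemma}), using the structural facts about the sets $V_\epsilon$ collected before Lemma~\ref{optimisation comparison lemma}. First I would apply Lemma~\ref{mu and l lemma} to the tensor power $\rho^{\otimes n}$: for any state $\sigma$ one has $\mu_k(\sigma)\leq \log\!\big[l(\sigma)+\lambda(\sigma)(k-l(\sigma))\big]$. Since $l(\sigma)\geq 1$ and $\lambda(\sigma)\le 1$, the bracket is at most $k\lambda(\sigma)+l(\sigma)(1-\lambda(\sigma))\le k\lambda(\sigma)+l(\sigma)$, so $\mu_k(\sigma)\le \log k+\log\lambda(\sigma)+\log\!\big[1+\tfrac{l(\sigma)}{k\lambda(\sigma)}\big]$ whenever $\lambda(\sigma)>0$. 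Now take the minimum over $\sigma\in B_{\epsilon}(\rho^{\otimes n})$ to pass to $\mu^\epsilon_k$; but rather than optimize over the whole ball, I would restrict to the smaller family $V_{\epsilon}(\rho^{\otimes n})\subseteq B_\epsilon(\rho^{\otimes n})$, which only makes the resulting bound larger, hence still valid:
\[
\mu^\epsilon_k\!\left(\rho^{\otimes n}\right)\ \leq\ \log k+\min_{\sigma\in V_\epsilon(\rho^{\otimes n})}\Big(\log\lambda(\sigma)+\log\big[\,1+\tfrac{l(\sigma)}{k\lambda(\sigma)}\,\big]\Big).
\]

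\textbf{Key steps.} The next step is to control the two $\sigma$-dependent quantities $\lambda(\sigma)$ and $l(\sigma)$ on $V_\epsilon(\rho^{\otimes n})$ using the facts established in the Preliminaries. By property~\eqref{lambda on V epsilon} applied to $\rho^{\otimes n}$, every $\sigma\in V_\epsilon(\rho^{\otimes n})$ satisfies $\lambda(\sigma)\le\lambda(\rho^{\otimes n})$; and by the tensorisation Lemma~\ref{lambda tensorisation lemma}, $\lambda(\rho^{\otimes n})=\lambda(\rho)$. So $\lambda(\sigma)\le\lambda(\rho)$, and since $x\mapsto \log x+\log[1+c/x]=\log(x+c)$ is monotone increasing in $x>0$ (with $c=l(\sigma)/k$), the worst case is $\lambda(\sigma)=\lambda(\rho)$; more carefully, one checks that the expression $\log\lambda(\sigma)+\log[1+l(\sigma)/(k\lambda(\sigma))]=\log[\lambda(\sigma)+l(\sigma)/k]$ is increasing in $\lambda(\sigma)$, so we may upper-bound it by $\log[\lambda(\rho)+l(\sigma)/k]=\log\lambda(\rho)+\log[1+l(\sigma)/(k\lambda(\rho))]$. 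It remains to minimize $l(\sigma)$ over $\sigma\in V_\epsilon(\rho^{\otimes n})$. By~\eqref{l entropies}, $\log l(\sigma)=H_{\max}(J|S^{\sigma})_{\delta_\sigma}$, and by the crucial property~\eqref{crucial property V epsilon} the random variable $S^\sigma_\sigma$ equals $S^{\rho^{\otimes n}}_\sigma=(S^\rho)^n_\sigma$ in range and distribution, so $\log l(\sigma)=H_{\max}(J^n|(S^\rho)^n)_{\delta_\sigma}$. Since $\delta_\sigma$ ranges over exactly those conditioned diagonals of the form $\Pi_I\,\delta_{\rho^{\otimes n}}/(\sum_{i\in I}(\delta_\rho^n)_i)$ lying in $B_\epsilon(\delta_{\rho^{\otimes n}})$, i.e.\ over $V_\epsilon(\delta_\rho^{n})$, minimizing $\log l(\sigma)$ over $V_\epsilon(\rho^{\otimes n})$ gives at most $\min_{q\in V_{\epsilon^2/4}(\delta_\rho^n)}H_{\max}(J^n|(S^\rho)^n)_q$ by the second inequality of Lemma~\ref{optimisation comparison lemma} (applied with $f(q)=H_{\max}(J^n|(S^\rho)^n)_q$), which is precisely $\widetilde H_{\max}^{\epsilon^2/4}(J^n|(S^\rho)^n)_{\delta_\rho^n}$ by definition~\eqref{H max tilde smoothed}. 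Substituting $l(\sigma)\le 2^{\widetilde H_{\max}^{\epsilon^2/4}(J^n|(S^\rho)^n)_{\delta_\rho^n}}$ into the displayed bound yields the claim.

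\textbf{Main obstacle.} The delicate point is the careful passage from a minimization over quantum states in $V_\epsilon(\rho^{\otimes n})$ to a minimization over classical distributions in a slightly smaller ball $V_{\epsilon^2/4}(\delta_\rho^n)$: one has to invoke Lemma~\ref{optimisation comparison lemma} in the right direction and make sure the shrinkage of the smoothing radius ($\epsilon\mapsto\epsilon^2/4$) — which originates from the gentle measurement lemma — is harmless, since in the intended application of this proposition one anyway sends $\epsilon\to 0^+$ afterwards. A second subtlety worth handling explicitly is the monotonicity argument letting us replace $\lambda(\sigma)$ by $\lambda(\rho)$ inside $\log[\lambda(\sigma)+l(\sigma)/k]$: this requires $\lambda(\rho)>0$ (the hypothesis of the proposition) so that all logarithms are well-defined and the substitution is genuinely an upper bound. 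Everything else is routine: the tensorisation of $\lambda$, the properties of $V_\epsilon$, and the elementary algebraic rearrangement of Ger\v sgorin's bound are all supplied by earlier results in the paper.
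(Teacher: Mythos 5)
Your proof is correct and follows essentially the same route as the paper's: restrict the smoothed minimum to $V_\epsilon(\rho^{\otimes n})$, apply the Ger\v{s}gorin-based Lemma~\ref{mu and l lemma}, replace $\lambda(\sigma)$ by $\lambda(\rho)$ via Eq.~\eqref{lambda on V epsilon} and Lemma~\ref{lambda tensorisation lemma}, and bound $\min_\sigma \log l(\sigma)$ by $\widetilde{H}^{\epsilon^2/4}_{\max}\left( J^n |(S^{\rho})^n \right)_{\delta_\rho^n}$ using Eqs.~\eqref{l entropies},~\eqref{crucial property V epsilon} and the second inequality of Lemma~\ref{optimisation comparison lemma}. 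The only imprecision is the aside that $\delta_\sigma$ ranges \emph{exactly} over $V_\epsilon(\delta_\rho^n)$ (in general one only has an inclusion, which is precisely why the radius shrinks to $\epsilon^2/4$ via the gentle measurement lemma), but this is harmless since the inequality you actually invoke from Lemma~\ref{optimisation comparison lemma} is the correct one.
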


\begin{proof}
We write:
\begin{align*}
\mu^\epsilon_k \left(\rho^{\otimes n}\right) &= \min_{\sigma \in B_\epsilon\left(\rho^{\otimes n}\right)} \mu_k (\sigma) \\
&\textleq{1} \min_{\sigma \in V_\epsilon\left(\rho^{\otimes n}\right)} \mu_k (\sigma) \\
&\textleq{2} \min_{\sigma \in V_\epsilon\left(\rho^{\otimes n}\right)} \log \left[ l(\sigma) + \lambda(\sigma) \left( k - l(\sigma)\right) \right] \\
&\leq \min_{\sigma \in V_\epsilon\left(\rho^{\otimes n}\right)} \log \left[ l(\sigma) + k \lambda(\sigma) \right] \\
&\textleq{3} \min_{\sigma \in V_\epsilon\left(\rho^{\otimes n}\right)} \log \left[ l(\sigma) + k \lambda\left(\rho^{\otimes n}\right) \right] \\
&\texteq{4} \min_{\sigma \in V_\epsilon\left(\rho^{\otimes n}\right)} \log \left[ l(\sigma) + k \lambda(\rho) \right] \\
&= \log k + \log \lambda(\rho) \\
&\quad + \log \left[1 + \frac{1}{k\lambda(\rho)}\, 2^{\min_{\sigma \in V_\epsilon\left(\rho^{\otimes n}\right)} \log l(\sigma)} \right] .
\end{align*}
The justification of the above reasoning is as follows. 1: Restricting the optimisation set does not decrease the minimum; 2: comes from Lemma~\ref{mu and l lemma}; 3: is an application of Eq.~\eqref{lambda on V epsilon}; finally, 4: follows from Lemma~\ref{lambda tensorisation lemma}.

We now look at the minimum appearing in the expression we just found.
\begin{align*}
\min_{\sigma \in V_\epsilon\left(\rho^{\otimes n}\right)} \log l(\sigma) &\texteq{5} \min_{\sigma \in V_\epsilon\left(\rho^{\otimes n}\right)} H_{\max} \left( J^n |(S^{\sigma})^n \right)_{\delta_\sigma} \\
&\texteq{6} \min_{\sigma \in V_\epsilon\left(\rho^{\otimes n}\right)} H_{\max}\left( J^n |(S^{\rho})^n \right)_{\delta_\sigma} \\
&\textleq{7} \min_{q \in V_{\epsilon^2/4}\left(\delta_\rho^n\right)} H_{\max}\left( J^n |(S^{\rho})^n \right)_{q} \\
&= \widetilde{H}^{\epsilon^2/4}_{\max}\left( J^n |(S^{\rho})^n \right)_{\delta_\rho^n}\, .
\end{align*}
These steps are explained as follows. 5: we used Eq.~\eqref{l entropies}; 6: we employed Eq.~\eqref{crucial property V epsilon}; 7: we exploited the second inequality in Lemma~\ref{optimisation comparison lemma}. Putting all together concludes the proof.
\end{proof}

We are finally ready to prove the first of our main results.

\setthmtag{\ref{solution thm}}
\begin{thm}
For all states $\rho$, the distillable coherence under SIO/PIO is given by the quintessential coherence of Eq.~\eqref{Q}:
\bb
C_{d, \SIO}(\rho) = C_{d, \PIO}(\rho) = Q(\rho)\, .
\label{solution}
\ee
\end{thm}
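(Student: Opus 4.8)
The plan is to combine the achievability result (Lemma~\ref{C d SIO lower bound lemma}) with a matching converse derived from the smoothed monotones $\mu_k^\epsilon$. The lower bound $C_{d,\SIO}(\rho)\geq C_{d,\PIO}(\rho)\geq Q(\rho)$ is already in hand, so everything hinges on showing $C_{d,\SIO}(\rho)\leq Q(\rho)$. First I would take an arbitrary achievable SIO rate $r$ and feed it into Proposition~\ref{upper bound r prop}, which gives $\lim_{\epsilon\to 0^+}\liminf_{n\to\infty}\big\{\mu_{2^{\floor{rn}}}^\epsilon(\rho^{\otimes n})-\floor{rn}\big\}\geq 0$. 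The strategy is then to upper-bound $\mu_{2^{\floor{rn}}}^\epsilon(\rho^{\otimes n})$ tightly enough that this forces $r\leq Q(\rho)$.

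The key estimate is Proposition~\ref{casino prop}. Writing $k=k_n\coloneqq 2^{\floor{rn}}$, it yields (when $\lambda(\rho)>0$)
\[
\mu_{k_n}^\epsilon(\rho^{\otimes n}) \leq \log k_n + \log\lambda(\rho) + \log\!\left[1 + \frac{2^{\widetilde{H}_{\max}^{\epsilon^2/4}(J^n|(S^\rho)^n)_{\delta_\rho^n}}}{k_n\,\lambda(\rho)}\right].
\]
Subtracting $\floor{rn}=\log k_n$ and taking $\liminf_{n\to\infty}$, the surviving term is $\log\lambda(\rho)+\log[1+\lambda(\rho)^{-1}2^{c_n}/k_n]$ where $c_n\coloneqq\widetilde{H}_{\max}^{\epsilon^2/4}(J^n|(S^\rho)^n)_{\delta_\rho^n}$. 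By the tweaked AEP (Lemma~\ref{tweaked AEP lemma}), $\tfrac1n c_n\to H(J|S^\rho)_{\delta_\rho}=Q(\rho)$ by Eq.~\eqref{Q entropies}, so $2^{c_n}=2^{n(Q(\rho)+o(1))}$ while $k_n=2^{n(r+o(1))}$. Hence if $r>Q(\rho)$ the ratio $2^{c_n}/k_n\to 0$, the bracket tends to $1$, and the whole expression tends to $\log\lambda(\rho)<0$ (strictly, by property (ii) after Eq.~\eqref{eta}), contradicting Proposition~\ref{upper bound r prop} once $\epsilon$ is sent to $0$. Therefore $r\leq Q(\rho)$, giving the converse. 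The degenerate case $\lambda(\rho)=0$ must be handled separately: by Eq.~\eqref{lambda = 0} this means $\rho=\widebar{\rho}$, so $Q(\rho)=C_r(\widebar{\rho})=C_r(\rho)$ by Remark~\ref{Q and Cr and Cf rem}, and the converse follows from the MIO converse $C_{d,\SIO}\leq C_{d,\MIO}=C_r$ in Eq.~\eqref{results VV}; alternatively, if additionally $\eta(\rho)<1$ then $Q(\rho)=0$ and one uses directly that $\mu_2^\epsilon$ is bounded, forcing $r=0$.

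The main obstacle is already absorbed into the machinery built in Sections~\ref{sec monotones} and~\ref{sec converse}: the delicate point is that naive smoothing over trace-norm balls $B_\epsilon$ is too coarse to extract the \emph{classical} conditional-entropy rate, because $l(\sigma)$ and $\lambda(\sigma)$ behave wildly under arbitrary perturbations. The fix — restricting to the sets $V_\epsilon$ of renormalised projections, where $\lambda$ is monotone (Eq.~\eqref{lambda on V epsilon}) and the partition structure is preserved (Eq.~\eqref{crucial property V epsilon}), then transferring the optimisation to the classical picture via Lemma~\ref{optimisation comparison lemma} and closing with the tweaked AEP — is exactly what makes the argument go through. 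So in the write-up the real work is just assembling these ingredients in the right order and dispatching the $\lambda(\rho)=0$ edge case; the quantitative heart of the matter is Proposition~\ref{casino prop}.
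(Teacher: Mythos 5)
Your proposal is correct and follows essentially the same route as the paper's own proof: achievability via Lemma~\ref{C d SIO lower bound lemma}, then the converse by combining Proposition~\ref{upper bound r prop} with Proposition~\ref{casino prop} at $k=2^{\floor{rn}}$, invoking the tweaked AEP (Lemma~\ref{tweaked AEP lemma}) together with Eq.~\eqref{Q entropies} to force $r\leq Q(\rho)$ whenever $\lambda(\rho)>0$, and handling $\lambda(\rho)=0$ via $\rho=\widebar{\rho}$ and the $C_r$ converse. The only cosmetic difference is that you phrase the contradiction as the bound tending to $\log\lambda(\rho)<0$ rather than deriving $\log\lambda(\rho)\geq 0$, which is the same argument.
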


\begin{proof}[Proof of Theorem~\ref{solution thm}]
Thanks to Lemma~\ref{C d SIO lower bound lemma}, it is only left to show that $C_{d,\SIO}(\rho)\leq Q(\rho)$. When $\lambda(\rho)=0$ and hence $\rho=\widebar{\rho}$ by Eq.~\eqref{lambda = 0}, we have that
\bbb
C_{d,\SIO}(\rho)\leq C_{d,\IO}(\rho) = C_r(\rho) = C_r(\widebar{\rho}) = Q(\rho)\, ,
\eee
where we employed the identities in Eq.~\eqref{results VV} and~\eqref{Q and Cr and Cf}.

The nontrivial case is thus when $\lambda(\rho)>0$. Let $r$ be an achievable rate for SIO distillation (in the sense of Eq.~\eqref{distillable}). Using Proposition~\ref{upper bound r prop} together with the upper bound in Proposition~\ref{casino prop} for $k=2^{\floor{rn}}$, we deduce that
\bb
\begin{aligned}
0 &\leq \lim_{\epsilon\to 0^+} \liminf_{n\to \infty} \bigg\{ \log \lambda(\rho) \\
&\quad + \log \left[ 1 + \frac{1}{\lambda(\rho)} \, 2^{\widetilde{H}_{\max}^{\epsilon^2\!/4} \left( J^n |(S^{\rho})^n \right)_{\delta_\rho^n} - \floor{rn} } \right] \bigg\} \\[1ex]
&= \log \lambda(\rho) \\
&\quad + \log \bigg[ 1 + \frac{1}{\lambda(\rho)} \, \lim_{\epsilon }  \, 2^{ \liminf_{n} \left( \widetilde{H}_{\max}^{\epsilon^2\!/4} \left( J^n |(S^{\rho})^n \right)_{\delta_\rho^n} - \floor{rn} \right)} \bigg] . \\[1ex] 
\end{aligned}
\label{solution proof eq1}
\ee
Now, since
\bbb
\lim_{n\to \infty} \frac1n \widetilde{H}_{\max}^{\epsilon^2/4} \left( J^n |(S^{\rho})^n \right)_{\delta_\rho^n} = H\left(J | S^\rho\right)_{\delta_\rho} = Q(\rho)
\eee
for all $\epsilon>0$ by Lemma~\ref{tweaked AEP lemma} and Eq.~\eqref{Q entropies}, we see that
\bbb
\liminf_{n\to \infty} \left( \widetilde{H}_{\max}^{\epsilon^2/4} \left( J^n |(S^{\rho})^n \right)_{\delta_\rho^n} - \floor{rn} \right) = -\infty
\eee
as soon as $r>Q(\rho)$. Thanks to Eq.~\eqref{solution proof eq1}, in this case we would obtain $\log \lambda(\rho)\geq 0$, absurd since $\lambda(\rho)<1$. Hence, we conclude that $r\leq Q(\rho)$, as claimed.
\end{proof}

We now explore some consequences of the above Theorem~\ref{solution thm}. The first problem we look into is that of the difference between the distillable coherence under SIO/PIO on one side and under IO on the other. It turns out that these two quantities coincide precisely for those states that are IO reversible according to~\cite[Theorem~10]{Winter2016}.

\begin{cor}
For a given state $\rho$,
\begin{align*}
\text{either}&\quad C_{d,\SIO/\PIO}(\rho)<C_{d,\IO}(\rho)<C_{c,\IO}(\rho) \\
\text{or}&\quad C_{d,\SIO/\PIO}(\rho) = C_{d,\IO}(\rho) = C_{c,\IO}(\rho)\, ,
\end{align*}
where IO denotes the set of incoherent operations. In other words, the IO reversible states of~\cite[Theorem~10]{Winter2016} are precisely those that can be distilled just as efficiently with PIO, SIO and IO.
\end{cor}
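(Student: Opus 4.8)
The plan is to collapse the statement onto a single equivalence between equality cases. Combining Theorem~\ref{solution thm} with the Winter--Yang results recalled in Eq.~\eqref{results VV} and in Table~\ref{table operations/rates} (the IO coherence cost equals $C_f$), for every state $\rho$ we have
\bbb
C_{d,\SIO/\PIO}(\rho) = Q(\rho)\, ,\qquad C_{d,\IO}(\rho) = C_r(\rho)\, ,\qquad C_{c,\IO}(\rho) = C_f(\rho)\, .
\eee
Moreover the inequalities $Q(\rho) = C_{d,\SIO}(\rho)\leq C_{d,\IO}(\rho) = C_r(\rho)$ (because $\SIO\subseteq\IO$) and $C_{d,\IO}(\rho)\leq C_{c,\IO}(\rho)$, i.e.\ $C_r(\rho)\leq C_f(\rho)$ (because distillation cannot beat dilution; equivalently, by convexity of $C_r$ and $C_r=C$ on pure states) hold for all $\rho$. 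Hence the two displayed alternatives are exhaustive, and the corollary reduces to the single equivalence
\bbb
Q(\rho) = C_r(\rho)\quad\Longleftrightarrow\quad C_r(\rho) = C_f(\rho)\, .
\eee

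For the implication ``$\Leftarrow$'': if $C_r(\rho) = C_f(\rho)$ then $\rho$ is one of the IO reversible states of~\cite[Theorem~10]{Winter2016}, which are exactly those with $\rho=\widebar{\rho}$; Eq.~\eqref{Q and Cr and Cf} then gives $Q(\rho)=C_r(\widebar{\rho})=C_r(\rho)$.

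For ``$\Rightarrow$'' --- the only substantive part --- I would assume without loss of generality that $\Delta(\rho)>0$, as elsewhere in the paper, restricting to $\Span\{\ket i:\rho_{ii}>0\}\supseteq\supp\rho$. Since $\Delta(\widebar{\rho})=\Delta(\rho)$, Eq.~\eqref{Q} and Eq.~\eqref{relative entropy of coherence} yield $C_r(\rho)-Q(\rho)=S(\widebar{\rho})-S(\rho)$, so it suffices to prove that $S(\widebar{\rho})=S(\rho)$ forces $\rho=\widebar{\rho}$, after which~\cite[Theorem~10]{Winter2016} closes the loop. The key observation is that, by Lemma~\ref{G rho cliques lemma} and the second equality in Eq.~\eqref{rho bar}, $\widebar{\rho}=\sum_{s\in\pazocal{S}}\Pi_{I_s}\rho\,\Pi_{I_s}$ is the \emph{pinching} of $\rho$ with respect to the complete orthogonal family $\{\Pi_{I_s}\}_{s\in\pazocal{S}}$. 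For any pinching one has the identity $D(\rho\,\|\,\widebar{\rho})=S(\widebar{\rho})-S(\rho)$, because $\log\widebar{\rho}$ is block diagonal with respect to $\{\Pi_{I_s}\}$, whence $\Tr[\rho\log\widebar{\rho}]=\Tr[\widebar{\rho}\log\widebar{\rho}]$. Faithfulness of the relative entropy then gives $S(\widebar{\rho})=S(\rho)\Leftrightarrow D(\rho\|\widebar{\rho})=0\Leftrightarrow\rho=\widebar{\rho}$, as required. (Alternatively one can bypass the pinching and use Eq.~\eqref{lambda = 0}: $\rho=\widebar{\rho}$ iff $\lambda(\rho)=0$, and $\lambda(\rho)>0$ strictly increases the entropy.)

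I do not expect a genuine obstacle here: the whole argument rests on recognising $\widebar{\rho}$ as a pinching of $\rho$ and on the standard equality condition for faithfulness of relative entropy, while the two nontrivial inputs from the literature --- Theorem~\ref{solution thm} and the characterisation $\{C_r=C_f\}=\{\rho=\widebar{\rho}\}$ of~\cite[Theorem~10]{Winter2016} --- are already at our disposal. The only point needing a line of care is the reduction to the case $\Delta(\rho)>0$, handled exactly as in the rest of the paper.
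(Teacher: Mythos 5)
Your proposal is correct and follows essentially the same route as the paper: both reduce the statement to the equivalence $Q(\rho)=C_r(\rho)\Leftrightarrow C_r(\rho)=C_f(\rho)$ and invoke the characterisation of \cite[Theorem~10]{Winter2016} (Remark~\ref{Q and Cr and Cf rem}) that $C_r=C_f$ precisely when $\rho=\widebar{\rho}$. The only difference is that you spell out, via the pinching identity $D(\rho\|\widebar{\rho})=S(\widebar{\rho})-S(\rho)$ and faithfulness of the relative entropy, the step $Q(\rho)=C_r(\rho)\Rightarrow\rho=\widebar{\rho}$, which the paper states without detail.
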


\begin{proof}
Remember that $C_{d,\IO}(\rho) = C_r(\rho)$ and $C_{c,\IO}(\rho)=C_f(\rho)$ by Eq.~\eqref{results VV}. We already observed in Remark~\ref{Q and Cr and Cf rem} that $C_r(\rho)=C_f(\rho)$ precisely when $\rho=\widebar{\rho}$, which is the same as requiring that $Q(\rho)=C_r(\rho)$.
\end{proof}

We now consider a slightly modified scenario for coherence distillation under SIO/PIO. Namely, drawing inspiration from~\cite{Thapliyal2003}, let us consider a class of incoherent operations $\pazocal{O}$ assisted by a sublinear amount of coherence. The corresponding distillation rate will be defined by
\bb
\begin{aligned}
&C_{d,\mathrm{\pazocal{O}c}}(\rho) \\
&\ \coloneqq \sup\Big\{ r : \lim_{n\to\infty} \inf_{\Lambda\in \pazocal{O}} \left\| \Lambda\big( \rho^{\otimes n}\! \otimes\! \Psi_{2^{o(n)}} \big)\! -\! \Psi_{2^{ \floor{rn}}} \right\|_1 = 0 \Big\} ,
\end{aligned}
\label{distillable sublinear}
\ee
where $o(n)$ is an arbitrary (positive and integer-valued) function of $n$ such that $\lim_{n\to\infty} o(n)/n=0$, over which we maximise the yield $r$.
\begin{rem}
The fact that the state we allow a sublinear amount of copies of in~\eqref{distillable sublinear} is a maximally coherent state is immaterial, at least for $\pazocal{O}\in\{\MIO,\DIO,\IO,\SIO\}$: since all states have finite coherence cost under either of these classes of operations, even under the constraint of zero error, we can use the maximally coherent states to distil a sublinear number of copies of any other state. The situation is not so clear for $\pazocal{O}=\PIO$, as some states there have infinite cost, and thus the corresponding scenarios may differ, at least in principle.
\end{rem}

In the case of entanglement theory under LOCC, it is known that a sublinear amount of Bell pairs does not help asymptotically at least in pure-to-pure state transformations~\cite{Thapliyal2003}. However, an analogous result for the mixed-state case is not known. This problem is connected with that of the full additivity of the distillable entanglement, which has been conjectured to be violated in general~\cite{Shor2001}. We lack a complete answer to this question even in the case where the free operations are only asymptotically non-entangling~\cite{BrandaoPlenio1, BrandaoPlenio2}. Thanks to Theorem~\ref{solution thm}, we can solve the corresponding problem in the theory of coherence completely.

\begin{cor}
For $\pazocal{O}\in\{\MIO,\DIO,\IO,\SIO,\PIO\}$, a sublinear amount of maximally coherent bits does not help distillation. In formula,
\bb
C_{d,\mathrm{\pazocal{O}c}}(\rho) \equiv C_{d,\pazocal{O}}(\rho)\, ,
\ee
for all states $\rho$, and for all classes of operations $\pazocal{O}\in\{\MIO,\DIO,\IO,\SIO,\PIO\}$.
\end{cor}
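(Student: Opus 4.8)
The plan is to prove the two inequalities $C_{d,\mathrm{\pazocal{O}c}}(\rho) \geq C_{d,\pazocal{O}}(\rho)$ and $C_{d,\mathrm{\pazocal{O}c}}(\rho) \leq C_{d,\pazocal{O}}(\rho)$ separately, the first being essentially trivial and the second being the content of the corollary. The $\geq$ direction is immediate: any distillation protocol without assistance is in particular an assisted protocol that happens to discard the sublinear ancilla (formally, take $o(n) = 1$, or note that appending and tracing out $\Psi_{2^{o(n)}}$ is implementable in all the classes under consideration), so every achievable rate for the unassisted task is achievable for the assisted one.

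For the $\leq$ direction, the key idea is a blocking/amortisation argument combined with the additivity of $Q$ from Eq.~\eqref{additivity Q}. Suppose $r$ is achievable for the assisted SIO task, so there exist SIO maps $\Lambda_n$ with $\bigl\|\Lambda_n(\rho^{\otimes n}\otimes \Psi_{2^{o(n)}}) - \Psi_{2^{\floor{rn}}}\bigr\|_1 \to 0$. Since $\Psi_{2^{o(n)}}$ is itself (exactly, zero-error) obtainable from $o(n)$ coherence bits by an incoherent isometry, and since $o(n)$ coherence bits can in turn be SIO-distilled from $O(o(n)/Q(\Psi_2)) = O(o(n))$ copies of any fixed state $\tau$ with $Q(\tau)>0$ — e.g.\ $\tau = \Psi_2$, for which $Q(\tau)=1$, or if one wants to stay within the input alphabet, from $o(n)$ copies of $\Psi_2$ directly — the assisted protocol starting from $\rho^{\otimes n}$ can be simulated (up to vanishing error, by the gentle-measurement-type arguments already in play) by an unassisted SIO protocol acting on $\rho^{\otimes n}\otimes \Psi_2^{\otimes o(n)}$, which by additivity of $Q$ and Theorem~\ref{solution thm} has distillable coherence $nQ(\rho) + o(n)\cdot 1$. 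Cleaner still: apply Theorem~\ref{solution thm} directly to the input state $\rho^{\otimes n}\otimes \Psi_{2^{o(n)}}$. Its quintessential coherence is $nQ(\rho) + o(n)$ by Eq.~\eqref{additivity Q} and the fact that $Q(\Psi_{2^{m}}) = m$ for a maximally coherent state. Hence the output of $\Lambda_n$ can contain at most $nQ(\rho) + o(n) + o(n)$ coherence bits asymptotically — the extra $o(n)$ slack absorbing the approximation error via the continuity estimate embedded in Proposition~\ref{upper bound r prop} — so $\floor{rn} \leq nQ(\rho) + o(n)$, giving $r \leq Q(\rho) = C_{d,\SIO}(\rho)$ after dividing by $n$ and letting $n\to\infty$.

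The same argument applies verbatim to $\PIO$ (since $\Psi_{2^{o(n)}}$ is exactly PIO-obtainable from $o(n)$ coherence bits, the relevant ancilla is always available, so the caveat in the preceding Remark is harmless here), and for $\pazocal{O}\in\{\MIO,\DIO,\IO\}$ it follows from the analogous additivity of the relative entropy of coherence $C_r$ together with Eq.~\eqref{results VV}: $C_r(\rho^{\otimes n}\otimes\Psi_{2^{o(n)}}) = nC_r(\rho) + o(n)$, and $C_{d,\pazocal{O}}(\rho) = C_r(\rho)$.

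The main obstacle, and the only point requiring care, is the rigorous bookkeeping of the approximation errors when one reduces the assisted task to an unassisted one on a slightly enlarged tensor power: one must check that the trace-norm error and the sublinear overhead $o(n)$ do not conspire to inflate the rate, which is exactly where Proposition~\ref{upper bound r prop} (or equivalently the upper-bound half of Theorem~\ref{solution thm} applied to $\rho^{\otimes n}\otimes\Psi_{2^{o(n)}}$ as a single block) does the work, since the $\liminf$ there is insensitive to additive $o(n)$ corrections after division by $n$.
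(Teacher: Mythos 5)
Your overall strategy is the same as the paper's: the $\geq$ direction is dismissed as trivial, and the $\leq$ direction is reduced to the coherence content of the enlarged input together with full additivity ($Q$ for SIO/PIO via Theorem~\ref{solution thm} and Eq.~\eqref{additivity Q}, $C_r$ for MIO/DIO/IO via Eq.~\eqref{results VV}). The paper implements this as the chain $C_{d,\mathrm{\pazocal{O}c}}(\rho)\le\frac1n C_{d,\pazocal{O}}\big(\rho^{\otimes n}\otimes\Psi_{2^{\floor{\delta n}}}\big)= C_{d,\pazocal{O}}(\rho)+\frac1n C_{d,\pazocal{O}}\big(\Psi_{2^{\floor{\delta n}}}\big)\le C_{d,\pazocal{O}}(\rho)+\delta$.

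However, your pivotal step -- ``apply Theorem~\ref{solution thm} directly to the input state $\rho^{\otimes n}\otimes\Psi_{2^{o(n)}}$'' to conclude that $\Lambda_n$ can output at most $nQ(\rho)+o(n)$ coherence bits -- is not a valid inference as written, and the same conflation appears in your first paragraph when you say the state $\rho^{\otimes n}\otimes\Psi_2^{\otimes o(n)}$ ``has distillable coherence $nQ(\rho)+o(n)$'' and treat this as a cap on the one-shot yield. Theorem~\ref{solution thm} is an asymptotic statement about i.i.d.\ copies of a fixed state; by itself it bounds nothing about a single finite-error transformation of one copy of the growing state $\sigma_n$. Two repairs are possible. (i) The paper's route: a blocking argument. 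For fixed $\delta>0$ and $n$, run the assisted protocols on $m=kn$ copies and trim the ancilla for free; the padding $\Psi_{2^{\floor{\delta n}}}$ (rather than $\Psi_{2^{o(n)}}$, since $k\,o(n)\ge o(kn)$ can fail for a general sublinear $o(\cdot)$ while $k\floor{\delta n}\ge o(kn)$ holds eventually in $k$) turns the assisted sequence into a genuine unassisted i.i.d.\ protocol on copies of $\omega_n=\rho^{\otimes n}\otimes\Psi_{2^{\floor{\delta n}}}$ at rate $\approx rn$ per copy, and only then may one invoke the single-letter formulas and additivity. (ii) Alternatively, rerun the one-shot converse machinery (Proposition~\ref{upper bound r prop}, Proposition~\ref{casino prop}, Lemma~\ref{tweaked AEP lemma}) for the non-i.i.d.\ inputs $\sigma_n$; this is plausible ($\lambda(\sigma_n)=\lambda(\rho)$ by Lemma~\ref{lambda tensorisation lemma}, and the uniform ancilla adds exactly $o(n)$ to the conditional max-entropy), but Proposition~\ref{casino prop} and the tweaked AEP are proved only for $\rho^{\otimes n}$, so this extension must actually be carried out -- citing Proposition~\ref{upper bound r prop} only handles the error on the output side. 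Note finally that for SIO/PIO one cannot fall back on the standard ``monotonicity plus asymptotic continuity'' converse, because $Q$ is highly discontinuous; this is precisely why the blocking step (or the extended smoothed-monotone argument) cannot be skipped.
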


\begin{proof}
Clearly, it always holds that $C_{d,\mathrm{\pazocal{O}c}}(\rho)\geq C_{d,\pazocal{O}}(\rho)$. In order to show the converse, write
\begin{align*}
C_{d,\mathrm{\pazocal{O}c}}(\rho) &\textleq{1} \frac1n\, C_{d,\pazocal{O}}\left(\rho^{\otimes n} \otimes \Psi_{2^{\floor{\delta n}}}\right) \\
&\texteq{2} C_{d,\pazocal{O}}(\rho) + \frac1n \, C_{d,\pazocal{O}} \left( \Psi_{2^{\floor{\delta n}}}\right) \\
&= C_{d,\pazocal{O}}(\rho) + \frac1n \log \floor{\delta n} \\
&\leq C_{d,\pazocal{O}}(\rho) + \delta\, .
\end{align*}
Here, inequality 1 follows from the fact that for any $\delta>0$ we have that $\delta n\geq o(n)$ eventually in $n$. The equality in 2, instead, is an application of the full additivity of $C_{d,\pazocal{O}}$ for $\pazocal{O}\in\{\MIO,\DIO,\IO,\SIO,\PIO\}$, that is, of the identities
\bbb
C_{d,\pazocal{O}}(\omega\otimes \tau) = C_{d,\pazocal{O}}(\omega) + C_{d,\pazocal{O}}(\tau)\, ,
\eee
valid for all states $\omega,\tau$. In fact, by~\cite{Chitambar-reversible} we have that $C_{d,\MIO}=C_{d,\DIO}=C_{d,\IO}=C_r$, which is indeed fully additive~\cite{Winter2016}. Thanks to Theorem~\ref{solution thm}, we also know that $C_{d,\SIO}=C_{d,\PIO}=Q$, which is again fully additive by~\eqref{additivity Q}. Since we have established that $C_{d,\mathrm{\pazocal{O}c}}(\rho)\leq C_{d,\pazocal{O}}(\rho)+\delta$ for all $\delta>0$, it must be the case that $C_{d,\mathrm{\pazocal{O}c}}(\rho)\leq C_{d,\pazocal{O}}(\rho)$, which completes the proof.
\end{proof}

\section{Coherence cost under PIO} \label{sec PIO cost}

Throughout Sections~\ref{sec direct}--\ref{sec converse} we have shown that SIO/PIO are overall weak sets of operations as far as coherence distillation is concerned. Intuitively, we can see this as a consequence of the unavoidable noise such operations introduce into the system. When the aim is to prepare maximally coherent states, even the slightest amount of noise will be detrimental to the process. In turn, this entails that with SIO/PIO one cannot do much more than `isolate' the coherence that was already there in the system, while there is no hope to `concentrate' it if it was dispersed in the first place.

However, this picture changes dramatically when we consider the task of coherence dilution instead of that of coherence distillation. As we discussed in Section~\ref{subsec PIO cost}, in coherence dilution we aim to prepare $n$ copies of a target state $\rho$ by means of operations in a certain class and using up as resources some $\floor{rn}$ coherence bits $\Psi_2$. Although we allow for a small error in this preparation process, we require that such error vanishes in the asymptotic limit $n\to\infty$. The maximal rate $r$ for which this procedure is possible is known was defined in Eq.~\eqref{cost} as the cost of the state $\rho$ relative to the given class of operations. As the name suggests, in coherence dilution noisy operations are not necessarily useless, as the target state can be noisy itself. At the intuitive level, this may lead us to conjecture that sufficiently mixed states have finite SIO/PIO cost.

Indeed, this turns out to be the case. The problem of coherence dilution under SIO has been solved in~\cite{Xiao2015, Winter2016}, where it was shown that 
\bb
C_{c,\SIO}(\rho) = C_{c,\IO}(\rho) = C_f(\rho)\, ,
\label{distillable SIO/IO}
\ee
where $C_f$ is the coherence of formation defined in Eq.~\eqref{coherence of formation}. 
To prove Eq.~\eqref{distillable SIO/IO}, observe that: (i) by~\cite[Theorem~8]{Winter2016}, the IO coherence cost is given by $C_{c,\IO}(\rho) = C_f(\rho)$; (ii) since SIO are special cases of IO, it holds that $C_{c,\SIO}(\rho)\geq C_{c,\IO}(\rho)$ by construction; (iii) however, $C_{c,\SIO}(\rho)\leq C_f(\rho)$, because the protocol described in the proof of~\cite[Theorem~8]{Winter2016} involves only the preparation of pure states, and SIO are equivalent to IO as far as pure-to-pure transformation are concerned~\cite{Xiao2015}.

The above construction confirms our intuition: although much weaker than IO at distilling coherence bits, SIO are as powerful as IO when it comes to coherence dilution. The relevance of the above results for experimental practice is however hindered by the fact that the implementation of SIO still requires coherent (destructive) measurements on ancillary systems~\cite{Yadin2016}. To obtain a feasible protocol to prepare states in a reliable way with minimal coherence consumption, we instead need to look at physically incoherent operations, i.e.\ PIO. As we mentioned before, the problem of computing the PIO coherence cost does not seem to have been considered before. In this section we solve this problem completely by providing an analytical formula for the PIO coherence cost of a generic state (Theorem~\ref{C c PIO thm}). Curiously, this turns out to be given by an expression similar to that of the coherence of formation (Eq.~\eqref{coherence of formation}), but with the infimum running over convex decompositions comprising uniformly coherent states only. We duly dub this quantity \emph{uniform coherence of formation}. After discussing some preliminary notions in Subsection~\ref{subsec preliminaries}, we introduce and study the uniform coherence of formation in Section~\ref{subsec CfM}. In the subsequent~\eqref{subsec PIO cost final} we finally give the full proof of Theorem~\ref{C c PIO thm}.

\subsection{Preliminaries} \label{subsec preliminaries}

Remember that we defined $\M$ as the set formed by all uniformly coherent states of sizes $k=1,\ldots, d$. In what follows we will look at its convex hull, denoted $\co (\M)$. Observe that since $\M$ is compact the same is true of $\co(\M)$. The reason of our interest lies in the following fact.

\begin{lemma} \label{PIO preserve co(M) lemma}
Physically incoherent operations preserve the set $\co(\M)$. Namely, if $\rho\in \co(\M)$ and $\Lambda$ is a PIO then also $\Lambda(\rho)\in \co(\M)$.
\end{lemma}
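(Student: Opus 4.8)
The plan is to reduce to the case of an \emph{elementary} PIO, then to a single incoherent projector followed by an incoherent unitary, and finally to a direct check on a single uniformly coherent state. First I would invoke the structure described in Remark~\ref{elementary PIO rem}: a generic PIO is a convex combination of elementary PIO of the form $\Lambda(\cdot)=\sum_\beta U_\beta \Pi_\beta (\cdot) \Pi_\beta U_\beta^\dag$. Since $\co(\M)$ is convex, it suffices to show that it is preserved by each elementary PIO; and since $\co(\M)$ is closed under convex combinations, it even suffices to show that for $\rho\in\co(\M)$ the individual (subnormalised) terms $U_\beta \Pi_\beta \rho\, \Pi_\beta U_\beta^\dag$ are nonnegative multiples of states in $\co(\M)$ — indeed $\Lambda(\rho)$ would then be a convex combination of elements of $\co(\M)$, with weights $\Tr[\Pi_\beta \rho \Pi_\beta]$ summing to $1$.

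Next I would reduce to a single uniformly coherent pure state. Writing $\rho=\sum_\alpha p_\alpha \Psi_\alpha$ with $\ket{\Psi_\alpha}\in\M_{k_\alpha}$, linearity of $\rho\mapsto U_\beta \Pi_\beta \rho \Pi_\beta U_\beta^\dag$ means it is enough to understand the action of a single incoherent projector $\Pi_\beta=\Pi_I$ and a single incoherent unitary $U=\sum_j e^{i\phi_j}\ketbraa{\pi(j)}{j}$ on one uniformly coherent state $\ket{\Psi}=\frac{1}{\sqrt k}\sum_{j\in J} e^{i\theta_j}\ket{j}$. The key computation is then elementary: $\Pi_I\ket{\Psi}=\frac{1}{\sqrt k}\sum_{j\in J\cap I} e^{i\theta_j}\ket{j}$, which, after normalising, is exactly a uniformly coherent state of size $|J\cap I|$ (and the subnormalised vector has squared norm $|J\cap I|/k$); if $J\cap I=\emptyset$ the term simply vanishes and contributes nothing. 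Applying the incoherent unitary $U$ only permutes the surviving basis labels and adjusts phases, so $U\Pi_I\ket{\Psi}$ is again proportional to a uniformly coherent state, of the same size $|J\cap I|$. Summing over $\alpha$ with weights $p_\alpha$, the (subnormalised) operator $U\Pi_I\rho\,\Pi_I U^\dag$ is a nonnegative combination of $\ketbra{\cdot}$'s with each $\ket{\cdot}$ uniformly coherent, hence lies in a nonnegative scaling of $\co(\M)$; dividing by its trace puts it in $\co(\M)$ itself.

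I do not expect a serious obstacle here — the lemma is essentially a bookkeeping argument once one has the right normal form for PIO. The only point requiring a little care is the handling of \emph{subnormalised} (and possibly zero) terms: one must track traces so as to recombine the pieces into a genuine convex combination, and observe that a uniformly coherent state of size $k=0$ (empty support) simply does not occur because those outcomes have probability zero. One should also note that $\co(\M)$ is a legitimate target: since $\M=\bigcup_{k=1}^d \M_k$ is compact, $\co(\M)$ is compact and in particular closed, so no issue arises from taking closures. Assembling these observations gives the claim.
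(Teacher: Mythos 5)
Your proposal is correct and follows essentially the same route as the paper's own proof: reduce to an elementary PIO by convexity, decompose $\rho$ into uniformly coherent states, and observe that each $U_\beta \Pi_\beta \ket{\Psi_\alpha}$ is proportional to a uniformly coherent state (or vanishes), so that $\Lambda(\rho)$ is a convex combination of elements of $\co(\M)$. The only difference is that you spell out explicitly the normalisation bookkeeping and the handling of vanishing terms, which the paper leaves implicit.
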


\begin{proof}
Up to convex combinations, it suffices to show the claim for an elementary PIO $\Lambda$ acting as in~\eqref{elementary PIO Kraus}. Given a decomposition of $\rho\in\co(\M)$ as $\rho=\sum_\alpha p_\alpha \Psi_\alpha$, with $\ket{\Psi_\alpha}\in\M_{k_\alpha}$, we write
\bbb
\Lambda(\rho) = \sum_{\alpha,\beta} p_\alpha U_\beta \Pi_\beta \Psi_\alpha \Pi_\beta U_\beta^\dag \in \co(\M)\, ,
\eee
where we observed that $U_\beta \Pi_\beta \ket{\Psi_\alpha}$ is proportional to a uniformly coherent state for all $\alpha,\beta$.
\end{proof}

The above result immediately shows that since in the task of PIO coherence dilution we start from a maximally coherent (and hence uniformly coherent) state, we cannot hope to construct any state that does not belong to $\co(\M)$. Since the extreme points of $\co(\M)$ are precisely the uniformly coherent states, any other pure state necessarily lies outside of it. Thus, we immediately retrieve the result of~\cite{Chitambar-PIO} that all pure states except for the uniformly coherent ones have infinite PIO coherence cost. We now take the chance to extend this result by providing a simple necessary criterion to check whether a given state is in $\co(\M)$ or not.

\begin{lemma} \label{excluded states lemma}
Every state $\rho\in\co(\M)$ has the property that $|\rho_{ij}|\leq \min\{\rho_{ii}, \rho_{jj}\}$ for all pairs of indices $i,j$. Consequently, the only pure states in $\co(\M)$ are uniformly coherent.
\end{lemma}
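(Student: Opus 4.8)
The plan is to prove the entrywise inequality $|\rho_{ij}|\leq\min\{\rho_{ii},\rho_{jj}\}$ for states in $\co(\M)$ by first verifying it on the extreme points $\M_k$ and then arguing that the relevant quantities behave well under convex combinations. For a single uniformly coherent state $\ket{\Psi}=\frac{1}{\sqrt{k}}\sum_{j\in J}e^{i\theta_j}\ket{j}$ of size $k$ one has $\Psi_{ii}=\frac{1}{k}$ if $i\in J$ and $0$ otherwise, while $|\Psi_{ij}|=\frac{1}{k}$ if $i,j\in J$ and $0$ otherwise; in every case $|\Psi_{ij}|\leq\min\{\Psi_{ii},\Psi_{jj}\}$ holds, in fact with equality whenever both diagonal entries are nonzero. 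So the inequality holds on $\M$.

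The subtlety is that the function $\rho\mapsto\min\{\rho_{ii},\rho_{jj}\}-|\rho_{ij}|$ is \emph{not} concave, so one cannot simply take convex combinations of the bound. The fix is to observe that the weaker inequality we actually need, namely $|\rho_{ij}|\leq\rho_{ii}$ (and symmetrically $|\rho_{ij}|\leq\rho_{jj}$), \emph{is} stable under mixing. Indeed, writing $\rho=\sum_\alpha p_\alpha\Psi_\alpha$ with $\ket{\Psi_\alpha}\in\M_{k_\alpha}$, one gets $|\rho_{ij}|=\big|\sum_\alpha p_\alpha(\Psi_\alpha)_{ij}\big|\leq\sum_\alpha p_\alpha|(\Psi_\alpha)_{ij}|\leq\sum_\alpha p_\alpha(\Psi_\alpha)_{ii}=\rho_{ii}$, using the triangle inequality and then the pointwise bound on each $\Psi_\alpha$. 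The same computation with $jj$ in place of $ii$ gives $|\rho_{ij}|\leq\rho_{jj}$, and combining the two yields $|\rho_{ij}|\leq\min\{\rho_{ii},\rho_{jj}\}$ for all $\rho\in\co(\M)$.

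For the consequence about pure states: if $\rho=\ketbra{\phi}$ is pure with $\ket{\phi}=\sum_i c_i\ket{i}$, then $\rho_{ii}=|c_i|^2$ and $|\rho_{ij}|=|c_i||c_j|=\sqrt{\rho_{ii}\rho_{jj}}$, so the inequality $|\rho_{ij}|\leq\min\{\rho_{ii},\rho_{jj}\}$ forces $\sqrt{\rho_{ii}\rho_{jj}}\leq\min\{\rho_{ii},\rho_{jj}\}$, which (for $\rho_{ii},\rho_{jj}>0$) is only possible if $\rho_{ii}=\rho_{jj}$. Hence all nonzero diagonal entries of $\rho$ are equal, say to $1/k$ where $k=|\{i:c_i\neq 0\}|$, so $\ket{\phi}$ is uniformly coherent of size $k$. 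I do not expect a real obstacle here; the only point requiring a moment's care is to resist applying convexity to the non-concave $\min$-expression and instead split into the two one-sided bounds first, as above.
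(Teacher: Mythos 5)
Your proof is correct and takes essentially the same route as the paper: bound $|\rho_{ij}|$ via the triangle inequality and the entrywise property of each uniformly coherent $\Psi_\alpha$, then for the pure-state consequence use $|\rho_{ij}|=\sqrt{\rho_{ii}\rho_{jj}}$ to force all nonzero diagonal entries to be equal. One small correction: your aside that $\rho\mapsto\min\{\rho_{ii},\rho_{jj}\}-|\rho_{ij}|$ is not concave is mistaken — it is concave, since the minimum of the linear functionals $\rho\mapsto\rho_{ii}$ and $\rho\mapsto\rho_{jj}$ is concave and $-|\rho_{ij}|$ is concave ($\rho\mapsto\rho_{ij}$ being linear), which is precisely why the paper can use $|(\Psi_\alpha)_{ij}|=\min\{(\Psi_\alpha)_{ii},(\Psi_\alpha)_{jj}\}$ termwise and pull the $\min$ outside the convex combination; your split into the two one-sided bounds is perfectly valid but not actually needed.
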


\begin{proof}
Let $\rho$ be decomposed as $\rho=\sum_\alpha p_\alpha \Psi_\alpha$, where the states $\Psi_\alpha = \ketbra{\Psi_\alpha}$ are uniformly coherent on a set $J_\alpha$. For $i\in [d]$ and a subset $J\subseteq [d]$, define the Kronecker delta symbol as $\delta_{i,J} =1$ if $i\in J$ and $0$ otherwise. Then for all $\alpha$
\bbb
\left| (\Psi_\alpha)_{ij} \right| = \frac{\delta_{i,J_\alpha} \delta_{j,J_\alpha}}{k_\alpha} = \min\left\{ (\Psi_\alpha)_{ii},\, (\Psi_\alpha)_{jj}\right\} ,
\eee
implying that
\begin{align*}
|\rho_{ij}| &= \left| \sumno_\alpha p_\alpha (\Psi_\alpha)_{ij} \right| \\[.5ex]
&\leq \sum_\alpha p_\alpha \left| (\Psi_\alpha)_{ij}\right| \\
&= \sum_\alpha p_\alpha \min\left\{ (\Psi_\alpha)_{ii},\, (\Psi_\alpha)_{jj}\right\} \\
&\leq \min\left\{ \sumno_\alpha p_\alpha (\Psi_\alpha)_{ii},\, \sumno_\alpha p_\alpha (\Psi_\alpha)_{jj} \right\} \\
&= \min\left\{\rho_{ii},\, \rho_{jj}\right\} .
\end{align*}
This proves the first claim. Now, take a pure state $\ket{\psi} = \sum_i z_i \ket{i}$. If $\ket{\psi}$ is not uniformly coherent then $0<|z_i|<|z_j|$ for some $i,j$. It is then easy to see that the projector $\psi=\ketbra{\psi}$ satisfies $|\psi_{ij}| = |z_i| |z_j| > |z_i|^2 = \psi_{ii}$, implying that $\psi\notin\co(\M)$.
\end{proof}

The above result could make us fear that the set $\co(\M)$ is too meagre, which would seriously hinder PIO coherence dilution via Lemma~\ref{PIO preserve co(M) lemma}. Fortunately, we now show that this is not the case. In fact, although $\M$ has measure zero, $\co(\M)$ turns out to have nonzero volume, as can be proved e.g.\ by showing that it contains a full ball around the maximally mixed state.

\begin{lemma} \label{diag dom state conv M lemma}
Let $\rho$ be a diagonally dominant state on a system of dimension $d$, i.e.\ let it be such that $\rho_{ii} \geq \sum_{j\neq i} |\rho_{ij}|$ for all $i=1,\ldots,d$. Then $\rho\in\co(\M)$. Consequently, for every state $\rho$ in dimension $d$ it holds that
\bb
\left\|\rho - \frac{\id}{d}\right\|_{1 \to 1} \leq \frac1d\quad \Longrightarrow\quad \rho \in  \co(\M)\, ,
\label{sufficient conv M}
\ee
where $\|X\|_{1 \to 1}\coloneqq \max_{1\leq i\leq d} \sum_{j=1}^d |X_{ij}|$ is the so-called max-row sum norm.
\end{lemma}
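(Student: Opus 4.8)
The plan is to write down an explicit convex decomposition of a diagonally dominant $\rho$ into uniformly coherent states of sizes $1$ and $2$ only, which by definition places $\rho$ in $\co(\M)$. First I would fix notation for the phases: for every unordered pair $\{i,j\}$ with $i\neq j$ and $\rho_{ij}\neq 0$, choose a size-$2$ uniformly coherent state $\Psi_{ij}\in\M_2$ supported on $\{i,j\}$ whose off-diagonal entry carries the argument of $\rho_{ij}$, i.e.\ $\ket{\Psi_{ij}}=\tfrac{1}{\sqrt2}\big(\ket{i}+e^{i\theta_{ij}}\ket{j}\big)$ with $\theta_{ij}$ chosen so that $(\Psi_{ij})_{ij}=\tfrac12\,\rho_{ij}/|\rho_{ij}|$; such a $\theta_{ij}$ exists. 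Note that $(\Psi_{ij})_{ii}=(\Psi_{ij})_{jj}=\tfrac12$ and all remaining entries of $\Psi_{ij}$ vanish, and recall that $\ketbra{i}\in\M_1$ for each $i$.

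The claim I would then verify is the identity
\bbb
\rho = \sum_{i<j,\, \rho_{ij}\neq 0} 2|\rho_{ij}|\, \Psi_{ij} \;+\; \sum_{i=1}^d \Big(\rho_{ii} - \sum_{j\neq i}|\rho_{ij}|\Big)\, \ketbra{i}\, ,
\eee
checked entry by entry: the off-diagonal entry $(i,j)$ of the right-hand side is fed only by the single term $2|\rho_{ij}|\Psi_{ij}$ and equals $\rho_{ij}$, while the diagonal entry $(i,i)$ receives $\sum_{j\neq i}|\rho_{ij}|$ from the size-$2$ terms together with the coefficient of $\ketbra{i}$, for a total of $\rho_{ii}$. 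The coefficients of the $\Psi_{ij}$ are manifestly nonnegative, and those of the $\ketbra{i}$ are nonnegative \emph{precisely} by the diagonal-dominance hypothesis $\rho_{ii}\geq\sum_{j\neq i}|\rho_{ij}|$. Finally, since $\sum_i\sum_{j\neq i}|\rho_{ij}|=2\sum_{i<j}|\rho_{ij}|$, the coefficients on the right-hand side sum to $\Tr\rho=1$, so the displayed identity is a genuine convex decomposition of $\rho$ into elements of $\M$; hence $\rho\in\co(\M)$.

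For the implication~\eqref{sufficient conv M} I would reduce to the first part by observing that $\|\rho-\id/d\|_{1\to 1}\leq 1/d$ already forces diagonal dominance: the $i$-th row sum of $\rho-\id/d$ is $|\rho_{ii}-1/d|+\sum_{j\neq i}|\rho_{ij}|\leq 1/d$, and using $|x|\geq -x$ with $x=\rho_{ii}-1/d$ one gets $\sum_{j\neq i}|\rho_{ij}|\leq 1/d-|\rho_{ii}-1/d|\leq 1/d-(1/d-\rho_{ii})=\rho_{ii}$, which is exactly the hypothesis of the first part.

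I do not anticipate a genuine obstacle: the core of the proof is a one-line linear-algebra identity. The only points needing a little care are the phase bookkeeping in the definition of $\Psi_{ij}$ (so that the off-diagonal entries reproduce $\rho_{ij}$ with the correct argument, not merely the correct modulus) and the degenerate pairs with $\rho_{ij}=0$, which are simply omitted from the sum. As a by-product the argument yields an explicit, implementable preparation recipe, consistent with the necessary criterion of Lemma~\ref{excluded states lemma}.
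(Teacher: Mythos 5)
Your proof is correct and follows essentially the same route as the paper: the same explicit convex decomposition of a diagonally dominant $\rho$ into elements of $\M_1\cup\M_2$ (incoherent pure states $\ketbra{i}$ with weights $\rho_{ii}-\sum_{j\neq i}|\rho_{ij}|$ plus phase-adjusted size-$2$ uniformly coherent states on the pairs $\{i,j\}$), followed by the same elementary observation that $\|\rho-\id/d\|_{1\to 1}\leq 1/d$ forces diagonal dominance. The only difference is cosmetic: your coefficient $2|\rho_{ij}|$ on the size-$2$ states is the correctly normalised one (the paper's displayed coefficient $|\rho_{ij}|$ is an evident typo, consistent with the weight $2|z|$ used in the qubit decomposition of Proposition~\ref{CfM qubits prop}).
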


\begin{rem}
Using the Cauchy--Schwartz inequality, it is not difficult to show that all states with low enough purity automatically obey Eq.~\eqref{sufficient conv M}. Namely, if $\Tr \rho^2\leq \frac{d^2+1}{d^3}$ for a state in dimension $d$ then Eq.~\eqref{sufficient conv M} is necessarily satisfied.
\end{rem}

\begin{proof}[Proof of Lemma~\ref{diag dom state conv M lemma}]
Mimicking a technique first employed in~\cite{Ringbauer2017, Johnston18}, we can write a diagonally dominant state $\rho$ as a convex combination
\begin{align*}
\rho &= \sum_{i<j} |\rho_{ij}| \, \frac{\frac{\rho_{ij}}{|\rho_{ij}|} \ket{i} + \ket{j}}{\sqrt2} \, \frac{\frac{\rho_{ij}^*}{|\rho_{ij}|} \bra{i} + \bra{j}}{\sqrt2} \\
&\qquad + \sum_{i} \left(\rho_{ii} - \sumno_{j\neq i} |\rho_{ij}|\right) \ketbra{i}\, ,
\end{align*}
which shows that $\rho\in \co\left( \M_1 \cup \M_2\right)\subseteq \co(\M)$ and proves the first claim. Finally, the second claim follows from the elementary observation that if $\|X\|_{1\to 1}\leq 1$ then $\id+X$ is diagonally dominant.
\end{proof}

\subsection{Uniform coherence of formation} \label{subsec CfM}

Given a state $\rho\in\M$, we define its \emph{uniform coherence of formation} as the convex roof
\bb
C_{f}^\M(\rho) \coloneqq \inf_{\substack{\\ \sum_\alpha p_\alpha \Psi_\alpha = \rho \\[.5ex] \ket{\Psi_\alpha}\in\M_{k_\alpha}}} \sum_\alpha p_\alpha \log k_\alpha\, ,
\label{CfM}
\ee
where $\Psi_\alpha = \ketbra{\Psi_\alpha}$, as usual. If the infimum in~\eqref{CfM} is over an empty set, that is, if $\rho\notin \co(\M)$, we set by convention $C_f^\M(\rho)\coloneqq +\infty$. Observe that for all uniformly coherent states $\ket{\Psi}\in\M_k$ it holds that $C_f^\M(\Psi) = \log k$. As is easy to see, the proof of Lemma~\ref{diag dom state conv M lemma} shows that all diagonally dominant states $\rho$, thereby including those obeying the inequality in Eq.~\eqref{sufficient conv M}, satisfy $C_f^\M(\rho)\leq 1$. On a different line, it follows e.g.\ from~\cite[Lemma~A.2]{Uhlmann1998} that the infimum in Eq.~\eqref{CfM} is always achieved on a decomposition formed by no more than $d^2$ elements, with $d$ being the dimension of the underlying Hilbert space.

\begin{prop} \label{properties CfM prop}
The uniform coherence of formation is:
\begin{enumerate}[(a)]
\item a convex strong monotone under PIO, meaning that for all states and for all PIO $\Lambda$ acting as in~\eqref{PIO Kraus} it holds that
\bb
\begin{aligned}
C_f^\M\left( \Lambda(\rho)\right) &\leq \sum_{\alpha,\beta} p_\alpha\, q_{\beta|\alpha}\, C_f^\M\big( \widetilde{\rho}_{\beta |\alpha} \big) \\
&\leq C_f^\M(\rho)\, ,
\end{aligned}
\label{strong monotonicity CfM}
\ee
where $q_{\beta|\alpha}\coloneqq \Tr\big[\rho\, \Pi_{\beta|\alpha} \big]$ and $\widetilde{\rho}_{\beta|\alpha}\coloneqq \frac{1}{q_{\beta|\alpha}}\, U_{\alpha,\beta} \Pi_{\beta|\alpha}\, \rho \, \Pi_{\beta|\alpha} U_{\alpha,\beta}^\dag$.
\item superadditive, i.e. such that
\bb
C_f^\M (\rho_{AB}) \geq C_f^\M(\rho_A) + C_f^\M(\rho_B)
\label{superadditivity CfM}
\ee
for all bipartite states $\rho_{AB}$, where $\rho_A\coloneqq \Tr_B\rho_{AB}$ and similarly for $\rho_B$;
\item fully additive on tensor products, meaning that
\bb
C_f^\M (\rho_A \otimes \sigma_B) = C_f^\M(\rho_A) + C_f^\M(\sigma_B)
\label{additivity CfM}
\ee
for all pairs of states $\rho_A$, $\sigma_B$;
\item lower semicontinuous.
\end{enumerate}
\end{prop}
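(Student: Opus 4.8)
The plan is to prove the four parts in the order (a), (c), (b), (d), since (a) is the workhorse and (c) follows from (a) and (b). For part~(a), I would first reduce to an elementary PIO as in Remark~\ref{elementary PIO rem}, using convexity of $C_f^\M$ (which itself follows directly from the definition~\eqref{CfM} by concatenating decompositions). Given an optimal or near-optimal decomposition $\rho = \sum_\alpha p_\alpha \Psi_\alpha$ with $\ket{\Psi_\alpha}\in\M_{k_\alpha}$, I would push each term through the channel: $U_\beta \Pi_\beta \ket{\Psi_\alpha}$ is again proportional to a uniformly coherent state, of size $|J_\alpha \cap \supp(\Pi_\beta)| =: k_{\alpha,\beta} \le k_\alpha$. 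The key inequality is that the resulting rate $\sum_{\alpha,\beta} p_\alpha \|\Pi_\beta\ket{\Psi_\alpha}\|^2 \log k_{\alpha,\beta}$ does not exceed $\sum_\alpha p_\alpha \log k_\alpha$; this is just the concavity of $\log$ against the sub-probability weights $\|\Pi_\beta\ket{\Psi_\alpha}\|^2 = k_{\alpha,\beta}/k_\alpha$ summing to $1$ over $\beta$. Bookkeeping these contributions into the branches $\widetilde\rho_{\beta|\alpha}$ gives the first inequality in~\eqref{strong monotonicity CfM}; the second is immediate since each $C_f^\M(\widetilde\rho_{\beta|\alpha})$ is bounded by the $\alpha,\beta$-summand by construction of the decomposition.

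For part~(b), superadditivity, I would start from an optimal decomposition $\rho_{AB} = \sum_\alpha p_\alpha \Psi_\alpha$ into uniformly coherent states of $\C^{d_A}\otimes\C^{d_B}$. The obstacle is that $\ket{\Psi_\alpha}$ need not factorize. The idea is to apply a local dephasing-type PIO on $B$ — specifically the incoherent measurement $\{\Pi_b^B\}$ in the computational basis of $B$ — which is a PIO; by part~(a) it does not increase $C_f^\M$, and it maps each $\Psi_\alpha$ to a mixture of uniformly coherent states of the form $\ket{\phi}_A \otimes \ket{b}_B$. This already shows $C_f^\M(\rho_{AB}) \ge C_f^\M(\Delta_B(\rho_{AB}))$, and one then observes that a product state $\ket\phi_A\otimes\ket{b}_B$ has the same uniform coherence parameter as $\ket\phi_A$, so the reduced decomposition on $A$ gives $C_f^\M(\rho_A)$; a symmetric argument, or a direct combinatorial argument on the index sets $J_\alpha \subseteq [d_A]\times[d_B]$ decomposed via the product structure, yields the additive split $C_f^\M(\rho_A)+C_f^\M(\rho_B)$. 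Part~(c) is then the combination of (b) with the trivial product-decomposition upper bound $C_f^\M(\rho_A\otimes\sigma_B)\le C_f^\M(\rho_A)+C_f^\M(\sigma_B)$ obtained by tensoring optimal decompositions.

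For part~(d), lower semicontinuity, I would use the compactness of $\M_k$ for each fixed $k$ together with a standard selection argument: given $\rho_n\to\rho$, take near-optimal decompositions with at most $d^2$ terms each (valid by~\cite[Lemma~A.2]{Uhlmann1998}), pass to a subsequence so that all the weights $p_\alpha^{(n)}$, the sizes $k_\alpha^{(n)}$ (which live in the finite set $\{1,\dots,d\}$, hence are eventually constant), and the pure states $\Psi_\alpha^{(n)}$ converge; the limiting data furnishes a decomposition of $\rho$ whose rate is $\le \liminf_n C_f^\M(\rho_n)$. The one point requiring care is the case $\liminf_n C_f^\M(\rho_n) = +\infty$, which is vacuous, and the case where some $\rho_n\notin\co(\M)$, which simply drops out of the liminf. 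I expect part~(b) to be the main obstacle: turning the non-product uniformly coherent states $\ket{\Psi_\alpha}$ into a genuinely additive lower bound requires the right PIO to apply in part~(a), and getting the index-set combinatorics to line up with the two reduced states simultaneously is the delicate step.
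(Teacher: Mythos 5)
Your treatments of (a), (c) and (d) are sound and essentially identical to the paper's: for (a) you push an optimal decomposition through an elementary PIO, use that $U_\beta\Pi_\beta\ket{\Psi_\alpha}$ is proportional to a uniformly coherent state of size $k_{\alpha,\beta}$ with $\sum_\beta k_{\alpha,\beta}=k_\alpha$, and close with concavity/monotonicity of the logarithm (the paper phrases the two inequalities in \eqref{strong monotonicity CfM} slightly differently -- the first is plain convexity, the second carries the log argument -- but your ingredients are the right ones); (c) is the same two-line combination of (b) with the product-decomposition upper bound; (d) is the same Carath\'eodory-plus-compactness selection argument, with the sizes $k_\alpha^{(n)}$ eventually constant along a subsequence.

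The genuine gap is in (b), and it is exactly the step you flag as ``delicate'' without supplying it. Your primary route -- dephase $B$ via a PIO, invoke (a), and then argue ``symmetrically'' -- can only deliver $C_f^\M(\rho_{AB})\geq C_f^\M(\rho_A)$ and, separately, $C_f^\M(\rho_{AB})\geq C_f^\M(\rho_B)$; adding these gives $2\,C_f^\M(\rho_{AB})\geq C_f^\M(\rho_A)+C_f^\M(\rho_B)$, which is strictly weaker than \eqref{superadditivity CfM}. The additive split needs a joint control of the two marginals that no single application of monotonicity provides. The paper gets it by first reducing (via convexity) to a pure $\ket{\Psi}_{AB}\in\M_k$, written as $\frac{1}{\sqrt{k}}\sum_{i,j}M_{ij}\ket{ij}$ with $|M_{ij}|\in\{0,1\}$, and introducing the row and column sums $k_i=\sum_j|M_{ij}|$ and $h_j=\sum_i|M_{ij}|$, so that $\sum_i k_i=\sum_j h_j=k$. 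Dephasing $B$ (resp.\ $A$) exhibits decompositions giving $C_f^\M(\Psi_A)\leq \sum_j \frac{h_j}{k}\log h_j$ and $C_f^\M(\Psi_B)\leq \sum_i \frac{k_i}{k}\log k_i$; the crucial point is that the \emph{sum} of these two bounds can be rewritten as $\sum_{i,j}\frac{|M_{ij}|}{k}\log(k_i h_j)$ and then bounded, using concavity of the logarithm together with $|M_{ij}|\leq 1$, by $\log\big(\frac1k\sum_{i,j}k_i h_j\big)=\log k=C_f^\M(\Psi_{AB})$. This simultaneous bookkeeping of $k_i$ and $h_j$ against the single budget $\log k$ is the heart of the superadditivity proof; your proposal names a ``direct combinatorial argument on the index sets'' that would do this job but does not produce it, so as it stands part (b) -- and hence also the $\geq$ direction of (c) -- is not established.
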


\begin{proof}
We start from claim (a). The fact that $C_f^\M$ is convex follows immediately from its definition. To show that it is also a strong PIO monotone, it then suffices to consider an elementary PIO $\Lambda$ that acts as in~\eqref{elementary PIO Kraus}. Consider the decomposition $\rho=\sum_\gamma p_\gamma \Psi_\gamma$ of $\rho$ that achieves the infimum in Eq.~\eqref{CfM}, in formula $C_f^\M(\rho) = \sum_\gamma p_\gamma \log k_\gamma$, where $\ket{\Psi_\gamma} \in \M_{k_\gamma}$. Clearly, for all $\beta$ and $\gamma$ we will have that $U_\beta \Pi_\beta \ket{\Psi_\gamma}$ is proportional to some uniformly coherent state of size $k_{\beta|\gamma}$:
\bbb
U_\beta \Pi_\beta \ket{\Psi_\gamma} = \sqrt{\frac{k_{\beta|\gamma}}{k_\gamma}} \ket{\Psi_{\beta|\gamma}},\qquad \ket{\Psi_{\beta|\gamma}} \in \M_{k_{\beta|\gamma}}\, .
\eee
Moreover, the completeness relation $\sum_\beta \Pi_\beta=\id$ imposes that
\bb
\sumno_\beta k_{\beta|\gamma} = k_\gamma\qquad \forall\ \gamma\, .
\label{sum k alpha beta}
\ee
The post-measurement state corresponding to outcome $\beta$ is $\widetilde{\rho}_\beta = \sumno_\gamma \frac{p_\gamma k_{\beta|\gamma}}{q_\beta k_\gamma}\, \Psi_{\beta|\gamma}$, where $q_\beta = \sum_{\gamma'} \frac{p_{\gamma'} k_{\beta|\gamma'}}{k_{\gamma'}}$. 
Note that $\Lambda(\rho)= \sum_\beta q_\beta \widetilde{\rho}_{\beta}$. Using the convexity of $C_f^\M$, the concavity of the logarithm, and the normalisation condition~\eqref{sum k alpha beta}, we deduce that
\begin{align*}
C_f^\M\left(\Lambda(\rho)\right) &\leq \sum_\beta q_\beta C_f^\M \left(\widetilde{\rho}_\beta\right) \\
&\leq \sum_\beta q_\beta \sum_\gamma \frac{p_\gamma k_{\beta|\gamma}}{q_\beta k_\gamma}\, \log k_{\beta|\gamma} \\
&= \sum_\gamma p_\gamma \sum_\beta \frac{k_{\beta|\gamma}}{k_\gamma}\, \log k_{\beta|\gamma} \\
&\leq \sum_\gamma p_\gamma \log \left( \frac{1}{k_\gamma} \sumno_\beta k_{\beta|\gamma}^2\right) \\
&\leq \sum_\gamma p_\gamma \log \left( \frac{1}{k_\gamma} \left(\sumno_\beta k_{\beta|\gamma}\right)^2 \right) \\
&= \sum_\gamma p_\gamma \log k_\gamma \\
&= C_f^\M(\rho)\, .
\end{align*}
This completes the proof of~\eqref{strong monotonicity CfM} and of claim (a).

To prove claim (b), it suffices to show that for all uniformly coherent bipartite pure states $\ket{\Psi}_{AB}\in\M_k$ one has that
\bb
\log k = C_f^\M (\Psi_{AB}) \geq C_f^\M(\Psi_A) + C_f^\M(\Psi_B)\, .
\label{superadditity CfM pure states}
\ee
Before delving into the proof of Eq.~\eqref{superadditity CfM pure states}, let us show how this allows us to deduce Eq.~\eqref{superadditivity CfM}. For a generic decomposition $\rho_{AB}=\sum_\alpha p_\alpha \Psi_{AB}^{(\alpha)}$ of $\rho_{AB}$ into uniformly coherent states $\ket{\Psi^{(\alpha)}}_{AB}\in\M_{k_\alpha}$, using Eq.~\eqref{superadditity CfM pure states} we would obtain that
\begin{align*}
\sum_\alpha p_\alpha \log k_\alpha &= \sum_\alpha p_\alpha C_f^\M \left(\Psi_{AB}^{(\alpha)} \right) \\
&\geq \sum_\alpha p_\alpha \left( C_f^\M \left(\Psi_{A}^{(\alpha)} \right) + C_f^\M \left(\Psi_{B}^{(\alpha)} \right) \right) \\
&= \sum_\alpha p_\alpha C_f^\M \left(\Psi_{A}^{(\alpha)} \right) + \sum_\alpha p_\alpha C_f^\M \left(\Psi_{B}^{(\alpha)} \right) \\
&\geq C_f^\M(\Psi_A) + C_f^\M \left(\Psi_{B} \right) ,
\end{align*}
where the last inequality follows from the convexity of $C_f^\M$. Since the decomposition of $\rho_{AB}$ we considered was entirely arbitrary, this would imply Eq.~\eqref{superadditivity CfM}.

We now prove Eq.~\eqref{superadditity CfM pure states}. Write
\bb
\ket{\Psi}_{AB} = \frac{1}{\sqrt{k}}\sum_{i,j} M_{ij} \ket{ij} \in \M_k\, , 
\ee
where each entry of the $d_A\times d_B$ complex matrix $M$ is either $0$ or has modulus $1$, in formula $|M_{ij}|\in \{0,1\}$ for all $i,j$. Define
\bb
k_i \coloneqq \sum_j |M_{ij}|\, ,\qquad h_j \coloneqq \sum_i |M_{ij}|\, ,
\label{ki hj}
\ee
so that 
\bb
\sum_i k_i = \sum_j h_j = k\, .
\label{ki hj normalisation}
\ee
A quick calculation reveals that the partial trace $\Psi_A = \Tr_B \Psi_{AB}$ takes the form
\bbb
\Psi_A = \sum_j \frac{h_j}{k} \left( \frac{1}{\sqrt{h_j}} \sumno_i M_{ij}\ket{i}\right) \left( \frac{1}{\sqrt{h_j}} \sumno_i M_{ij}^* \bra{i}\right) .
\eee
Since this is a convex decomposition of $\Psi_A$ into uniformly coherent states, we deduce the estimate
\bbb
C_f^\M\left(\Psi_A\right) \leq \sum_j \frac{h_j}{k} \log h_j \, ;
\eee
analogously, it can be shown that
\bbb
C_f^\M\left(\Psi_B\right) \leq \sum_i \frac{k_i}{k} \log k_i \, .
\eee
Putting all together yields
\begin{align*}
C_f^\M\left(\Psi_A\right) + C_f^\M\left(\Psi_B\right) &\leq \sum_j \frac{h_j}{k} \log h_j + \sum_i \frac{k_i}{k} \log k_i \\
&\texteq{1} \sum_{i,j} \frac{|M_{ij}|}{k} \log h_j + \sum_{i,j} \frac{|M_{ij}|}{k} \log k_i \\
&= \sum_{i,j} \frac{|M_{ij}|}{k} \log (k_i h_j) \\
&\textleq{2} \log \left( \sumno_{i,j} \frac{|M_{ij}|}{k}\, k_i h_j \right) \\
&\textleq{3} \log \left( \frac1k \sumno_{i,j} k_i h_j \right) \\
&= \log \left( \frac1k \left(\sumno_i k_i\right) \left( \sumno_j h_j\right) \right) \\
&\texteq{4} \log k \, .
\end{align*}
The justification of the above derivation is as follows. 1: Comes from Eq.~\eqref{ki hj}; 2: is a consequence of the concavity of the logarithm, once one observes $|M_{ij}|/k$ is a probability distribution over $[d_A d_B]$ by Eq.~\eqref{ki hj normalisation}; 3: is an application of the inequality $|M_{ij}|\leq 1$; finally, 4: descends once again from the normalisation condition in Eq.~\eqref{ki hj normalisation}. This completes the proof of claim (b).

We now move on to (c). Applying (b) in the special case where $\rho_{AB}$ is a product state we arrive at the inequality $C_f^\M(\rho_A\otimes \sigma_B) \geq C_f^\M(\rho_A)+C_f^\M(\sigma_B)$. The converse relation is easily established by taking two decompositions $\rho_A=\sum_\alpha p_\alpha \Psi_\alpha$ and $\sigma_B = \sum_\beta q_\beta \Psi'_\beta$ that achieve the infima defining the uniform coherences of formation of $\rho_A$ and $\sigma_B$, respectively, and considering the `product' decomposition $\rho_A\otimes \sigma_B = \sum_{\alpha,\beta} p_\alpha q_\beta \Psi_\alpha\otimes \Psi'_\beta$.

To establish (d), we have to show that for all sequences of states $\rho_n$ such that $\lim_{n\to\infty}\rho_n = \rho$ it holds that $C_f^\M(\rho)\leq \liminf_{n\to\infty} C_f^\M(\rho_n)$. Up to taking subsequences, we can assume that the $\liminf$ is in fact a proper limit. Also, since if $\rho_n\notin \co(\M)$ frequently in $n$ then $\lim_{n\to\infty} C_f^\M(\rho_n) = \infty$, we can assume that we are in the nontrivial case where $\rho_n\in \co(\M)$ eventually.
Remembering that in dimension $d$ the optimal decomposition in Eq.~\eqref{CfM} can be taken to be formed by no more than $d^2$ elements, we can write $\rho_n =\sum_{\alpha=1}^{d^2} p_\alpha^{(n)} \Psi_\alpha^{(n)}$ for some uniformly coherent vectors $\ket{\Psi_\alpha^{(n)}}\in \M_{k_\alpha^{(n)}}$ such that
\bbb
C_f^\M(\rho_n) = \sum_{\alpha=1}^{d^2} p_\alpha^{(n)} \log k_\alpha^{(n)}\, .
\eee
By compactness, up to picking a further subsequence we can assume that $\lim_{n\to\infty} p_\alpha^{(n)}\eqqcolon p_\alpha$ and $\lim_{n\to\infty} \ket{\Psi_\alpha^{(n)}}\eqqcolon \ket{\Psi_\alpha}\in \M_{k_\alpha}$ exist for all $\alpha=1,\ldots, d^2$. Clearly, we will also have $\lim_{n\to\infty} k_\alpha^{(n)}=k_\alpha$, the sequence on the r.h.s.\ being actually eventually constant. Taking the limit on that subsequence we see that $\rho=\lim_{n\to\infty}\rho_n = \sum_{\alpha=1}^{d^2} p_\alpha \Psi_\alpha$ is in fact a legitimate decomposition of $\rho$, from which it follows that
\begin{align*}
C_f^\M(\rho) &\leq \sum_{\alpha=1}^{d^2} p_\alpha \log k_\alpha \\
&= \lim_{n\to\infty} \sum_{\alpha=1}^{d^2} p_\alpha^{(n)} \log k_\alpha^{(n)} \\
&= \lim_{n\to\infty} C_f^\M(\rho_n)\, ,
\end{align*}
completing the proof.
\end{proof}

\begin{rem}
The superadditivity of the standard coherence of formation was recently established in~\cite{Liu2018}, where a set of general conditions to determine superadditivity of convex roof coherence measures was also studied. However, due to the constrained nature of the convex roof that defines the uniform coherence of formation, the result in~\cite{Liu2018} does not appear to be directly applicable here.
\end{rem}

\begin{rem}
We suspect that $C_f^\M$ is in fact continuous and even asymptotically continuous when finite, i.e.\ on $\co(\M)$. However, we could not decide whether this is actually the case. In what follows we will only need the properties guaranteed by the above Proposition~\ref{properties CfM prop}.
\end{rem}

Before we close this section, we allow ourselves to present a simple yet general lower bound that -- among other things -- enables us to compute the uniform coherence of formation for all qubit states.

\begin{prop} \label{CfM qubits prop}
For all states $\rho$ it holds that
\bb
C_f^\M (\rho) \geq 2 \max_{i\neq j} |\rho_{ij}|\, .
\label{lower bound CfM}
\ee
In particular, for all single-qubit states
\bb
\rho = \begin{pmatrix} p & z \\ z^* & 1-p \end{pmatrix}
\label{qubit state}
\ee
the uniform coherence of formation of Eq.~\eqref{CfM} evaluates to
\bb
C_f^\M(\rho) = \left\{ \begin{array}{ll} 2|z| & \text{if $|z|\leq \min\{p,1-p\}$,} \\[1ex] +\infty & \text{otherwise.} \end{array}\right.
\label{CfM qubits}
\ee
\end{prop}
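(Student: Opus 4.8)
The plan is to prove the general lower bound \eqref{lower bound CfM} first, and then to read off the qubit formula \eqref{CfM qubits} by combining it with a single explicit decomposition. For \eqref{lower bound CfM}, fix indices $i\neq j$ and take any decomposition $\rho=\sum_\alpha p_\alpha\Psi_\alpha$ with $\ket{\Psi_\alpha}\in\M_{k_\alpha}$ supported on a set $J_\alpha$ of cardinality $k_\alpha$. Since $|(\Psi_\alpha)_{ij}|$ equals $1/k_\alpha$ when $i,j\in J_\alpha$ and $0$ otherwise, the elementary inequality $k\log k\geq 2$ for integers $k\geq 2$ (equality at $k=2$, with $t\mapsto t\log t$ increasing for $t\geq 2$) yields $|(\Psi_\alpha)_{ij}|\leq\tfrac12\log k_\alpha$ in every case: in the first case it rearranges to $\tfrac1{k_\alpha}\leq\tfrac12\log k_\alpha$, and in the second we simply have $|(\Psi_\alpha)_{ij}|=0\leq\tfrac12\log k_\alpha$. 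The triangle inequality then gives $|\rho_{ij}|\leq\sum_\alpha p_\alpha|(\Psi_\alpha)_{ij}|\leq\tfrac12\sum_\alpha p_\alpha\log k_\alpha$; passing to the infimum over decompositions yields $|\rho_{ij}|\leq\tfrac12\,C_f^\M(\rho)$, and maximising over $i\neq j$ proves \eqref{lower bound CfM} (the statement being vacuous when $\rho\notin\co(\M)$, since then $C_f^\M(\rho)=+\infty$).

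For a single-qubit state $\rho$ as in \eqref{qubit state} one has $\max_{i\neq j}|\rho_{ij}|=|z|$, so \eqref{lower bound CfM} already gives $C_f^\M(\rho)\geq 2|z|$. If $|z|>\min\{p,1-p\}$ then $|\rho_{12}|>\min\{\rho_{11},\rho_{22}\}$, hence $\rho\notin\co(\M)$ by Lemma~\ref{excluded states lemma} and $C_f^\M(\rho)=+\infty$ by convention, which is the second branch of \eqref{CfM qubits}. If instead $|z|\leq\min\{p,1-p\}$, then $\rho$ is diagonally dominant, and (following the argument of Lemma~\ref{diag dom state conv M lemma}) a direct computation shows that it admits the convex decomposition
\bbb
\rho = 2|z|\,\ketbra{\psi}\, +\, (p-|z|)\,\ketbra{0}\, +\, (1-p-|z|)\,\ketbra{1}\, ,\qquad \ket{\psi}\coloneqq\tfrac1{\sqrt2}\left(\tfrac{z}{|z|}\ket{0}+\ket{1}\right)\, ,
\eee
with the first term understood as absent if $z=0$; the three weights are nonnegative (this is exactly where $|z|\leq\min\{p,1-p\}$ enters) and sum to $1$, so this is a genuine decomposition into uniformly coherent states. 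As $\ket{\psi}\in\M_2$ and $\ket{0},\ket{1}\in\M_1$, it certifies $C_f^\M(\rho)\leq 2|z|\log 2=2|z|$, and together with the lower bound this gives $C_f^\M(\rho)=2|z|$.

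I do not expect a serious obstacle here; the one delicate point is pinning down the right constant in $|(\Psi_\alpha)_{ij}|\leq\tfrac12\log k_\alpha$, i.e.\ the bound $1/k\leq\tfrac12\log k$ for integers $k\geq 2$. It is tight precisely at $k=2$, that is, on coherence bits, which is exactly why \eqref{lower bound CfM} is saturated by the decomposition above, in which all the coherence is carried by a single coherence bit. The remaining steps (checking the stated qubit decomposition and nonnegativity of its weights, and invoking Lemmas~\ref{excluded states lemma} and~\ref{diag dom state conv M lemma}) are routine.
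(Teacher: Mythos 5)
Your proof is correct and follows essentially the same route as the paper's: the same triangle-inequality argument with the elementary bound $1/k\leq\tfrac12\log k$ for $k\geq 2$ to get \eqref{lower bound CfM}, Lemma~\ref{excluded states lemma} for the infinite branch, and the same explicit three-term decomposition (up to an immaterial phase convention in the coherence-bit vector) for the matching upper bound $2|z|$. No gaps.
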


\begin{proof}
To prove~\eqref{lower bound CfM}, take an arbitrary decomposition $\rho = \sum_\alpha p_\alpha \Psi_\alpha$, where $\ket{\Psi_\alpha}\in \M_{k_\alpha}$ for all $\alpha$. Consider two indices $i\neq j$, and write
\begin{align*}
|\rho_{ij}| &= \left| \sumno_\alpha p_\alpha (\Psi_\alpha)_{ij} \right| \\
&\leq \sumno_\alpha p_\alpha \left| (\Psi_\alpha)_{ij} \right| \\
&\texteq{1} \sumno_{\alpha:\, k_\alpha\geq 2} p_\alpha \left| (\Psi_\alpha)_{ij} \right| \\
&\textleq{2} \sumno_{\alpha:\, k_\alpha\geq 2} p_\alpha \frac{1}{k_\alpha} \\
&\textleq{3} \frac12 \sumno_{\alpha:\, k_\alpha\geq 2} p_\alpha \log k_\alpha \\
&= \frac12 \sumno_\alpha p_\alpha \log k_\alpha\, .
\end{align*}
Here, in step 1 we observed that only terms with $k_\alpha\geq 2$ contribute to the sum, for step 2 we noted that those terms that do contribute do so with a coefficient $1/k_\alpha$, and in step 3 we applied the elementary inequality $1/k\leq (\log k)/2$, valid for all $k\geq 2$. Taking the infimum over all decompositions of $\rho$ and the maximum over all pairs $i\neq j$ yields the sought lower bound.

From~\eqref{lower bound CfM} it follows immediately that $C_f^\M(\rho)\geq 2|z|$ for all qubit states parametrised as in~\eqref{qubit state}. To show that this lower bound is in fact tight, start by noting that if $|z| > \min\{p,1-p\}$ then by Lemma~\ref{excluded states lemma} we have that $\rho\notin \co(\M)$, i.e.\ $C_f^\M(\rho)=+\infty$. Otherwise, proceeding as in the proof of Lemma~\ref{diag dom state conv M lemma} we can write the decomposition
\bb
\rho = 2|z| \ketbra{\Psi_z} + \left( p - |z|\right) \ketbra{0} + \left(1-p - |z|\right) \ketbra{1}\, ,
\ee
where $\ket{\Psi_z}\coloneqq \frac{1}{\sqrt2} \left( \ket{0} + \frac{z^*}{|z|} \ket{1}\right)$, from which it follows that $C_f^\M(\rho) \leq 2|z|$. This completes the proof of~\eqref{CfM qubits}.
\end{proof}

\subsection{Coherence cost under PIO} \label{subsec PIO cost final}

We are finally ready to prove the main result of this section.

\setthmtag{\ref{C c PIO thm}}
\begin{thm}
The PIO coherence cost is given by the uniform coherence of formation:
\bb
C_{c,\PIO} (\rho) = C_f^\M(\rho)
\ee
for all states $\rho$.
\end{thm}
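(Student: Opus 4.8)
The plan is to prove the two inequalities $C_{c,\PIO}(\rho)\le C_f^\M(\rho)$ (achievability) and $C_{c,\PIO}(\rho)\ge C_f^\M(\rho)$ (converse) separately, with the converse being the hard part. For the \textbf{direct part}, I would first record the trivial one-shot fact that a single coherence bit $\Psi_{2}$ can be converted into any uniformly coherent state of size $k\le 2$ by PIO, and more generally that $\log k$ coherence bits suffice to make $\ket{\Psi_k}$: indeed a maximally coherent state $\ket{\Psi_{2^m}}$ is PIO-equivalent to $\ket{\Psi_2}^{\otimes m}$, and an incoherent unitary followed by an incoherent projective measurement turns $\ket{\Psi_{2^m}}$ into any $\ket{\Psi_k}$ with $k\le 2^m$. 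Then, given an optimal decomposition $\rho=\sum_\alpha p_\alpha\Psi_\alpha$ with $\ket{\Psi_\alpha}\in\M_{k_\alpha}$ and $\sum_\alpha p_\alpha\log k_\alpha = C_f^\M(\rho)$, I would run the standard dilution protocol: from $n$ copies, prepare roughly $np_\alpha$ copies of $\ket{\Psi_{k_\alpha}}$ for each $\alpha$ using $\approx n p_\alpha \log k_\alpha$ coherence bits, mix them with shared classical randomness, and permute the tensor factors randomly. By the law of large numbers and a standard concentration/typicality argument the resulting state is $\e$-close in trace norm to $\rho^{\otimes n}$ for large $n$, so the rate $C_f^\M(\rho)+o(1)$ is achievable; since the argument is continuous in the decomposition, $C_{c,\PIO}(\rho)\le C_f^\M(\rho)$.

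For the \textbf{converse}, suppose $r$ is an achievable rate, so there are PIO channels $\Lambda_n$ with $\big\|\Lambda_n(\Psi_{2^{\floor{rn}}}) - \rho^{\otimes n}\big\|_1 \le \e_n \to 0$. Since $\Psi_{2^{\floor{rn}}}\in\M$ and $C_f^\M(\Psi_{2^{\floor{rn}}}) = \floor{rn}$, monotonicity under PIO (Proposition~\ref{properties CfM prop}(a)) gives $C_f^\M(\Lambda_n(\Psi_{2^{\floor{rn}}})) \le \floor{rn}$. I would then invoke lower semicontinuity (Proposition~\ref{properties CfM prop}(d)): writing $\sigma_n \coloneqq \Lambda_n(\Psi_{2^{\floor{rn}}})$, we have $\sigma_n\to_{1}\rho^{\otimes n}$ in the sense that the trace distance vanishes, but this must be combined with superadditivity and additivity (Proposition~\ref{properties CfM prop}(b),(c)) to extract a single-copy bound. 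The key is the chain
\begin{align*}
\floor{rn} &\ge C_f^\M(\sigma_n) \ge ?\ \ge n\, C_f^\M(\rho) - (\text{correction}),
\end{align*}
where the middle step is exactly the subtle point: $\sigma_n$ is only close to $\rho^{\otimes n}$, not equal to it, and $C_f^\M$ is not known to be asymptotically continuous, so I cannot simply pay a Fannes-type penalty.

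The way around this, which I expect to be the \textbf{main obstacle and the crux of the argument}, is to exploit lower semicontinuity along a \emph{blocked/doubly-asymptotic} limit rather than asymptotic continuity. Concretely I would fix a large block length $m$, apply the protocol at length $N=nm$, and consider the average $\frac{1}{n}\Tr_{\text{all but one $m$-block}}$ of $\sigma_{nm}$, using permutation symmetrisation of $\Lambda_{nm}$ (free, since we may precompose with a random incoherent permutation of the $nm$ copies) so that $\sigma_{nm}$ may be taken permutation-invariant on the $nm$ output systems; its marginal on any $m$-block, call it $\tau_{n,m}$, then satisfies $C_f^\M(\sigma_{nm}) \ge \frac{nm}{m} C_f^\M(\tau_{n,m}) = n\, C_f^\M(\tau_{n,m})$ by superadditivity and full additivity, hence $r \ge \frac1m C_f^\M(\tau_{n,m})$ up to vanishing terms. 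As $n\to\infty$ the marginals $\tau_{n,m}$ converge (in trace norm, by the de Finetti-type/marginal closeness bound coming from $\e_{nm}\to0$) to $\rho^{\otimes m}$; lower semicontinuity then yields $\liminf_n C_f^\M(\tau_{n,m}) \ge C_f^\M(\rho^{\otimes m}) = m\, C_f^\M(\rho)$, and therefore $r \ge C_f^\M(\rho)$ after dividing by $m$ and letting the error go to zero. If $\rho\notin\co(\M)$ then by Lemma~\ref{PIO preserve co(M) lemma} no finite-rate protocol can even approximately produce $\rho^{\otimes n}$ staying inside $\co(\M)$ — one shows any state outside $\co(\M)$ has a finite trace-norm gap from $\co(\M)^{\otimes n}$ marginals, so $r=\infty$, consistent with $C_f^\M(\rho)=+\infty$. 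Combining the two directions proves $C_{c,\PIO}(\rho) = C_f^\M(\rho)$ for all $\rho$.
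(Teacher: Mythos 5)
Your converse is essentially the paper's argument: monotonicity of $C_f^\M$ under PIO, superadditivity, and lower semicontinuity stand in for the missing asymptotic continuity, and the $\rho\notin\co(\M)$ case is handled via Lemma~\ref{PIO preserve co(M) lemma}. Your blocking/symmetrisation over $m$-blocks is an unnecessary elaboration: superadditivity already gives $\floor{rn}\geq \sum_{i=1}^n C_f^\M\big(\sigma_n^{(i)}\big)$, so one can simply pick a single-copy marginal with $C_f^\M\big(\sigma_n^{(i_n)}\big)\leq \floor{rn}/n$, which converges to $\rho$ in trace norm, and apply lower semicontinuity directly (also, the randomisation you invoke would have to be a postcomposition with a random incoherent permutation of the \emph{output} factors, not a precomposition -- the input is a single maximally coherent state).

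The genuine flaw is in the direct part, in the ``trivial one-shot fact'' that $\log k$ coherence bits make $\ket{\Psi_k}$ because ``an incoherent unitary followed by an incoherent projective measurement turns $\ket{\Psi_{2^m}}$ into any $\ket{\Psi_k}$ with $k\leq 2^m$.'' This is false unless $k$ is a power of two: a complete incoherent projective measurement $\{\Pi_\beta\}_\beta$ applied to $\Psi_{2^m}$ yields, on outcome $\beta$, a uniformly coherent state of size $|I_\beta|$ with probability $|I_\beta|/2^m$, so a deterministic exact conversion to $\Psi_k$ forces every block to have size $k$, i.e.\ $k\mid 2^m$ (and for other $k$ the phrase ``$\log k$ coherence bits'' does not even denote an integer number of cosbits). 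The statement you actually need is the asymptotic one, $C_{c,\PIO}(\Psi_k)\leq \log k$, which is not trivial: the paper proves it in Lemma~\ref{PIO cost pure states lemma} by taking $M\approx N\log k$ cosbits (using density of fractional parts of $N\log k$ to choose $M,N$ with $k^N/2^M\geq 1-\epsilon/2$), projecting onto the first $k^N$ basis vectors, and outputting junk on failure, so that the error vanishes asymptotically. With that lemma substituted for your one-shot claim, your typicality-based dilution protocol is the same as the paper's and the direct part goes through.
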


\begin{proof}[Proof of Theorem~\ref{C c PIO thm}]
The argument is as usual composed of two parts: first we prove the achievability part (direct statement), then we show that what obtained is in fact the optimal rate (converse). While the direct statement follows some pretty standard arguments similar e.g.\ to those in~\cite{Hayden-EC}, establishing the converse is less straightforward. The reason of this is that we lack an indispensable tool to pursue the standard strategy, i.e.\ the asymptotic continuity of the uniform coherence of formation. Fortunately, we will see that superadditivity and lower semicontinuity as established by Proposition~\ref{properties CfM prop} can serve the purpose just as well. As far as we know, this rather peculiar proof strategy has not been used before in quantum information theory.

We start by proving the direct statement: the uniform coherence of formation is an achievable rate for the dilution process. This part of the proof mimics similar standard arguments to show e.g.\ that the entanglement of formation is an upper bound on the entanglement cost. When $\rho\notin \co(\M)$ then $C_f^\M(\rho) = +\infty$ and there is nothing to prove. We will therefore assume that $\rho\in\co(\M)$. Our goal is to show that for all $0<\delta<1$ and for all decompositions $\rho = \sum_{\alpha=1}^{d^2} p_\alpha \Psi_\alpha$ with $\ket{\Psi_\alpha}\in\M_{k_\alpha}$ the number $r = \sum_{\alpha=1}^{d^2} p_\alpha \log k_\alpha + \delta$ is an achievable rate. 

We start by drawing $n$ independent instances $\alpha^n=(\alpha_1,\ldots, \alpha_n)$ of a discrete random variable whose probability distribution is $p$. By the law of large numbers, with probability $P_{n,\delta'}$ approaching $1$ as $n\to\infty$, each symbol $\alpha\in [d^2]$ will appear in the sequence $\alpha^n$ no more than $n(p_\alpha + \delta')$ times, where $0<\delta'<1$ will be fixed later. Since we admit an asymptotically vanishing error, we assume that the sequence $\alpha^n$ satisfies this property, which we signify by calling it \emph{strongly typical}~\cite[Definition~14.7.2]{MARK}. We now construct the sequence of pure states $\Psi_{\alpha_1}, \ldots, \Psi_{\alpha_n}$ allowing for a small error; this can be done by first generating $\sigma(\alpha^n) \coloneqq \bigotimes_{\alpha=1}^{d^2} \Psi_\alpha^{\otimes \floor{n (p_\alpha + \delta')}}$, and then rearranging or discarding the subsystems (which are allowed operations in the PIO setting). Thanks to Lemma~\ref{PIO cost pure states lemma} we know that a state $\omega(\alpha|\alpha^n) \approx_{\epsilon_\alpha(n)} \Psi_\alpha^{\otimes \floor{n (p_\alpha + \delta')}}$ can be obtained via PIO by consuming no more than $n (p_\alpha + \delta') \left( \log k_\alpha +\delta'\right)$ coherence bits. The approximation error $\epsilon_\alpha(n)$ satisfies $\lim_{n\to\infty} \epsilon_\alpha(n)=0$ for all $\alpha\in [d^2]$. Observe that as long as $\alpha^n$ is strongly typical, $\epsilon_\alpha(n)$ depends only on the symbol $\alpha\in [d^2]$ and not on the whole sequence $\alpha^n$. Setting $\epsilon(n)\coloneqq \sum_{\alpha=1}^{d^2} \epsilon_\alpha(n)$, thanks to the fact that $\alpha$ has a finite range we also get that $\lim_{n\to\infty} \epsilon(n)=0$. 

By rearranging and discarding subsystems we can now go from the state $\omega(\alpha^n)\coloneqq \bigotimes_{\alpha=1}^{d^2} \omega(\alpha|\alpha^n) \approx_{\epsilon(n)} \sigma(\alpha^n)$ to some $\tau(\alpha^n) \approx_{\epsilon(n)} \Psi_{\alpha_1}\otimes\ldots\otimes \Psi_{\alpha_n}$. The total cost of this protocol is upper bounded by
\begin{align*}
&n \sum_{\alpha=1}^{d^2} (p_\alpha + \delta') \left( \log k_\alpha +\delta'\right) \\
&\qquad \leq n \left( \sumno_{\alpha=1}^{d^2} p_\alpha \log k_\alpha + \delta' \left( d^2 (\log d+1) + 1\right) \right) \\
&\qquad \leq n \left( \sumno_{\alpha=1}^{d^2} p_\alpha \log k_\alpha + \delta \right) \\
&\qquad = n r\, ,
\end{align*}
where in the last line we picked $\delta'\coloneqq \delta/(d^2(\log d+1)+1)$ sufficiently small as a function of $\delta$. Forgetting the sequence $\alpha^n$, which is however known to be strongly typical, we obtain a state
\begin{align*}
\omega_{n,\delta'} &\coloneqq\ \sum_{\alpha^n\in \pazocal{T}_{n,\delta'}} p^{n}(\alpha^n |\pazocal{T}_{n,\delta'}) \, \omega(\alpha^n) \\
&\,\approx_{\epsilon(n)} \sum_{\alpha^n\in \pazocal{T}_{n,\delta'}} p^{n}(\alpha^n |\pazocal{T}_{n,\delta'})\, \Psi_{\alpha_1} \otimes \ldots \otimes \Psi_{\alpha_n} \\
&\, \eqqcolon \tau_{n,\delta'}\, ,
\end{align*}
where we denoted with $p^{n}(\alpha^n |\pazocal{T}_{n,\delta'})$ the conditional probability distribution of $\alpha^n$ in the strongly typical set $\pazocal{T}_{n,\delta'}$. Since this set contains asymptotically almost all the probability, i.e.\ $\lim_{n\to\infty} p^{n}\left( \pazocal{T}_{n,\delta'}\right) = 1$ for all $\delta'>0$~\cite[Property~14.7.2]{MARK}, it is not difficult to verify that
\bbb
\tau_{n,\delta'} \approx_{\epsilon'(n)} \sum_{\alpha^n} p^{n}(\alpha^n) \Psi_{\alpha_1} \otimes \ldots \otimes \Psi_{\alpha_n} = \rho^{\otimes n}
\eee
where $\lim_{n\to\infty} \epsilon'(n) =0$. This concludes the proof of the direct statement.

As we discussed above, the converse makes heavy use of the the properties of the uniform coherence of formation we established in Proposition~\ref{properties CfM prop}. We have to show that $C_{c,\PIO}(\rho) \geq C_f^\M(\rho)$ for all states $\rho$. First of all, if $\rho\notin \co(\M)$ and thus $C_f^\M (\rho)=+\infty$ then $\rho$ lies at a nonzero distance from the compact set $\co(\M)$. Since by applying PIO to a maximally coherent state one cannot go outside of $\co(\M)$ by Lemma~\ref{PIO preserve co(M) lemma}, even formation of a single copy of $\rho$ with vanishing error is impossible with PIO in this case. Hence, $C_{c,\PIO}(\rho)=+\infty$, confirming the inequality.

From now on we shall therefore assume that $\rho\in \co(\M)$ and thus $C_f^\M(\rho)\leq \log d$. Let $(\Lambda_n)_{n\in\N}$ be a sequence of PIO protocols such that the output states $\sigma_n = \sigma_n^{A_1\ldots A_n} \coloneqq \Lambda_n \left(\Psi_{2^{\floor{rn}}}\right)$ satisfy
\bbb
\lim_n \left\| \sigma_n^{A_1\ldots A_n} - \bigotimes\nolimits_{i=1}^n \rho^{A_i}\right\|_1=0\, ,
\eee
where we denoted with $A_1,\ldots, A_n$ the output systems. This amounts to saying that $r$ is an achievable rate for the formation of $\rho$ under PIO. Calling $\sigma_n^{(i)}$ the reduced state of $\sigma_n$ on the subsystem $A_i$, we now write
\begin{align*}
\floor{rn} &\texteq{1} C_f^\M \left( \Psi_{2^{\floor{rn}}} \right) \\
&\textgeq{2} C_f^\M\left( \sigma_n \right) \\
&\textgeq{3} \sum_{i=1}^n C_f^\M \Big( \sigma_n^{(i)} \Big) ,
\end{align*}
where step 1 follows from because $C_f^\M(\Psi_k) =\log k$ for all uniformly coherent states $\Psi_k$ of size $k$, step 2 comes from the monotonicity of $C_f^\M$ under PIO (Proposition~\ref{properties CfM prop}(a)), and finally step 3 derives from its superadditivity (Proposition~\ref{properties CfM prop}(b)).

The above inequality tells us that for all $n$ one can pick an index $1\leq i_n\leq n$ such that $\omega_n\coloneqq \sigma_n^{(i_n)}$ satisfies
\bbb
C_f^\M (\omega_n) \leq \frac{\floor{rn}}{n}\, .
\eee
Note that that by monotonicity of the trace norm under partial trace one has that
\begin{align*}
&\lim_{n\to\infty} \left\|\omega_n - \rho \right\|_1 \\ &\quad = \lim_{n\to\infty} \left\| \Tr_{A_1\ldots A_{i-1}A_{i+1}\ldots A_n} \left[ \sigma_n^{A_1\ldots A_n}\right] - \rho\right\|_1 \\
&\quad =  \lim_{n\to \infty} \left\| \Tr_{A_1\ldots A_{i-1}A_{i+1}\ldots A_n} \left[ \sigma_n^{A_1\ldots A_n} - \bigotimes\nolimits_{i=1}^n \rho^{A_i} \right] \right\|_1 \\
&\quad \leq \lim_{n\to\infty} \left\| \sigma_n^{A_1\ldots A_n} - \bigotimes\nolimits_{i=1}^n \rho^{A_i} \right\|_1 \\
&\quad = 0\, .
\end{align*}
Employing the lower semicontinuity of $C_f^\M$ (Proposition~\ref{properties CfM prop}(d)) yields
\bbb
C_f^\M(\rho) \leq \liminf_{n\to\infty} C_f^\M(\omega_n) \leq \liminf_{n\to\infty} \frac{\floor{rn}}{n} = r\, ,
\eee
which shows that an achievable rate $r$ cannot be larger than the uniform coherence of formation, completing the proof.
\end{proof}

\begin{rem}
The above proof strategy actually shows that in any resource theory \emph{all normalised, lower semicontinuous, superadditive monotones lower bound the dilution cost} on all states.
\end{rem}

\begin{rem}
As we have seen, our proof actually tells us more about PIO coherence dilution than what was stated in Theorem~\ref{C c PIO thm}. Namely, it follows from Lemma~\ref{PIO preserve co(M) lemma} that when $\rho \notin \co(\M)$ (and thus $C_f^\M(\rho) = \infty$) it is not possible to generate even a single copy of $\rho$ from an unlimited supply of uniformly coherent states with vanishing error.
\end{rem}

\begin{rem}
Theorem~\ref{C c PIO thm} combined with Proposition~\ref{properties CfM prop}(c) shows in particular that the coherence cost under PIO is fully additive. That is, an asymptotically optimal protocol to construct a composite state $\rho\otimes \sigma$ via PIO coherence dilution consists in creating $\rho$ and $\sigma$ separately. This in fact shows that \emph{all} coherence measures in Table~\ref{table operations/rates} are fully additive! Observe that statements of this kind are not known to hold in general in entanglement theory~\cite{Brandao2007}.
\end{rem}

Among the many consequences of Theorem~\ref{C c PIO thm}, one seems to us particularly surprising. Namely, one can show that there exists \emph{abyssally bound coherence} under PIO, that is, there are states with zero PIO distillable coherence yet \emph{infinite} PIO coherence cost. This particularly degenerate form of bound coherence is the signature of the extreme irreversibility of the resource theory of coherence under PIO.

\begin{cor}
Any qubit state as in Eq.~\eqref{qubit state} that satisfies $\min\{p,1-p\} < |z| < \sqrt{p(1-p)}$ is abyssally bound coherent under PIO, namely
\bb
C_{d,\PIO}(\rho) = 0\quad \text{and} \quad C_{c,\PIO}(\rho)=+\infty\, .
\ee
\end{cor}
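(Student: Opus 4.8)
The plan is to obtain this corollary as a direct specialisation of the paper's two main theorems to a single qubit; essentially all of the work is checking that the two hypotheses $\min\{p,1-p\}<|z|$ and $|z|<\sqrt{p(1-p)}$ place $\rho$ in precisely the right regime for each of the two quantities, so the argument will be almost entirely bookkeeping with no genuinely hard step.

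First I would handle the distillable coherence. By Theorem~\ref{solution thm}, $C_{d,\PIO}(\rho)=Q(\rho)=S(\Delta(\rho))-S(\widebar{\rho})$, so it suffices to show $Q(\rho)=0$, which happens exactly when $\widebar{\rho}=\Delta(\rho)$. For a qubit the only off-diagonal pair is $(1,2)$, and it enters the sum defining $\widebar{\rho}$ in Eq.~\eqref{Q thm} only if $|\rho_{12}|=\sqrt{\rho_{11}\rho_{22}}$, i.e.\ $|z|=\sqrt{p(1-p)}$. Since we assume the strict inequality $|z|<\sqrt{p(1-p)}$ this fails, hence $\widebar{\rho}=\Delta(\rho)$ and $Q(\rho)=S(\Delta(\rho))-S(\Delta(\rho))=0$. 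Equivalently, $|z|<\sqrt{p(1-p)}$ means $\det\rho>0$, so $\rho$ has full rank and admits no rank-deficient $2\times2$ principal submatrix with strictly positive diagonal, whence $Q(\rho)=0$ by the remark following Theorem~\ref{solution thm}.

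Next I would treat the coherence cost. By Theorem~\ref{C c PIO thm}, $C_{c,\PIO}(\rho)=C_f^\M(\rho)$, and the closed formula of Proposition~\ref{CfM qubits prop} (Eq.~\eqref{CfM qubits}) gives $C_f^\M(\rho)=2|z|$ when $|z|\leq\min\{p,1-p\}$ and $C_f^\M(\rho)=+\infty$ otherwise. Our hypothesis $\min\{p,1-p\}<|z|$ puts us squarely in the second branch, so $C_{c,\PIO}(\rho)=+\infty$. Combining the two computations yields the statement.

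There is no real obstacle here, since the corollary is a pure specialisation; the only point worth flagging is that the admissible window $\min\{p,1-p\}<|z|<\sqrt{p(1-p)}$ is non-empty (this amounts to $p\neq 1/2$, since for $p\leq 1-p$ one has $\min\{p,1-p\}^2=p^2<p(1-p)$ precisely when $p<1-p$), so the claimed family of abyssally bound qubit states is not vacuous.
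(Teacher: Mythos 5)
Your proposal is correct and follows essentially the same route as the paper: the cost statement is Proposition~\ref{CfM qubits prop} combined with Theorem~\ref{C c PIO thm}, and the distillability statement is Theorem~\ref{solution thm} together with the observation that $|z|<\sqrt{p(1-p)}$ forces $\widebar{\rho}=\Delta(\rho)$ (equivalently, $\rho$ is a mixed qubit state), so $Q(\rho)=0$. Your added check that the parameter window is non-empty is a harmless extra not needed for the corollary itself.
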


\begin{proof}
Using Proposition~\ref{CfM qubits prop} it is immediate to see that such a state has infinite uniform coherence of formation and thus infinite PIO coherence cost by Theorem~\ref{C c PIO thm}. On the other hand, its quintessential coherence vanishes because it is a non-pure qubit state, making it PIO bound coherent by Theorem~\ref{solution thm}.
\end{proof}

For an explicit numerical example of an abyssally bound coherent state, take e.g.
\bb
\rho_0 \coloneqq \begin{pmatrix} 2/3 & 2/5 \\ 2/5 & 1/3 \end{pmatrix} .
\ee

\section{Discussion and conclusions} \label{sec conclusions}

We presented a general quantitative theory of coherence manipulation under strictly incoherent and physically incoherent operations in the asymptotic regime. We derived a simple analytical formula to compute the SIO/PIO distillable coherence on all finite-dimensional states in terms of the so-called quintessential coherence, thus extending the results of~\cite{bound-coherence}. Among other things, our construction shows that the optimal SIO distillation protocol can in fact be chosen within the much more restricted class of PIO. Since these operations are amenable to experimental implementations, our findings are likely to play a significant role in the near-term practice of quantum coherence manipulation. Our second result deals again with PIO, but in the somewhat complementary scenario of coherence dilution. We established a single-letter formula for the PIO coherence cost of all states: this is given by a convex-roof construction similar to that of the coherence of formation, which we dubbed uniform coherence of formation. A remarkable consequence of our analysis is that there is a set of nonzero volume entirely composed of states with finite PIO coherence cost. This can be interpreted by thinking of PIO as some intrinsically noisy operations; while coherence distillation requires noise subtraction and is thus often impossible, coherence dilution aims to produce (possibly) noisy states and can therefore become feasible. On the other hand, we have also uncovered the curious phenomenon of abyssally bound coherence under PIO, i.e.\ the existence of states with vanishing PIO distillable coherence that however have infinite PIO coherence cost.

In proving the above results we have introduced a number of novel techniques that may be of independent interest. First, to upper bound the SIO/IO distillation rates we constructed an entire family of new SIO monotones. In a \emph{tour de force} of linear algebra and probability theory that involves -- among other things -- Ger\v{s}gorin's circle theorem and a tweaked asymptotic equipartition property, we showed that their many properties make them powerful tools to investigate SIO. To analyse PIO coherence dilution we defined and studied the many properties of the uniform coherence of formation, most notably its superadditivity. To tackle the proof of the converse statement in absence of asymptotic continuity, we devised an alternative strategy that relying mainly on superadditivity and lower semicontinuity may carry over to other resource theories.

In conclusion, our findings complete the theoretical picture of asymptotic coherence manipulation under the classes of operations MIO/DIO/IO/SIO/PIO, solving some of the most pressing open problems. However, in the quest for a fully-fledged theory of quantum coherence manipulation the following outstanding questions seem important to us. (1) How do our results extend to other classes of incoherent operations, such as GIO and FIO considered in~\cite{deVicente2016}? (2) Is it possible to activate SIO/PIO bound coherent states by means of catalysts, as is the case for entanglement theory~\cite{Plenio-catalysis}? (3) What is the smallest physically meaningful set of operations that allows for coherence distillation~\cite{bound-coherence}?
(4) Can SIO simulate any quantum channel, provided that one is given a sufficiently large maximally coherent state to consume as a resource? Recently, it has been observed that the answer to this question is affirmative for the class IO~\cite{Chitambar-Hsieh-coherence, BenDana2017}. Our results instead imply that PIO simulation of certain channels is \emph{impossible}: by Lemma~\ref{PIO preserve co(M) lemma}, a channel that does not preserve the convex hull $\co(\M)$ of uniformly coherent states cannot be simulated via PIO, even if one allows for an arbitrarily large maximally coherent state to be consumed as a resource. In particular, this implies that the only unitaries that can be simulated in this way are trivially incoherent unitaries, in stark contrast with the case of IO~\cite{Chitambar-Hsieh-coherence}.
Finally, and related to this, we want to advertise as an outstanding question of mathematical interest (5) the membership problem for the set $\co (\M)$ of states with a finite PIO coherence cost, beyond the qubit case. We do not yet know, for instance, whether it can be decided efficiently in general by checking some finite set of conditions, as is the case for qubit states (Proposition~\ref{CfM qubits prop}).

\appendices

\section{PIO coherence cost of uniformly coherent states}

\begin{lemma} \label{PIO cost pure states lemma}
The PIO coherence cost of a uniformly coherent state $\Psi_k\in \M_k$ is no larger than
\bb
C_{c,\PIO}(\Psi_k) \leq \log k = C_f^\M(\Psi_k)\, .
\label{PIO cost pure states}
\ee
\end{lemma}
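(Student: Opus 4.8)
The plan is to exhibit, for every rate $r>\log k$, an explicit PIO protocol that converts $\Psi_{2^{\floor{rn}}}$ into $\Psi_k^{\otimes n}$ with trace-norm error tending to $0$ as $n\to\infty$; together with the identity $C_f^\M(\Psi_k)=\log k$ recorded right after Eq.~\eqref{CfM}, this gives the statement. The point is to read $\floor{rn}$ coherence bits as the single maximally coherent state $\Psi_D$ of dimension $D\coloneqq 2^{\floor{rn}}$, and to note that the target $\Psi_k^{\otimes n}$ is \emph{itself} a uniformly coherent state, of size $k^n$. Since $r>\log k$, for all large $n$ we have $D\geq k^n$, and moreover $k^n/D\leq 2\cdot 2^{-n(r-\log k)}\to 0$; this last estimate is what will make the protocol's error vanish.

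Concretely, I would set $m\coloneqq\floor{D/k^n}\geq 1$, so that $D=k^n m+s$ with $0\leq s<k^n$. Apply an incoherent unitary that relabels the computational basis of $\C^D$ as a disjoint union of a ``good'' block of size $k^n m$ and a ``leftover'' block of size $s$, followed by the incoherent projective measurement $\{\Pi_{\mathrm{good}},\Pi_{\mathrm{leftover}}\}$; the outcome ``good'' occurs with probability $k^n m/D\geq 1-k^n/D$. Conditioned on ``good'', the post-measurement state is the maximally coherent state $\Psi_{k^n m}$, which, after a further incoherent unitary identifying that block with $[k]^n\times[m]$ (and imprinting the support and phases of $\Psi_k$ on each of the $n$ factors), equals $\Psi_k^{\otimes n}\otimes\Psi_m$ exactly; discarding the auxiliary factor $\C^m$ leaves $\Psi_k^{\otimes n}$. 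Conditioned on ``leftover'', simply output some fixed state in the appropriate space. Each ingredient -- incoherent unitaries, an incoherent projective measurement with outcome-dependent follow-up unitaries, and discarding of a subsystem -- is a PIO operation (the last by the lifting/compressing convention used throughout), and PIO is closed under composition, so the overall map is PIO.

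The output of this protocol is $\sigma_n=\tfrac{k^n m}{D}\,\Psi_k^{\otimes n}+\tfrac{s}{D}\,\omega_n$ for some state $\omega_n$, whence $\bigl\|\sigma_n-\Psi_k^{\otimes n}\bigr\|_1=\tfrac{s}{D}\bigl\|\omega_n-\Psi_k^{\otimes n}\bigr\|_1\leq 2s/D<2k^n/D\leq 4\cdot 2^{-n(r-\log k)}\to 0$. Therefore every $r>\log k$ is an achievable PIO dilution rate for $\Psi_k$, and consequently $C_{c,\PIO}(\Psi_k)\leq\log k=C_f^\M(\Psi_k)$. I expect the only genuinely substantive point to be the probabilistic bookkeeping that renders the ``leftover'' branch asymptotically negligible -- which is precisely why the good block is chosen as the \emph{largest} multiple of $k^n$ not exceeding $D$, and why the hypothesis $r>\log k$ (rather than $r=\log k$) is needed -- while verifying that each elementary step above is a genuine PIO is routine given the definitions in Section~\ref{sec main results}.
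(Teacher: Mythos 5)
Your protocol is correct, and it reaches the bound by a genuinely different route than the paper. The paper reduces to the case of irrational $\log k$ and then invokes the density of the fractional parts $\{N\log k\}$ to choose integers $M=\ceil{N\log k}$ and $N\geq n$ with $k^N/2^M\geq 1-\epsilon/2$; the ``good'' block is then a pure power $k^N$ nearly filling the $2^M$-dimensional space, the projection succeeds with probability $\geq 1-\epsilon/2$, and the surviving state is exactly $\Psi_k^{\otimes N}$ up to relabelling (extra copies are discarded). You instead take the good block to be the largest \emph{multiple} $k^n m$ of $k^n$ inside $D=2^{\floor{rn}}$, so that the post-measurement state factors as $\Psi_k^{\otimes n}\otimes\Psi_m$ after an incoherent relabelling, and you discard the auxiliary $\Psi_m$ factor (a product-state discard, of the same kind the paper itself performs when it throws away surplus copies). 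This buys you two things: the number-theoretic step disappears entirely, with no case split between integer and irrational $\log k$, and the error is explicitly exponentially small, $O\bigl(2^{-n(r-\log k)}\bigr)$, rather than an arbitrary but fixed $\epsilon$ that must afterwards be sent to zero; the only price is the auxiliary maximally coherent factor $\Psi_m$ that must be traced out, whereas the paper's block produces copies of $\Psi_k$ with nothing left over. Your verification that each step (incoherent relabelling unitaries, the binary incoherent measurement with outcome-dependent incoherent follow-ups, and discarding a tensor factor under the lifting/compressing convention) is PIO, and the bookkeeping $\|\sigma_n-\Psi_k^{\otimes n}\|_1\leq 2s/D<2k^n/D$, are both sound and at the same level of rigour as the paper's own argument.
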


\begin{rem}
It is easy to show that the l.h.s.\ and r.h.s.\ of~\eqref{PIO cost pure states} in fact coincide. This follows e.g.\ from the fact that the IO coherence cost of $\Psi_k$ is exactly $\log k$~\cite{Xiao2015}, other than from our Theorem~\ref{C c PIO thm}.
\end{rem}

\begin{proof}[Proof of Lemma~\ref{PIO cost pure states lemma}]
If $\log k$ is an integer there is nothing to prove, because by relabelling the basis vectors one can transform $n \log k$ copies of $\Psi_2$ into exactly $n$ copies of $\Psi_k$ with no error. Since binary logarithms of integers are either integer or irrational, we can henceforth assume that $\log k$ be irrational. For some fixed $\delta,\epsilon>0$ and sufficiently large $n$, we proceed to show that it is possible to convert $\floor{n (\log k + \delta)}$ independent copies of the coherence bit $\Psi_2$ into $n$ copies of $\Psi_k$ with an error at most $\epsilon$.

Using the fact that the sequence $\left( \{nx\}\right)_{n\in \N}$ of fractional parts of the integer multiples of a fixed irrational number forms a dense subset of $[0,1)$, it is not too difficult to show that eventually in $n$ one can pick integers $M,N$ such that
\bb
n \log k \leq M + \log \left(1-\frac{\epsilon}{2}\right) \leq N \log k \leq M \leq n (\log k + \delta) .
\label{integer inequality}
\ee
Clearly, one has $M\coloneqq \ceil{N \log k}$; moreover, it also holds that $N\geq n$. Now, up to discarding some subsystems we can assume that our initial state is of the form $\Psi_2^{\otimes M}$.  Let us decompose the corresponding Hilbert space $\left(\C^2\right)^{\otimes M}$ as a direct sum of the subspace $H_0$ spanned by the first $k^N\leq 2^M$ vectors of the computational basis and its orthogonal complement $H_1$. Call $\Pi_0$ and $\Pi_1$ the projectors onto those subspaces. Observe that
\bbb
\Tr\left[ \Pi_0 \, \Psi_2^{\otimes M} \right] = \frac{k^N}{2^M} \geq 1-\frac{\epsilon}{2}\, ;
\eee
moreover, $\Pi_0 \ket{\Psi_2}^{\otimes M}$ is equivalent to $\ket{\Psi_k}^N$ up to relabelling of the basis vectors. Since the probability of the PIO measurement $\{\Pi_0, \Pi_1\}$ yielding the outcome $0$ is at least $1-\epsilon/2$, performing said measurement and outputting a junk state in case it does not succeed produces the state $\Psi_k^{\otimes N}$ with error at most $\epsilon$ as measured by the trace norm. Discarding some output states we finally arrive at $\Psi_k^{\otimes n}$, as claimed.
\end{proof}

\section*{Acknowledgements}
I thank Gerardo Adesso, Guillaume Aubrun, Benjamin Morris, and especially Martin Plenio, Bartosz Regula and Andreas Winter for many insightful discussions on the topic of coherence. I am also grateful to Violetta Val\'ery for inspiration. Finally, I acknowledge financial support from the European Research Council (ERC) under the Starting Grant GQCOP (Grant No.~637352).


\bibliography{biblio}

%



\newpage
\begin{IEEEbiographynophoto}{Ludovico Lami} received a M.Sc.\ degree in Physics from the Universit\`a di Pisa, Pisa, Italy, in 2014, a Diploma in Physics at the Scuola Normale Superiore, Pisa, Italy, in 2015, and a Ph.D.\ degree from the Departament de F\'isica of the Universitat Aut\`onoma de Barcelona, Barcelona, Spain, in 2017. He is currently a Research Fellow at the University of Nottingham. His research interests lie in quantum information, also with continuous variable, and in foundational aspects of quantum physics.
\end{IEEEbiographynophoto}




\end{document}